\newtheorem{theorem}{Theorem}
\newtheorem{lemma}{Lemma}
\newtheorem{definition}{Definition}
\newtheorem{remark}{Remark}
\newtheorem{example}{Example}
\newtheorem{cor}{Corollary}
\newcommand{\Adn}{\mathbf{A}_{n,d}}
\newcommand{\Prob}{\mathbb{P}r}
\author{
\IEEEauthorblockN{Javad Maheri, K. K. Krishnan Namboodiri, and Petros Elia}\\
\IEEEauthorblockA{EURECOM Institute 
Sophia Antipolis, France\\
Email: \{maheri, karakkad, elia\}@eurecom.fr}
}
\date{}
\begin{document}

\title{Universal and Asymptotically Optimal Data and Task Allocation in Distributed Computing\thanks{This work was supported by the Huawei France-funded Chair towards Future Wireless Networks, by the French government under the France 2030 ANR program “PEPR Networks of the Future” (ref. ANR-22-PEFT-0010), and by European Research Council ERC-StG Project SENSIBILITÉ under Grant 101077361.}}

\maketitle

\begin{abstract}
    We study the joint minimization of communication and computation costs in distributed computing, where a master node coordinates $N$ workers to evaluate a function over a library of $n$ files. Assuming that the function is decomposed into an arbitrary subfunction set $\mathbf{X}$, with each subfunction depending on $d$ input files, renders our distributed computing problem into a \(d\)-uniform hypergraph edge partitioning problem wherein the edge set (subfunction set), defined by $d$-wise dependencies between vertices (files) must be partitioned across $N$ disjoint groups (workers). The aim is to design a file and subfunction allocation, corresponding to a partition \textcolor{black}{of \(\mathbf{X}\)}, that minimizes the communication cost $\pi_{\mathbf{X}}$, representing the maximum number of distinct files per server, while also minimizing \textcolor{black}{ the computation cost \(\delta_{\mathbf{X}}\) corresponding to a maximal worker subfunction load}. For a broad range of parameters, we propose a deterministic allocation solution, the \emph{Interweaved-Cliques (IC) design}, whose information-theoretic-inspired interweaved clique structure simultaneously achieves order-optimal communication and computation costs, for a large class of decompositions $\mathbf{X}$. This optimality is derived from our achievability and converse bounds, which reveal --- under reasonable assumptions on the density of $\mathbf{X}$ --- that the optimal scaling of the communication cost takes the form $n/N^{1/d}$, revealing that our design achieves the order-optimal \textit{partitioning gain} that scales as $N^{1/d}$, while also achieving an order-optimal computation cost.
    Interestingly, this order optimality is achieved in a deterministic manner, and very importantly, it is achieved blindly from $\mathbf{X}$, therefore enabling multiple desired functions to be computed without reshuffling files.

\end{abstract}

\begin{IEEEkeywords}
Distributed Computing, Coded Map Reduce, Coded Caching, Distributed Machine Learning, Function Computation, Blind Resource Allocation, Interweaved Clique Design, Load Balancing, Communication Optimization, Combinatorial Designs, Computational Cost, Edge Partitioning, Replication Factor
\end{IEEEkeywords}

\section{Introduction}

The efficient allocation of computational and communication resources lies at the center of various settings of distributed computing, especially when it involves complex functions of large volumes of data. Such demanding distributed computing settings typically involve the evaluation, across multiple nodes, of a vast number of different subcomputations or subfunctions of large datasets. This in turn brings to the fore the challenge of efficiently distributing these tasks and data across the computing nodes in a way that minimizes computation as well as communication loads across workers~\cite{jiang2015survey,vavilapalli2013apache}. 

\paragraph*{Task and data assignment for reducing computation and communication costs} This distinct emphasis on dataset and task allocation mechanisms marks a clear departure from the classical paradigm of \textit{distributed source coding for function computation}~\cite{KoM,SlW,HaK,Wag,LPVKP,OrR,Doshi}, where data and task placement are typically fixed. This shift in emphasis towards optimizing the assignment of functions and datasets across servers, has the potential to better leverage the partial separability of computations, but is not without intricate combinatorial and information-theoretic design challenges in both achievable schemes and converse bounds, as we see in various recent publications.

Among these recent works, we have seen~\cite{WSJC,WSJC2} focusing on linearly separable functions, and exploring a distributed computing scenario with \( N \) servers, a single user, and multiple requested functions over multiple datasets. The main aim of these two works is to design communication schemes that reduce the total communication load, doing so under various data placements as well as various degrees of straggler resilience. Building on this theme,~\cite{KhE,KhE2} extend the problem to a multi-user setting without stragglers, where each server serves multiple users. The aim here is again to reduce  communication as well as computation costs. Taking a novel approach that involves covering-code constructions and tessellation-based tilings, these same works propose schemes for task assignment across servers and for communication between servers and users. Similarly,~\cite{NPME} formulate a related joint design problem under strict limits on the number of tasks per server, optimizing task distribution and transmissions that minimize the worst-case communication cost across all user demands.   \nocite{gholami2025hierarchicalgradientcodingoptimal,GhoWanJah+25,QiaGaoBol+24,Mal2024}

A parallel line of research, originating from the work on Coded MapReduce in~\cite{LMYA}, again focuses on the same broader challenge: designing data and task assignments to mitigate communication bottlenecks, now through coded exchanges. Subsequent variants address stragglers~\cite{CCW}, fixed assignments~\cite{WaW}, heterogeneous or asymmetric channels~\cite{BWW,PNR}, and diverse topologies~\cite{BrEl}, often borrowing clique-inspired tools from coded caching and broadcast networks~\cite{WSJTC,YaJ,MaT}. This line of work likewise aims to optimize distributed function computation through structured task placement, highlighting the role of dataset subpacketization in reducing server-to-server communication under rank-one (bottleneck) communication links.  

All the above lines of research often share the goal of designing --- under various settings and assumptions --- task and data assignment methods that reduce the communication and computation loads of distributed computing. This same theme is shared by our approach as well.

\paragraph*{Preliminary description of the considered distributed computing model} 

In this work, we consider a general distributed computing setting, which involves a master node that coordinates with \(N\) worker nodes in order to compute an arbitrary function. This desired function is assumed to take as inputs from a library of \(n\) datasets or files, and is also assumed to be decomposable into several subfunctions, with each subfunction depending on different file combinations. As expected, this setting entails computation and communication costs; workers must each compute multiple assigned subfunctions in order to collectively realize the target computation, and must each be sent sufficiently many files, enough for them to compute their subfunctions. 
The design objective is therefore to minimize the communication 
\textcolor{black}{cost} --- reflected here by the maximum number of files sent to any one worker --- as well as minimize the 
\textcolor{black}{computation cost,} reflected here by the maximum number of subfunctions assigned to any one worker.

\begin{figure}
\begin{center}   
\includegraphics[width=14cm]{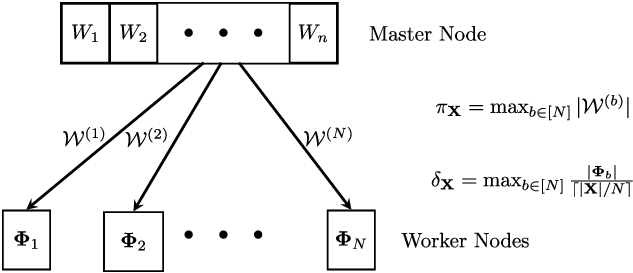}
\caption{Distributed computing model: The desired function \(F\) admits a decomposition \(F = \Psi(\{\zeta_\mathcal{T}(\mathcal{W}_\mathcal{T}):\mathcal{T}\in \mathbf{X},|\mathcal{T}|=d\}) \) for some \(\mathbf{X}\subseteq \mathbf{A}_{n,d}\). The set of files communicated to worker \(b\) is denoted with \(\mathcal{W}^{(b)}\), while \(\pi_{\mathbf{X}}\) denotes the worst-case communication cost across the master--worker links under the assumption of parallel, equal-capacity links. Similarly, the indices (\(d\)-tuples) of the subfunctions assigned for computation to worker \(b\) is \(\mathbf{\Phi}_{b}\), while \(\delta_{\mathbf{X}}\) denotes the computational delay normalized by the minimum possible computational delay, assuming homogeneous workers.   }
\label{SystemModel}
\end{center}
\end{figure}
We further assume that the master node has access 
to the library of \(n\) files \(\mathcal{W} = \{W_1, \dots, W_n\}\), and that each file \(W_j \in \mathbb{F}^B\) carries \(B\) symbols from some field \(\mathbb{F}\).  We also consider the case were the desired function, \(F: (\mathbb{F}^{B})^n\rightarrow \mathbb{F}^L\) is decomposed into subfunctions that each depends on \(d\) files. Consequently, any decomposition of $F$ can be represented by a subset \(\mathbf{X}\) of the set \(\mathbf{A}_{n,d}\) of all $d$-tuples from $1$ to $n$, so that our desired function \(F\) can be represented as 
\begin{equation} \label{generalFunction1}
\Psi(\{\zeta_\mathcal{T}(\mathcal{W}_\mathcal{T}):\mathcal{T}\in \mathbf{X},|\mathcal{T}|=d\}): (\mathbb{F}^T)^{|\mathbf{X}|} \rightarrow \mathbb{F}^L
\end{equation}
for some function $\Psi$, where each subfunction $\zeta_\mathcal{T}:(\mathbb{F}^B)^{d} \rightarrow \mathbb{F}^T$ operates on files $\mathcal{W}_\mathcal{T} = \{W_j, j\in \mathcal{T} \} \subset \mathcal{W}$. We often refer to the parameter \(d\) as the \emph{subfunction file degree} or simply \emph{degree}. 

As in most distributed computing settings, we have the following two phases.

\paragraph*{File and task allocation phase -- assigning subfunction set $\mathbf{\Phi}_b$ and file set $\mathcal{W}^{(b)}$ to each worker $b$} The master allocates to each worker $b$ a set $\mathbf{\Phi}_{b} \subset \mathbf{X}$ of subfunctions $\zeta_\mathcal{T}(\mathcal{W}_\mathcal{T}), \mathcal{T} \in \mathbf{\Phi}_{b}$, to compute. To do so, this worker needs to be sent all the file inputs to its assigned subfunctions. For $\mathcal{W}^{(b)} \subseteq \mathcal{W}$ being the set of files sent to worker $b$, the communication load on that link will equal $\pi_b = |\mathcal{W}^{(b)}|$ (in units of files), and thus the \emph{overall communication cost} 
\begin{equation}
\pi_\mathbf{X} = \max_{b\in \{1,\cdots,N\}} |\mathcal{W}^{(b)}|
\end{equation} will reflect the overall communication delay\footnote{At a time when multicasting is, unfortunately, rarely adopted in networks, this unicast-related metric captures well notions such as communication delay.}.
As one would expect, for any worker $b$ to compute subfunction $\zeta_\mathcal{T}(\mathcal{W}_\mathcal{T})$, the file and task allocations must guarantee\footnote{For example, let's say that $d=3$. If the file triplet $W_1,W_2,W_3$ does not appear at any one worker, then the subfunction $\zeta_{\{1,2,3\}}(\mathcal{W}_{\{1,2,3\}})$ cannot be computed.} that $\mathcal{W}_\mathcal{T} \subseteq \mathcal{W}^{(b)}$. Thus if  $\mathcal{I}_{\mathcal{W}^{(b)}}$ denotes the index set of files allocated to worker $b$, then the design must guarantee that 
\begin{equation}
\bigcup_{\mathcal{T}\in \mathbf{\Phi}_{b}} \mathcal{T} \subseteq \mathcal{I}_{\mathcal{W}^{(b)}}
\end{equation}
which again simply says that each worker must be communicated all necessary files ($\mathcal{W}^{(b)}$) for it to be able to compute its assigned subfunctions $\mathbf{\Phi}_{b}$. 

\paragraph*{Computing phase -- each worker $b$ computes subfunctions $\mathbf{\Phi}_{b}$} During this phase, each worker $b$ proceeds to compute all the subfunctions $\zeta_\mathcal{T}(\mathcal{W}_\mathcal{T})$ for all $\mathcal{T} \in \mathbf{\Phi}_{b}$. Assuming identical computing capability across workers, and identical computation cost per subfuction, the computation time becomes proportional to the maximum number of subfunctions assigned to any worker. Thus, for a given decomposition $\mathbf{X}$, minimizing this computation time, is equivalent to minimizing
\begin{equation}
\delta_{\mathbf{X}} = \frac{\max_{b\in [N]} |\mathbf{\Phi}_{b}|}{\lceil|\mathbf{X}|/N\rceil}
\end{equation}
which entails aiming for $\delta_{\mathbf{X}} $ that is as close to unity as possible. In the above, the denominator represents the ideal uniform load corresponding to the minimum possible computation delay. Our goal is to reduce the communication cost $\pi_{\mathbf{X}} $ and the computation cost $\delta_{\mathbf{X}} $.
\begin{example} Consider a function that takes inputs from a library of $n=7$ files, and consider a decomposition $\mathbf{X} = \{12,13,23,45,36,27\}\subset \mathbf{A}_{7,2}$, that tells us that there are $6$ subfunctions involved, where every subfunction takes as input $d=2$ files, with the first pair\footnote{We here omit the commas when describing the pairs. Clearly, when we say $12$ we are referring to pair $(1,2)$, and so on.} corresponding to a subfunction that takes as input files $1$ and $2$, the second pair to a subfunction that takes as input files $1$ and $3$, and so on. Consider $N=2$ worker nodes. 
If we assign subfunctions $\mathbf{\Phi}_{1} = \{12,13\}$ to worker $1$, and subfunctions $\mathbf{\Phi}_{2} = \{23,45,27,36\}$ to worker $2$, then worker $1$ will need to be sent files $\mathcal{W}^{(1)} = \{1,2,3\}$ and worker $2$ will need to be sent files $\mathcal{W}^{(2)} = \{2,3,4,5,6,7\}$.  This solution entails a communication cost of $\pi_{\mathbf{X}} = \max\{|\mathcal{W}^{(1)}|,|\mathcal{W}^{(2)}|\} = 6$ and a computation cost of $\delta_{\mathbf{X}} = \frac{max\{|\mathbf{\Phi}_{1}|,|\mathbf{\Phi}_{2}|\}}{\lceil|\mathbf{X}|/N\rceil} = \frac{4}{3}$. If on the other hand, we assign subfunctions $\mathbf{\Phi}_{1} = \{12,13,45\}$ to worker $1$, and $\mathbf{\Phi}_{2} = \{23,27,36\}$ to worker $2$, we would have an improved $\pi_{\mathbf{X}} = 5$ and an optimal $\delta_{\mathbf{X}}  = 1$. 
\end{example}

\begin{remark}\label{rem:Framework}
Our framework directly captures a broad class of functions that \textcolor{black}{decompose} into \(d\)-way dependencies over subsets of input variables. Such structures arise across diverse application domains. In statistics, the computation of covariance matrices involves pairwise dependencies (\(d=2\)) among variables~\cite{ledoit2004well}, while higher-order cumulant estimation extends this to \(d>2\) dependencies~\cite{comon1994independent}. In machine learning, kernel methods such as kernel PCA and spectral clustering require evaluating all pairwise kernel functions~\cite{scholkopf1998nonlinear}; contrastive and metric-learning objectives, including pairwise loss~\cite{chen2020simple} and triplet loss~\cite{schroff2015facenet}, similarly depend on \(d=2\) and \(d=3\) terms, respectively. Scientific computing tasks such as particle and molecular dynamics simulations exhibit analogous pairwise or higher-order structures~\cite{Dhaliwal2022RandomFeaturesInteratomic,rahimi2007random}. In genomics and bioinformatics, exhaustive SNP--SNP interaction analyses~\cite{li2015overview} and sequence comparison problems~\cite{lobo2008basic} are modeled through multiway dependencies.
Even modern learning architectures display similar \(d\)-wise structures: transformer-based attention mechanisms compute pairwise token dependencies~\cite{vaswani2017attention}, while large-scale similarity search frameworks~\cite{johnson2019billion} operate on tuple-based computations. In many of these cases, the task set \(\mathbf{X}\) can be very large, perhaps even occupying a non-trivial fraction of the full collection \(\mathbf{A}_{n,d}\), emphasizing the need for allocation schemes that remain efficient and balanced even when the computation involves a very large number of subfunctions. 

\end{remark}
\paragraph*{Connection to hypergraph partitioning} It is easy to see that this current general distributed computing problem can be equivalently formulated as a hypergraph edge partitioning problem. For the same \(n,\ d,\ N,\) and \(\mathbf{X}\), we consider a \(d\)-uniform hypergraph \(\mathcal{H} = ([n], \mathbf{X})\), with vertex set $[n]\triangleq \{1,2,\cdots,n\}$ (the files), and hyperedge set $\mathbf{X}$ (the subfunctions), where the goal is to partition \(\mathbf{X}\) into \(N\) groups \(\mathbf{\Phi}_1, \mathbf{\Phi}_2, \dots, \mathbf{\Phi}_N\), so as to minimize  
\[
\pi_{\mathbf{X}}  = \max_{b \in [N]} |\alpha(\mathbf{\Phi}_b)|
\]
subject to the constraint  
\[
\frac{\max_{b \in [N]} |\mathbf{\Phi}_b|}{\lceil|\mathbf{X}|/N\rceil} \leq \delta_{\mathbf{X}} 
\]
where \(\delta_{\mathbf{X}}  \geq 1\) is a given real-valued constraint, and where \(\alpha(\mathbf{\Phi}_b)\) denotes the set of vertices from $[n]$ that are incident to the hyperedges in group \(\mathbf{\Phi}_b\). Hence, our objective of minimizing our communication 
\textcolor{black}{cost} \(\pi_{\mathbf{X}}\), becomes that of minimizing the maximum number of vertices incident to any group, while also minimizing $\delta_{\mathbf{X}} $ which entails having groups of similar sizes. 

The vast literature on hypergraph edge partitioning focuses on similar problems, and  
a closely related objective that has been extensively studied in the literature, is the minimization of the \emph{Average Replication Factor (\(\mathrm{ARF}\))} of vertices \cite{Aykanat,HEPart,HQPart,SATPart,PartisHard,MultiLevelVLSI,Multihyp,HEComm}, where the \(\mathrm{ARF}\) is defined as
\begin{equation}
\label{eq:ARF}
  \mathrm{ARF} = \frac{1}{n} \sum_{v \in \mathcal{V}} \big|\{\, b \in [N] : v \in \alpha(\mathbf{\Phi}_b) \,\}\big|  
\end{equation}
to represent the \emph{average} number of groups in which a vertex appears\footnote{It is worth noting that by a simple counting argument, the \(\mathrm{ARF}\) can be equivalently expressed as 
\(
\mathrm{ARF} = \frac{1}{n} \sum_{b=1}^{N} |\alpha(\mathbf{\Phi}_b)|.
\)
Since our formulation guarantees that \(|\alpha(\mathbf{\Phi}_b)| \leq \pi_{\mathbf{X}}\) for all \(b \in [N]\), it immediately follows that
\(
\mathrm{ARF} \leq \pi_{\mathbf{X}} \frac{N}{n}.
\)
Hence, minimizing \(\pi_{\mathbf{X}}\) also minimizes an upper bound on the \(\mathrm{ARF}\), establishing a direct relationship between the two objectives.}.  

Whether focusing on the \(\mathrm{ARF}\) or not, most works consider algorithmic, search-based, designs, while much less is known in the form of theoretical insights or performance guarantees, let alone on the fundamental limits of the partitioning problem itself. Let us recall some of the state of art, starting with the crucial first point that the hypergraph edge partitioning problem, which seeks to minimize the average replication factor, is known to be NP-hard even for graphs (\(d = 2\))~\cite{Zhang2017GraphEdgePartitioning}. Additional works, motivated by the problem's computational complexity and broad applicability, can be found here~\cite{TrillionEdges,NSGA,Heterogeneous,Streaming,Dynamic}, again focusing on algorithmic developments, proposing heuristic or search-based partitioning strategies tailored to large-scale graphs.

In terms of theoretical insights, for graphs (i.e., for $d=2$), certain results offer analytical bounds on achievable \(\mathrm{ARF}\), where for example, the works in~\cite{Dynamic,TrillionEdges} reveal that their proposed algorithm guarantees an \(\mathrm{ARF}\) no greater than \((n + |\mathbf{X}| + N)/n\). Additionally, Li et al.~\cite{Zhang2017GraphEdgePartitioning} showed that, for any graph, it is possible to achieve an \(\mathrm{ARF}\) of \(O(\sqrt{N})\), with a guaranteed upper bound \(\mathrm{ARF} \le 2\sqrt{N} + \frac{N}{n}\). This \(O(\sqrt{N})\) scaling is tight for complete graphs, as shown recently in~\cite{BIS}, which provides explicit constructions attaining \(\mathrm{ARF} = O(\sqrt{N})\).  Furthermore, the work in~\cite{ProjectivePlane} employs finite projective geometry to partition the edges of graphs, achieving an \(\mathrm{ARF}\) between \( 1.5\sqrt{N} \) and \( 2\sqrt{N} \). This achievability result holds for graphs with \( N = q^2 + q + 1 \) vertices, where \( q \) is a prime power. This constraint arises from the limited existence of finite projective planes.

The problem of hypergraph partitioning has been in fact connected to various distributed computing settings \cite{maheri2025constructing}. In the important work in \cite{Aykanat}, the authors investigate the role of graph partitioning in parallel sparse matrix–vector multiplication. In this context, a hypergraph is constructed from the given matrix by representing each row as a vertex and each column as a hyperedge connecting all vertices corresponding to rows with nonzero entries in that column. The communication and computation cost minimization problem is then formulated as an optimal hypergraph partitioning task, typically approached through vertex partitioning of the constructed hypergraph.

Whether in the context of distributed computing or not, the problem of resource and task allocation via hypergraph partitioning represents a vast and mature field of scientific inquiry, substantiated by hundreds of research contributions spanning over four decades~\cite{Multihyp}. This extensive body of literature is motivated by a dual imperative. First, the problem possesses a profound mathematical depth, connecting NP-hard optimization challenges to complex combinatorial structures, Ramsey theory, and finite geometry \cite{Taskallocsurvey}. Second, and perhaps more significantly, the field is driven by an abundance of critical applications that demand such solutions. In addition to distributed computing, and the other here aforementioned applications, this hypergraphic partitioning approach finds direct applicability in 
emerging frontiers such as the parallel training of Large Language Models \cite{vaswani2017attention}.

\subsection{Brief Summary of Main Results} Returning to our general distributed computing setting, we recall that we consider a desired function that accepts an arbitrary decomposition defined by a set $\mathbf{X} \subseteq \mathbf{A}_{n,d}$ that lists all the subfunctions that must be computed by listing the corresponding $d$-tuple of inputs in each subfunction. We also recall that our objective is to distribute these subfunctions across the $N$ workers, thus having to partition $\mathbf{X}$ into $N$ groups, in a way that minimizes the communication cost \(\pi_{\mathbf{X}}\), in the presence of a reduced balance factor \(\delta_{\mathbf{X}}\) that reflects computational delay. We address the joint minimization of \(\pi_{\mathbf{X}}\) and \(\delta_{\mathbf{X}}\) via a constructive design that is based on the principle of interweaved cliques that will govern clique intersections in a carefully constructed subgraph of $\mathbf{A}_{n,d}$. This principle, common in information theoretic disciplines like coded caching and coded distributed computing, results in a novel structured approach to the hypergraph edge partitioning problem, which in turn yields an explicit design for file and subfunction allocation across the worker computing nodes.  
The design, which we term as the \emph{interweaved-cliques (IC) design}, is applicable for \textcolor{black}{all system parameters \((n, d, N)\)}, while the performance guarantees appear for a very broad range of parameters \((n, d, N)\), as elaborated in Theorem~\ref{thm:main}. 

In terms of this performance, the proposed design provides deterministic guarantees for any given task set \(\mathbf{X}\subseteq \mathbf{A}_{n,d}\). This design, together with a converse, now reveal that for any \(\mathbf{X}\) of size $|\mathbf{X}| = \varphi |\mathbf{A}_{n,d}|,$ any $N$ up to approximately $\sqrt{\binom{n}{d}}$ and any $\varphi $ as low as approximately $\frac{\ln n}{n^{d/2}}$ ($\varphi \in [0,1]$ plays the role of the normalized size of $\mathbf{X}$), the optimal communication cost $\pi_{\mathbf{X}}^\star $ satisfies 
\[\pi_{\mathbf{X}}^\star \in [ \frac{\varphi^{\frac{1}{d}} n}{N^{\frac{1}{d}}}, \frac{4e\,n}{N^{\frac{1}{d}}}].\] 
Furthermore, when \(\mathbf{X}\) is viewed as obtained from a \emph{random} thinning of \(\mathbf{A}_{n,d}\) --- that is, each \(d\)-tuple of $\mathbf{A}_{n,d}$ is included independently in \(\mathbf{X}\) with fixed probability \(\varphi\) --- the same construction additionally achieves a 
\textcolor{black}{computation cost} \(\delta_{\mathbf{X}} \le 5\) with probability \textcolor{black}{ at least \(1-\frac{1}{n}\)}. In the end, we now know that for any subfunction set \(\mathbf{X}\) of fixed (non-vanishing) normalized size $\varphi>0$, the optimal --- over all file and task assignment methods --- \textcolor{black}{optimal} communication cost scales as $\pi_{\mathbf{X}}^\star \asymp n/N^{\frac{1}{d}}$, and that with high probability, this is achieved with a computation delay of $\delta_{\mathbf{X}}\le5$. Thus, in the case of fixed $\varphi$ which corresponds to hypergraphs that are not unboundedly sparse, the resulting scaling laws suggest that the optimal gain scales as $N^{\frac{1}{d}}$, and this order-optimal gain is achieved by the IC design. Detailed concentration bounds, sampling thresholds, and proofs are provided in Section~\ref{sec: Random task}. 

Finally, the last $4$ corollaries, are in the context of the fact that a given desired function can admit multiple representations \(\mathbf{X}\), each leading to a distinct set of subfunctions and dependencies among the input files. While identifying favorable representations $\mathbf{X}$ is not the focus of this work, these Corollaries~\ref{cor: second},~\ref{cor: third},~\ref{cor: forth} offer some insight in that direction. 
\subsection*{Algorithmic Traits of Interest}
Our design enjoys two very attractive algorithmic properties. First, in terms of complexity, the IC design is deterministic, rather than search-based, and thus carries a minimal complexity load.  Secondly, in our design, the resulting allocation of files to workers (or equivalently of vertices to groups) is blind to \(\mathbf{X}\). This means that this file allocation is designed once and reused for any desired function with any decomposition \(\mathbf{X}\) (the library, as well as \(d\) remain fixed), naturally with the task assignment (hyperedge grouping) being obtained by restricting the precomputed groups to \(\mathbf{X}\).  
This independence offers a significant operational benefit in realistic settings where multiple functions, taking as input files from a common library of datasets, are computed simultaneously or sequentially. To illustrate this more clearly, consider a sequence of desired computed functions indexed by \(i\), where the \(i\)-th function is decomposed based on a set \(\mathbf{X}_i \subseteq \mathbf{A}_{n,d}\). Under the proposed IC design, and as long as $d$ remains fixed, no reshuffling of files across the \(N\) workers is required between successive rounds. For each instance \(\mathbf{X}_i\), the subset of subfunctions processed by each worker may vary, but the file placement remains fixed and valid for all tasks.  For any \(\mathbf{X}_i\), the file allocation will remain fixed, and it will correspond to an 
order-optimal (for broad \(n,N\)) communication cost  (as long as the different normalized sizes $\varphi_i>0$ are non-vanishing); order-optimal over all file-and-task allocation algorithms, not only the `blind' ones. With high probability, the same file allocation will also yield an order-optimal 
computation cost of $\delta_{\mathbf{X}_i}\le 5$.

The above traits are a product of the interweaved cliques approach. These IC designs are not new to the information-theory community; they appear prominently in problems such as coded caching~\cite{MaN}, where clique-based user side-information structures are effectively exploited to satisfy multiple user demands through coded transmissions over rank-one channels. In contrast, in this work, the notion of a clique arises in a fundamentally different context, serving as the combinatorial seed for constructing the worker–file–task allocation. Specifically, the larger universe \(\mathbf{X}\subseteq \mathbf{A}_{n,d}\) is generated from a smaller seed \(\mathbf{A}_{f,d}\), where \(f\) can be much smaller than \(n\). Similar formulations involving two complete-set structures have also appeared in other coding-theoretic and information-theoretic contexts~\cite{MKR,BrEl2,BrEl,8437333,NaR,PNR2} (see also \cite{EngEli2017,ZhaoBazcoElia2023}).

\paragraph*{A basic example}
Before providing a rigorous combinatorial representation of our distributed computing problem, let us offer a simple example, considering what might be thought of as the worst-case scenario, where we need to partition $\mathbf{X} = \mathbf{A}_{n,d}$. 

\begin{example}
\label{ex:6,2}
In an $N=3$ worker node setting where subfunctions take as input $d=2$ out of a total of $n=6$ files, our aim is to partition \[
\mathbf{X}=\mathbf{A}_{6,2}=\{\{1,2\}, \{1,3\}, \{1,4\}, \{1,5\}, \{1,6\}, \{2,3\},\{2,4\},\] \[\{2,5\},\{2,6\}, \{3,4\}, \{3,5\},\{3,6\},\{4,5\}, \{4,6\}, \{5,6\}\}\]
into $3$ sets \(\mathbf{\Phi}_1\), \(\mathbf{\Phi}_2\), and \(\mathbf{\Phi}_3\), describing the subfunctions associated to each worker node. Our goal is to have the number of digits appearing in each group be reduced. Let us first consider the lexicographic partition 
\[\mathbf{\Phi}_1=\{ \{1,2\},\{1,3\}, \{1,4\}, \{1,5\}, \{1,6\}\},\] \[\mathbf{\Phi}_2=\{\{2,3\},\{2,4\}, \{2,5\}, \{2,6\},\{3,4\} \}, \] \[\mathbf{\Phi}_3=\{ \{3,5\},\{3,6\},\{4,5\},\{4,6\},\{5,6\}\}\] where clearly the first group involves all $n=6$ digits from $1$ through $n$, and thus the first worker node must be assigned all $6$ files $W_1$ through $W_6$, thus bringing about a maximal communication cost of $\pi_{\mathbf{X}} = n = 6$. On the other hand, the partition 
\[\mathbf{\Phi}_1=\{ \{1,2\},\{1,3\}, \{1,4\}, \{2,3\}, \{2,4\}\},\] \[\mathbf{\Phi}_2=\{\{1,5\},\{1,6\}, \{2,5\}, \{2,6\},\{5,6\} \}, \] \[\mathbf{\Phi}_3=\{\{3,4\}, \{3,5\},\{3,6\},\{4,5\},\{4,6\}\}\] implies a reduced communication cost $\pi_{\mathbf{X}}= 4$ as each group only involves $4$ digits; digits $\{1,2,3,4\}$ in $\mathbf{\Phi}_1$, then digits $\{1,2,5,6\}$ in $\mathbf{\Phi}_2$, and digits 
$\{3,4,5,6\}$ in $\mathbf{\Phi}_3$. The second solution is preferable, and as we will see, optimal. 
\end{example}

\begin{remark}
It is tempting to perform partition simply by placing each $d$-tuple of $\mathbf{X}$ uniformly at random into one of the $N$ columns/groups. It is the case, though, that such a randomized approach would perform very poorly for our objective of reducing \(\pi_{\mathbf{X}}\). In essence, given that each vertex belongs to as many as \({n-1 \choose d-1}\) $d$-tuples, the probability that a vertex is \emph{not} covered by the edges landing in a specific column is vanishingly small, which entails $\pi_{\mathbf{X}}\asymp n$. 
This highlights the root difficulty: edges overlap heavily, and this overlap must be controlled while keeping column sizes balanced. Thus, minimizing $\pi_{\mathbf{X}}$ is in principle not something a naive random split will do well. 
 \end{remark}

\subsection{Paper Organization and Notation}
The remainder of this paper is organized as follows. Section~\ref{sec:prob stat} presents the problem formulation and objectives, while Section~\ref{sec: Main results} provides the main result in Theorem~\ref{thm:main} and its Corollaries 
\ref{cor: first}--\ref{cor: forth} along with proofs. Section~\ref{sec: CliqueScheme} introduces the IC design, together with the necessary lemmas and propositions supporting the main theorem. Finally, the conclusion is given in Section~\ref{sec: conclu}, while the appendices include all additional proofs. 

\paragraph*{Notation}
We represent \(d\)-tuples using bold lowercase letters, such as \(\mathbf{a} = \{a_1, a_2, \dots, a_d\}\). Sets of \( d \)-tuples are denoted by bold uppercase letters, such as \(\mathbf{X}\). For any set \(\mathbf{X}\) of \(d\)-tuples, we let \(\mathbf{X}^{\mathrm{lex}} = (\mathbf{a}_1, \ldots, \mathbf{a}_{|\mathbf{X}|}) \)
denote the lexicographically ordered list of elements of \( \mathbf{X} \). As mentioned before, we use \([n]\) to denote the set \(\{1,2, \ldots, n\}.\) We also 
use \(\mathbf{A}_{n,d} = \binom{[n]}{d}\) to denote the collection of all
possible subsets of \([n]\) of size \(d\). For a random variable \(R\),
\(\mathbb{E}(R)\) denotes the expectation \textcolor{black}{and \(\Prob (R)\) denotes the probability of an instance of \(R\)}. Finally, we use \(f(n)\asymp g(n)\) when there is a constant \(c_1\), \(c_2\), and \(n_0\) such that \(\forall n \ge n_0\), \(c_1\cdot g(n)\le f(n)\le c_2\cdot g(n)\).
\section{Problem Statement and Setup}
\label{sec:prob stat}
The considered system comprises \(N\) worker nodes (servers) and a master node that coordinates the computation of a \emph{desired function of \(n\) input files}. This function can be decomposed into subfunctions as in~\eqref{generalFunction1}, 
where each subfunction depends on \(d\) files with indices from \([n]\), and where the set \(\mathbf{X} \subseteq \mathbf{A}_{n,d}\) represents the set of subfunctions that must be computed.\footnote{%
We will often speak of `files in $[n]$' (rather than files whose indices are in $[n]$), and we will often also refer to \(\mathbf{X}\) as the set of subfunctions, while remembering that it is the collection of index sets of files forming the input of each of these subfunctions.%
}

As suggested before, the master assigns subfunctions \(\mathbf{\Phi}_b \subseteq \mathbf{X}\) to each worker \(b \in [N]\), where these groups jointly form set \begin{equation}
\label{eq: partition P}
    \mathbb{P} = \{\mathbf{\Phi}_1, \mathbf{\Phi}_2, \dots, \mathbf{\Phi}_N\}
\end{equation}
that partitions $\mathbf{X}$, thus guaranteeing
\begin{equation}
\label{eq: partition condition of X}
\bigcup_{b \in [N]} \mathbf{\Phi}_b = \mathbf{X}, \quad 
\mathbf{\Phi}_b \cap \mathbf{\Phi}_{b'} = \emptyset, \ \forall b \neq b'
\end{equation}
ensuring that no subfunction is redundantly assigned to multiple workers.  

To compute its subfunctions \(\mathbf{\Phi}_b\), each worker \(b\) receives from the master node all the files whose indices\footnote{For example, if \(\mathbf{\Phi}_b = \{\{1,2,3\}, \{1,2,4\}\} \subset \mathbf{A}_{7,3}\), 
then \(\alpha(\mathbf{\Phi}_b) = \{1,2,3,4\} \subset [7]\). For brevity, we will often refer to $\alpha(\mathbf{\Phi}_b)$ as the files sent to worker $b$.} are in set 
\begin{equation}
    \label{eq: alpha phi_b}
    \alpha(\mathbf{\Phi}_b) = \{i \in [n] \mid \exists \mathbf{a} \in \mathbf{\Phi}_b \ \text{with} \ i \in \mathbf{a}\}.
\end{equation}
Naturally, \(|\alpha(\mathbf{\Phi}_b)|\) represents the number of distinct files that must be communicated  from the master node to worker \(b\), and thus captures the communication load on that link.

Given \(n,\ d,\ N\), and the subfunction set \(\mathbf{X}\), 
we let
\(\mathfrak{S}_{\mathbf{X}}(n,d,N)\) denote the class of all valid allocation schemes that partition 
\(\mathbf{X}\) into \(N\) disjoint groups \(\mathbf{\Phi}_1, \dots, \mathbf{\Phi}_N\) as in \eqref{eq: partition P} and \eqref{eq: partition condition of X}. 
Any scheme \(\boldsymbol{\mathcal{S}} \in \mathfrak{S}_{\mathbf{X}}(n,d,N)\), 
entails a \emph{communication cost}
\begin{equation}
    \label{eq: com cost }
\pi^{\boldsymbol{\mathcal{S}}}_{\mathbf{X}} = \max_{b \in [N]} |\alpha(\mathbf{\Phi}_b)|
\end{equation}
and a \emph{computation cost}
\begin{equation}
    \label{eq: compt cost}
    \delta^{\boldsymbol{\mathcal{S}}}_{\mathbf{X}} = \frac{\max_{b \in [N]}|\mathbf{\Phi}_b|}{\lceil |\mathbf{X}|/N \rceil}
\end{equation}
where these are bounded as 
\[\pi^{\boldsymbol{\mathcal{S}}}_{\mathbf{X}} \in [d,n], \ \ \delta^{\boldsymbol{\mathcal{S}}}_{\mathbf{X}} \in [1,N]\]
with $\pi^{\boldsymbol{\mathcal{S}}}_{\mathbf{X}} = n$ corresponding to a scheme where one server is allocated all the files, and with $\delta^{\boldsymbol{\mathcal{S}}}_{\mathbf{X}} = N$ corresponding to a solution where one server is assigned all subfunctions. Any scheme will entail a \emph{partitioning gain}, in reference to $n/\pi^{\boldsymbol{\mathcal{S}}}_{\mathbf{X}}$.

Finally, the corresponding optimal values over all valid schemes, are defined as
\[
\pi^\star_{\mathbf{X}} = \min_{\boldsymbol{\mathcal{S}}\in \mathfrak{S}_{\mathbf{X}}(n,d,N)} 
\pi^{\boldsymbol{\mathcal{S}}}_{\mathbf{X}}, 
\quad 
\delta^\star_{\mathbf{X}} = \min_{\boldsymbol{\mathcal{S}}\in \mathfrak{S}_{\mathbf{X}}(n,d,N)} 
\delta^{\boldsymbol{\mathcal{S}}}_{\mathbf{X}}
\]
and our objective is to construct an allocation scheme \(\boldsymbol{\mathcal{S}}\) for our distributed computing problem, that minimizes both \(\pi^{\boldsymbol{\mathcal{S}}}_{\mathbf{X}}\) and \(\delta^{\boldsymbol{\mathcal{S}}}_{\mathbf{X}}\), 
thereby minimizing the overall communication cost and computation 
\textcolor{black}{cost} in the system.

\section{Main Results}
\label{sec: Main results}

\begin{figure}
  \includegraphics[width=\linewidth]{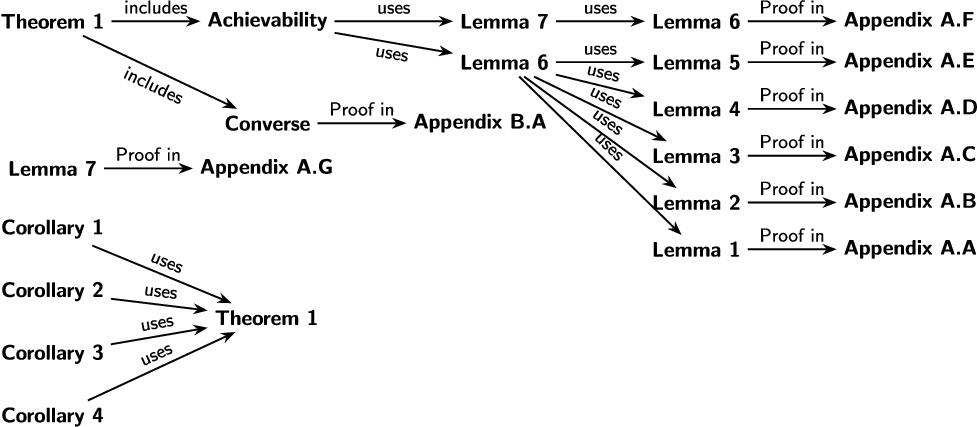}
  \caption{\textcolor{black}{Overview of the logical structure of the proofs of the paper, highlighting the dependencies among lemmas and theorems.}}
  \label{fig:boat1}
\end{figure}

We now present the main results of our distributed computing setting, providing lower and upper bounds on the optimal communication cost $\pi^\star_{\mathbf{X}}$ and computation cost $\delta^\star_{\mathbf{X}}$.
As suggested, any probabilistic statement involves viewing $\mathbf{X}$ as a realization of an independent random thinning of $\mathbf{A}_{n,d}$, in which each hyperedge is retained with probability $\varphi$. We proceed with the main theorem, which will hold for all 
\begin{equation} \label{eq:phi1}
\varphi \geq \varphi_{\mathrm{min}} = \frac{96\,N\,\log(2Nn)}{\binom{n}{d} - 2^{\,d+2}N} \approx \frac{ \ln n}{n^{d/2}}
\end{equation}
(see also later \eqref{eq:varphi_min}). Note also that $\varphi_{\mathrm{min}}$ is also nicely upper bounded by 
$C_d\cdot\frac{ \ln n}{n^{d/2}}$, where \(C_d=192 \cdot (e\cdot d)^{d/2}\) and where \(e\) is Euler's number.

\begin{theorem}
\label{thm:main}
   For the distributed computing setting with \(n\) files, \(N\) workers, subfunction degree \(d\), and subfunction set \(\mathbf{X} \subseteq \mathbf{A}_{n,d}\) with normalized size \(\varphi = |\mathbf{X}| / \binom{n}{d}\), the optimal communication cost $\pi^\star_{\mathbf{X}}$ satisfies
\[
\pi^\star_{\mathbf{X}} \in \left[\,\frac{\varphi^{\frac{1}{d}} n}{N^{\frac{1}{d}}},\; \frac{4e n}{N^{\frac{1}{d}}}\,\right].
\]
 
Furthermore, if \(\varphi \ge  \varphi_{\mathrm{min}}\), \(d \le \frac{n}{32}\) and \(N \le(\frac{9}{10}\sqrt{\frac{n}{d}})^{d}\), the above upper bound is guaranteed with \(\delta_{\mathbf{X}} \le 5\), with probability at least \(1 - \frac{1}{n}\), and these can be achieved by the IC design. 
\end{theorem}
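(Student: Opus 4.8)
My plan is threefold: a one-line averaging converse for the lower bound on $\pi^\star_{\mathbf{X}}$; the blind (i.e.\ $\mathbf{X}$-independent) file allocation of the IC design for the upper bound; and a Chernoff/union-bound layer on top of that design's built-in balance for the bound $\delta_{\mathbf{X}}\le 5$. For the converse I would fix any valid scheme $\boldsymbol{\mathcal{S}}$ with groups $\mathbf{\Phi}_1,\dots,\mathbf{\Phi}_N$ and write $\pi=\max_b|\alpha(\mathbf{\Phi}_b)|$. Since every hyperedge of $\mathbf{\Phi}_b$ is a $d$-subset of the vertex set $\alpha(\mathbf{\Phi}_b)$, we have $|\mathbf{\Phi}_b|\le\binom{|\alpha(\mathbf{\Phi}_b)|}{d}\le\binom{\pi}{d}$; summing over $b$ and using $\sum_b|\mathbf{\Phi}_b|=|\mathbf{X}|=\varphi\binom{n}{d}$ gives $\binom{\pi}{d}\ge\varphi\binom{n}{d}/N$. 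Because $\pi\le n$, the elementary inequality $\binom{\pi}{d}/\binom{n}{d}=\prod_{i=0}^{d-1}\frac{\pi-i}{n-i}\le(\pi/n)^d$ then gives $(\pi/n)^d\ge\varphi/N$, i.e.\ $\pi\ge\varphi^{1/d}n/N^{1/d}$; minimizing over $\boldsymbol{\mathcal{S}}$ yields $\pi^\star_{\mathbf{X}}\ge\varphi^{1/d}n/N^{1/d}$.

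For the communication upper bound I would invoke the IC design of Section~\ref{sec: CliqueScheme}, whose file-to-worker allocation is blind to $\mathbf{X}$: the library $[n]$ is split into $f$ roughly equal blocks, the workers are indexed by the $d$-subsets of $[f]$ (the seed cliques of $\mathbf{A}_{f,d}$), and worker $b$ holds the union of its $d$ blocks. Every $d$-tuple of $\mathbf{A}_{n,d}$ touches at most $d$ blocks, hence sits inside some worker's file set, so for any $\mathbf{X}$ each subfunction can be routed to a worker that already holds its inputs and $\pi_{\mathbf{X}}\le d\lceil n/f\rceil$. Choosing $f$ maximal with $\binom{f}{d}\le N$ gives, via Stirling, $dN^{1/d}/e-1<f\le dN^{1/d}$, so that after absorbing the $-1$ and the ceiling one gets $\pi_{\mathbf{X}}\le d\lceil n/f\rceil\le 4en/N^{1/d}$ (for $N\le\binom{n}{d}$, the only regime in which this bound is non-trivial); the hypotheses $d\le n/32$ and $N\le(\tfrac9{10}\sqrt{n/d})^{d}$ guarantee $f<n$ and blocks bounded below in size, so the rounding is harmless. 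This is exactly what the achievability proposition of Section~\ref{sec: CliqueScheme} will establish.

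For the computation cost I would use that the IC design additionally specifies, by a fixed rule, a host worker $w(\mathcal{T})$ for every $d$-tuple $\mathcal{T}\in\mathbf{A}_{n,d}$ among those covering it; setting $m_b=|\{\mathcal{T}\in\mathbf{A}_{n,d}:w(\mathcal{T})=b\}|$ we have $\sum_b m_b=\binom{n}{d}$, and the subfunctions of any $\mathbf{X}$ are then allocated by restriction, $\mathbf{\Phi}_b=\{\mathcal{T}\in\mathbf{X}:w(\mathcal{T})=b\}$. The symmetry of the interweaved cliques forces these capacities to be balanced, $\max_b m_b\le c\,\binom{n}{d}/N$ for an absolute constant $c$, the only losses being the $2^{d+2}N$-type terms visible in $\varphi_{\mathrm{min}}$ (coming from the $d$-tuples landing in fewer than $d$ blocks); the hypotheses $d\le n/32$ and $N\le(\tfrac9{10}\sqrt{n/d})^{d}$ keep $\binom{n}{d}-2^{d+2}N$ positive and these corrections negligible. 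When $\mathbf{X}$ is a $\varphi$-thinning of $\mathbf{A}_{n,d}$ we get $|\mathbf{\Phi}_b|=\sum_{\mathcal{T}:\,w(\mathcal{T})=b}\mathbf 1\{\mathcal{T}\in\mathbf{X}\}\sim\mathrm{Bin}(m_b,\varphi)$ with $\Ex|\mathbf{\Phi}_b|=\varphi m_b$, and $|\mathbf{X}|\sim\mathrm{Bin}(\binom{n}{d},\varphi)$. The value of $\varphi_{\mathrm{min}}$ is calibrated so that $\varphi m_b$ exceeds a sufficiently large multiple of $\log(2Nn)$; a multiplicative Chernoff bound then makes each event $\{|\mathbf{\Phi}_b|>\tfrac32\varphi m_b\}$ and the event $\{|\mathbf{X}|<\tfrac23\varphi\binom{n}{d}\}$ polynomially small in $Nn$, and a union bound over the $N$ workers drives the total failure probability below $1/n$. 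On the complementary event $\delta_{\mathbf{X}}=\max_b|\mathbf{\Phi}_b|/\lceil|\mathbf{X}|/N\rceil\le\tfrac94\,N\max_b m_b/\binom{n}{d}\le\tfrac94 c$, which is at most $5$ for the constant $c$ the construction delivers, while $\pi^\star_{\mathbf{X}}\le 4en/N^{1/d}$ holds simultaneously; all of this is realized by a single instance of the IC design.

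The hard part will not be any individual estimate but the simultaneity of the two goals: small per-worker file sets (needed for the $\pi$ bound) leave every $d$-tuple with only a handful of admissible hosts, which a priori obstructs balancing (needed for the $\delta$ bound). The interweaving of the seed cliques, together with the tie-breaking rule for the $d$-tuples falling within fewer than $d$ blocks, is engineered precisely to reconcile these two demands; verifying that it does --- and controlling the constants tightly enough that the Chernoff/union-bound step closes at exactly the stated $\varphi_{\mathrm{min}}$ and produces the clean constants $4e$ and $5$ --- is where the real work of Section~\ref{sec: CliqueScheme} lies.
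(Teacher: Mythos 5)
Your overall architecture coincides with the paper's: the converse is the same counting argument ($N\binom{\pi_{\max}}{d}\ge|\mathbf{X}|=\varphi\binom{n}{d}$ followed by $\binom{\pi_{\max}}{d}/\binom{n}{d}\le(\pi_{\max}/n)^d$, which is Appendix~\ref{proofoflem:fundlower} verbatim); the achievability is the same blind family-based IC allocation with $f=k$ maximal such that $\binom{k}{d}\le N$, each worker holding $d$ families, and every $d$-tuple routed to a worker containing its support family (Lemmas~\ref{lem:lemma_pi_g=0}, \ref{lem:lemma_pi_g>0}, \ref{lem:N,N'}, \ref{lem: delta pi, A,n,d}); and the $\delta_{\mathbf{X}}$ bound is the same Bernoulli-thinning $+$ multiplicative Chernoff $+$ union-bound-over-$N$-workers argument on top of the deterministic balance $\delta_{\mathbf{A}_{n,d}}\le 4$ (Lemma~\ref{lem: conc on group size}). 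Your handling of the non-divisible case via blocks of size $\lceil n/f\rceil$ is a mild simplification of the paper's Case~2 (which instead discards $g$ files and reattaches them, giving $\pi=s_0d+g$), but it lands on the same $O(n/N^{1/d})$ bound.

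The one concrete problem is the constant in the $\delta_{\mathbf{X}}$ step. With your tolerances --- upper deviation $\tfrac32\varphi m_b$ for each worker and lower deviation $\tfrac23\varphi\binom{n}{d}$ for $|\mathbf{X}|$ --- you obtain $\delta_{\mathbf{X}}\le\tfrac94\,N\max_b m_b/\binom{n}{d}\le\tfrac94c$, and the construction only delivers $c=4$ (Lemma~\ref{lem: delta pi, A,n,d}; it is $3$ in Case~1 but $4$ in general), so your chain yields $\delta_{\mathbf{X}}\le 9$, not the stated $5$. The paper closes this by (i) taking the Chernoff tolerance $\varepsilon=\tfrac14$, so that $\max_b|\mathbf{\Phi}_b|\le\tfrac54\varphi\max_b|\widetilde{\mathbf{\Phi}}_b|$ on the good event, and (ii) not treating $|\mathbf{X}|$ as random at all --- since $\varphi$ is \emph{defined} as $|\mathbf{X}|/\binom{n}{d}$, the denominator $\lceil|\mathbf{X}|/N\rceil=\lceil\varphi\binom{n}{d}/N\rceil$ is deterministic and needs no concentration --- which gives exactly $\tfrac54\cdot 4=5$. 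Your extra lower-tail event on $|\mathbf{X}|$ is therefore not only unnecessary but is what inflates the constant; tightening $\varepsilon$ to $\tfrac14$ and dropping that event (which is also what $\varphi_{\mathrm{min}}=\tfrac{96N\log(2Nn)}{\binom{n}{d}-2^{d+2}N}$ is calibrated for, via $\varphi m_{\min}\ge 48\log(2Nn)$ with $m_{\min}\ge\binom{n}{d}/(2N)-2^{d+1}$) recovers the theorem as stated.
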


\begin{proof}
The above is a direct outcome of the converse \(\pi_\mathbf{X}^\star\geq \frac{\varphi^{\frac{1}{d}} n}{N^{\frac{1}{d}}}\) in Appendix~\ref{proofoflem:fundlower} and of the achievability part \(\pi_\mathbf{X}^\star\leq \frac{4e n}{N^{\frac{1}{d}}}\), where achievability is due to the interweaved clique design from Section~\ref{sec: CliqueScheme}. The guarantee on the computational cost $\delta_{\mathbf{X}}$ is derived in Lemma~\ref{lem: conc on group size}, \textcolor{black}{in conjunction with the findings of the preceding Lemmas \ref{lem:ub-lb}-\ref{lem: delta pi, A,n,d}}.
\end{proof}

\begin{remark}
In Theorem~\ref{thm:main}, the parameter \( \varphi \) arises both as the normalized size of a given subfunction set \( \mathbf{X} \) and as a parameter governing the random {construction of \(\mathbf{X}\)}. On the one hand, for a given subfunction set \( \mathbf{X} \subseteq
\mathbf{A}_{n,d} \) with normalized size \( \varphi = |\mathbf{X}|/\binom{n}{d} \), the communication cost \( \pi_{\mathbf{X}} \le \frac{4e n}{N^{\frac{1}{d}}} \) is achievable. On the other hand, {\( \varphi \) also appears as the sampling probability used to construct \( \mathbf{X} \) by independently sampling elements of \( \mathbf{A}_{n,d} \)}. Under the stated conditions in Theorem~\ref{thm:main}, {for any \( \mathbf{X} \) obtained from this sampling procedure,}, the proposed IC design achieves the same communication cost \( \pi_{\mathbf{X}} \le \frac{4e n}{N^{\frac{1}{d}}} \); moreover, with probability at least \(1-\tfrac{1}{n}\), it also ensures that \(\delta_{\mathbf{X}} \le 5\).
\end{remark}

\begin{remark}
The above theorem is not of an asymptotic nature. In the large $n$ regime though, the theorem further simplifies to say that if \(\varphi \geq \varphi_\mathrm{min} \approx \frac{ \ln n}{n^{d/2}}\), and  \(N \le (\frac{9}{10}\sqrt{\frac{n}{d}})^{d}\approx \sqrt{\binom{n}{d}}\), the additional guarantee of \(\delta_{\mathbf{X}} \le 5\), is achieved with probability almost one. 
\end{remark}
The following corollary distills the above theorem in the form of scaling laws. We will henceforth use the term \textit{$(n,N,d,\varphi)$-distributed computing setting}, to refer to our setting with \(n\) files, \(N\) workers, subfunction degree \(d\), and a subfunction set \(\mathbf{X} \subseteq \mathbf{A}_{n,d}\) of normalized size \(\varphi = |\mathbf{X}| / \binom{n}{d}\).  We finally recall that the term \emph{partitioning gain} refers to $n/\pi$.
\begin{cor}
\label{cor: first}
In our $(n,N,d,\varphi)$-distributed computing setting, with 
scaling $n,N \le(\frac{9}{10}\sqrt{\frac{n}{d}})^{d} $ and fixed \(d,\varphi\), the optimal communication cost scales as \(\frac{n}{N^{1/d}}\), yielding an order-optimal partitioning gain that scales as \(N^{\frac{1}{d}}\). This order-optimal performance is achieved by the interweaved clique design for any $\mathbf{X}$, and when \(N \le (\frac{9}{10}\sqrt{\frac{n}{d}})^{d}\), the same design also guarantees the order optimal \(\delta_{\mathbf{X}} \le 5\) with probability {at least \(1-\frac{1}{n} \rightarrow 1\)}. Thus order-optimality in $\pi_{\mathbf{X}} $ and $\delta_{\mathbf{X}}$ can be achieved simultaneously.
\end{cor}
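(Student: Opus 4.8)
The plan is to derive the corollary as a direct consequence of Theorem~\ref{thm:main} by specializing the parameter regime and tracking only the scaling behavior. First I would fix $d$ and $\varphi$ as constants independent of $n$, and consider $N$ scaling in the range $n \le N \le (\tfrac{9}{10}\sqrt{n/d})^d$. Since $d\le n/32$ holds automatically for large $n$ when $d$ is fixed, and since $\varphi$ fixed certainly satisfies $\varphi\ge\varphi_{\min}\approx (\ln n)/n^{d/2}\to 0$, all hypotheses of the second part of Theorem~\ref{thm:main} are met for $n$ large enough. Then the theorem gives the two-sided bound $\pi^\star_{\mathbf{X}} \in [\varphi^{1/d} n / N^{1/d},\; 4e\,n/N^{1/d}]$, and since $\varphi^{1/d}$ and $4e$ are both positive constants, this immediately yields $\pi^\star_{\mathbf{X}} \asymp n/N^{1/d}$ in the sense of the $\asymp$ notation defined at the end of Section~\ref{sec:prob stat} (with $c_1=\varphi^{1/d}$, $c_2 = 4e$, and $g(n) = n/N^{1/d}$). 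Taking reciprocals and multiplying by $n$, the partitioning gain $n/\pi^\star_{\mathbf{X}}$ then scales as $N^{1/d}$, which is the claimed order-optimal partitioning gain.

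Next I would address the achievability and the simultaneity claim. The upper bound $\pi_{\mathbf{X}} \le 4e\,n/N^{1/d}$ in Theorem~\ref{thm:main} is attained by the IC design for \emph{any} $\mathbf{X}\subseteq\mathbf{A}_{n,d}$ of normalized size $\varphi$ (this is the deterministic part of the theorem, not requiring the probabilistic model), so combined with the matching converse this shows the IC design is order-optimal in $\pi_{\mathbf{X}}$ for every $\mathbf{X}$. For the computation cost, I would invoke the second sentence of Theorem~\ref{thm:main}: under the same parameter restrictions ($\varphi\ge\varphi_{\min}$, $d\le n/32$, $N\le(\tfrac{9}{10}\sqrt{n/d})^d$), the \emph{same} IC design additionally guarantees $\delta_{\mathbf{X}}\le 5$ with probability at least $1-\tfrac1n$. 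Since $\delta_{\mathbf{X}}\ge 1$ always and $5$ is a constant, $\delta_{\mathbf{X}}\le 5$ is order-optimal (i.e.\ $\delta_{\mathbf{X}}\asymp 1$). Because both guarantees are delivered by one and the same design instance, the two order-optimality statements hold simultaneously, and the failure probability $\tfrac1n\to 0$ as $n\to\infty$, giving the ``with probability at least $1-\tfrac1n\to 1$'' phrasing.

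The only mildly delicate point — and what I would treat as the main ``obstacle,'' though it is really just bookkeeping — is checking that the constant factors $\varphi^{1/d}$ and $4e$ genuinely translate into the $\asymp$ relation uniformly over the allowed range of $N$: one must note that $g(n)=n/N^{1/d}$ is itself a function that depends on how $N$ is chosen as a function of $n$, and the constants $c_1,c_2$ in the definition of $\asymp$ must be chosen independently of that choice. This is fine here because $c_1=\varphi^{1/d}$ and $c_2=4e$ do not depend on $N$ at all; the dependence on $N$ is entirely absorbed into $g$. A secondary point is simply to remark that the hypotheses $n\le N$ and $N\le(\tfrac{9}{10}\sqrt{n/d})^d$ are jointly satisfiable for all large $n$ (since the upper limit grows like $n^{d/2}/(\text{const})$, which exceeds $n$ once $d\ge 2$, or once $n$ is large when $d=1$), so the claimed regime is non-empty. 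With these observations in place the corollary follows immediately from Theorem~\ref{thm:main}.
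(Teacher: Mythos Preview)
Your approach is correct and matches the paper's own proof, which simply states that the result follows directly from Theorem~\ref{thm:main}; you have just spelled out the bookkeeping in more detail. One small note: you appear to read the hypothesis ``scaling $n,N \le(\frac{9}{10}\sqrt{n/d})^{d}$'' as the chain $n\le N\le(\frac{9}{10}\sqrt{n/d})^{d}$, whereas the intended reading is that $n$ and $N$ both scale with the single constraint $N\le(\frac{9}{10}\sqrt{n/d})^{d}$ --- this misreading is harmless since your argument nowhere actually uses $n\le N$.
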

\begin{proof}
    The result follows directly from Theorem~\ref{thm:main}.
\end{proof}

The preceding result establishes the optimal scaling of the communication cost for any \textit{fixed} subfunction set \(\mathbf{X}\). 
As suggested before, though, a function \(F\) generally admits multiple valid decompositions of the form of~\eqref{generalFunction1}, each corresponding to a distinct subfunction set \(\mathbf{X}\). 
Given any function $F$, the following corollary readily bounds the optimal communication cost, jointly optimized over all subfunction decompositions $\mathbf{X}$ and all data-and-task assignment solutions for each valid decomposition.

\begin{cor}
\label{cor: second}
For any  function \(F\) with \(n\) input files, let \(\mathcal{X}_{F,d}\) be the set of all its possible decompositions \(\mathbf{X} \subseteq \mathbf{A}_{n,d}\) with fixed subfunction degree \(d\), and let \(\varphi_{\mathrm{x-min}} \triangleq \min_{\mathbf{X} \in \mathcal{X}_{F,d}} \{|\mathbf{X}| / \binom{n}{d} \}\). Then the optimal communication cost $\pi_F^{\star\star}$ over all decompositions \(\mathbf{X}\) and all data-and-task allocation policies, lies in the region
\[
\pi_F^{\star\star}
\in
\Big[\,\varphi_{\mathrm{x-min}}^{1/d} \cdot \tfrac{n}{N^{1/d}},\;
\tfrac{4e\,n}{N^{1/d}}\,\Big].
\]
Thus, 
for scaling \(n\) and \(N\le (\frac{9}{10}\sqrt{\frac{n}{d}})^{d}\) and under any non-vanishing $\varphi_{\mathrm{x-min}}>0$, this optimal cost scales as 
\[
\pi_F^{\star\star} \asymp \tfrac{n}{N^{1/d}}
\]
and thus the optimal partitioning gain scales as \(N^{1/d}\). 
Moreover, the proposed interweaved clique design achieves this order optimal gain, for every decomposition \(\mathbf{X}\in\mathcal{X}_F\).
\end{cor}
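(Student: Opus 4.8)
The plan is to obtain both endpoints of the claimed interval by taking a minimum over the admissible family $\mathcal{X}_{F,d}$ and invoking the two halves of Theorem~\ref{thm:main} in a way that is uniform in $\mathbf{X}$. First I would make explicit that
$\pi_F^{\star\star} \triangleq \min_{\mathbf{X}\in\mathcal{X}_{F,d}} \pi^\star_{\mathbf{X}} = \min_{\mathbf{X}\in\mathcal{X}_{F,d}} \min_{\boldsymbol{\mathcal{S}}\in\mathfrak{S}_{\mathbf{X}}(n,d,N)} \pi^{\boldsymbol{\mathcal{S}}}_{\mathbf{X}}$, noting the minima are attained since $\mathcal{X}_{F,d}$ is a finite (nonempty, for the considered $d$) subfamily of $2^{\mathbf{A}_{n,d}}$.

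For the upper bound I would use the achievability half of Theorem~\ref{thm:main}, which asserts that for \emph{every} subfunction set $\mathbf{X}\subseteq\mathbf{A}_{n,d}$ — hence in particular for every $\mathbf{X}\in\mathcal{X}_{F,d}$ — the IC design attains $\pi^\star_{\mathbf{X}}\le \tfrac{4e\,n}{N^{1/d}}$, with no dependence on the normalized size $\varphi_{\mathbf{X}}=|\mathbf{X}|/\binom{n}{d}$. Since this holds for each $\mathbf{X}$, it holds for the minimizing one, giving $\pi_F^{\star\star}\le \tfrac{4e\,n}{N^{1/d}}$; the same statement, read for arbitrary $\mathbf{X}$, yields the ``moreover'' clause, where one also uses that the IC file allocation is blind to $\mathbf{X}$ so that a single placement is simultaneously order-optimal for all $\mathbf{X}\in\mathcal{X}_F=\mathcal{X}_{F,d}$. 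For the lower bound I would use the converse half of Theorem~\ref{thm:main}, namely $\pi^\star_{\mathbf{X}}\ge \tfrac{\varphi_{\mathbf{X}}^{1/d}\,n}{N^{1/d}}$, together with the elementary monotonicity of $t\mapsto t^{1/d}$: by definition $\varphi_{\mathbf{X}}\ge \varphi_{\mathrm{x-min}}$ for all $\mathbf{X}\in\mathcal{X}_{F,d}$, so $\pi^\star_{\mathbf{X}}\ge \tfrac{\varphi_{\mathrm{x-min}}^{1/d}\,n}{N^{1/d}}$ uniformly over $\mathcal{X}_{F,d}$, and taking the minimum over $\mathbf{X}$ preserves this, giving $\pi_F^{\star\star}\ge \tfrac{\varphi_{\mathrm{x-min}}^{1/d}\,n}{N^{1/d}}$. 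Combining the two displays yields the stated interval.

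The scaling claim then follows by specialization: with $d$ fixed and $\varphi_{\mathrm{x-min}}>0$ a fixed constant, the two endpoints $\varphi_{\mathrm{x-min}}^{1/d}\tfrac{n}{N^{1/d}}$ and $\tfrac{4e\,n}{N^{1/d}}$ differ only by a constant factor, so $\pi_F^{\star\star}\asymp n/N^{1/d}$ and the partitioning gain $n/\pi_F^{\star\star}\asymp N^{1/d}$; the hypothesis $N\le(\tfrac{9}{10}\sqrt{n/d})^{d}$ enters only through Theorem~\ref{thm:main} to keep the IC construction valid (and, if desired, to retain the $\delta_{\mathbf{X}}\le 5$ guarantee). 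I do not expect a real obstacle here — the argument is a clean distillation of Theorem~\ref{thm:main}. The only points requiring care, both already secured, are that the achievability bound of Theorem~\ref{thm:main} is uniform over \emph{all} decompositions (not merely a favourable one, which is what makes the ``moreover'' clause true) and that the converse bound is monotone in $\varphi$, so that bounding each $\varphi_{\mathbf{X}}$ below by $\varphi_{\mathrm{x-min}}$ is legitimate.
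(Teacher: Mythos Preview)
Your proposal is correct and follows essentially the same approach as the paper, which simply states that the result is ``direct from Theorem~\ref{thm:main}.'' You have merely spelled out in detail how the two halves of Theorem~\ref{thm:main} combine under a minimum over $\mathcal{X}_{F,d}$, which is exactly what the paper leaves implicit.
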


\begin{proof}
The proof is again direct from Theorem~\ref{thm:main}.
\end{proof}
The following considers the case where the decompositions span multiple values of $d$.  The corollary provides an optimization approach, based on bounds. 
\begin{cor}
\label{cor: third}
Let the function \(F\) admit a collection of admissible decompositions
\(
\mathcal{X}_F=\{\,\mathbf{X}_i \subseteq \mathbf{A}_{n,d_i} \,\},
\)
where each \(\mathbf{X}_i\) has their own degree \(d_i\) and density
\(
\varphi_{i} = |\mathbf{X}_i|/\binom{n}{d_{i}}. 
\)
Let $d_{\min}$ be the minimum subfunction degree $d$ in $\mathcal{X}_F$. 
Then the optimal communication cost, among all decompositions in \(\mathcal{X}_F\) and all data-and-task allocation schemes, lies in the region
\[
\bigl[ \min_{\,\mathbf{X}_i \in \mathcal{X}_F}\;
\frac{\varphi_{i}^{1/d_{i}}\, n}{N^{1/d_{i}}},  \ \ \frac{4en}{N^{1/d_{\min}}} \bigr].
\]
Furthermore, for $ i'$ such that $\mathbf{X}_{i'}$ has degree $d_{\min}$, for 
\textcolor{black}{any given \(n\)}, if the following three conditions 
\textcolor{black}{\( \varphi_{_i'} \ge C_{d_{\min}}\cdot\frac{\ln n}{n^{{d_{\min}}/2}}\)}, \(d_{\min} \le \frac{n}{32}\) and \textcolor{black}{\(N \le (\frac{9}{10}\sqrt{\frac{n}{d_{\min}}})^{d_{\min}}\)} hold, then the above upper bound is guaranteed with \(\delta_{\mathbf{X}_{i'}} \le 5\), with probability at least \(1 - \frac{1}{n}\), and these can be achieved by the IC design.
\end{cor}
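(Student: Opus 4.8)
\emph{Proof proposal.} The plan is to reduce the statement to a per-decomposition application of Theorem~\ref{thm:main}, followed by an optimization over the family $\mathcal{X}_F$. The first observation I would record is that, by definition, the cost optimized jointly over decompositions and allocations equals $\pi_F^{\star\star} = \min_{\mathbf{X}_i \in \mathcal{X}_F} \pi^\star_{\mathbf{X}_i}$: for a fixed $\mathbf{X}_i$, minimizing over all valid data-and-task allocations yields $\pi^\star_{\mathbf{X}_i}$, and the outer choice of decomposition contributes the outer minimum. For the left endpoint I would then apply the converse half of Theorem~\ref{thm:main} to each $\mathbf{X}_i$ --- which holds unconditionally for every subfunction set of density $\varphi_i$ --- obtaining $\pi^\star_{\mathbf{X}_i} \ge \varphi_i^{1/d_i}\, n / N^{1/d_i}$, and take the minimum over $i$ on both sides; this gives $\pi_F^{\star\star} \ge \min_{\mathbf{X}_i \in \mathcal{X}_F} \varphi_i^{1/d_i}\, n / N^{1/d_i}$.

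For the right endpoint I would use the achievability half of Theorem~\ref{thm:main}, $\pi^\star_{\mathbf{X}_i} \le 4 e n / N^{1/d_i}$ for every $i$, so that $\pi_F^{\star\star} = \min_i \pi^\star_{\mathbf{X}_i} \le \min_i 4 e n / N^{1/d_i}$. Since $N \ge 1$, the map $d \mapsto N^{1/d}$ is non-increasing, hence the smallest of these bounds is the one at $d_{\min}$, giving $\pi_F^{\star\star} \le 4 e n / N^{1/d_{\min}}$; equivalently and more directly, for any index $i'$ with $d_{i'} = d_{\min}$ one already has $\pi_F^{\star\star} \le \pi^\star_{\mathbf{X}_{i'}} \le 4 e n / N^{1/d_{\min}}$. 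Combining with the previous paragraph yields the claimed region $\bigl[\min_{\mathbf{X}_i \in \mathcal{X}_F} \varphi_i^{1/d_i} n / N^{1/d_i},\ 4 e n / N^{1/d_{\min}}\bigr]$.

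Finally, for the $\delta_{\mathbf{X}_{i'}} \le 5$ guarantee I would apply the second, probabilistic, part of Theorem~\ref{thm:main} to the single decomposition $\mathbf{X}_{i'}$, now viewed as a random thinning of $\mathbf{A}_{n,d_{\min}}$ with parameter $\varphi_{i'}$, and with the allocation produced by the IC design. The only point requiring attention is that the three displayed hypotheses of the corollary imply those of Theorem~\ref{thm:main} at $d = d_{\min}$: the conditions $d_{\min} \le n/32$ and $N \le (\tfrac{9}{10}\sqrt{n/d_{\min}})^{d_{\min}}$ are literally the same, while $\varphi_{i'} \ge C_{d_{\min}} \tfrac{\ln n}{n^{d_{\min}/2}}$ implies $\varphi_{i'} \ge \varphi_{\mathrm{min}}$ by the bound $\varphi_{\mathrm{min}} \le C_d \tfrac{\ln n}{n^{d/2}}$ recorded right after~\eqref{eq:phi1} (with $d = d_{\min}$). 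Theorem~\ref{thm:main} then gives $\pi_{\mathbf{X}_{i'}} \le 4 e n / N^{1/d_{\min}}$ together with $\delta_{\mathbf{X}_{i'}} \le 5$, with probability at least $1 - 1/n$, and since $\pi_F^{\star\star} \le \pi_{\mathbf{X}_{i'}}$ this certifies that the right endpoint is attainable. I do not expect a real obstacle here: the whole argument is inherited from Theorem~\ref{thm:main}, and the only care needed is the bookkeeping --- keeping the two roles of $\varphi$ (density of a fixed $\mathbf{X}_{i'}$ versus sampling parameter) apart, and invoking the mild monotonicity of $N^{1/d}$ in $d$ that singles out $d_{\min}$ in the upper bound.
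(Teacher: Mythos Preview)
Your proposal is correct and takes essentially the same approach as the paper, which simply states that the result is direct from Theorem~\ref{thm:main}; you have merely unpacked that one-line justification in detail.
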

\begin{proof}
The proof is direct from Theorem~\ref{thm:main}.
\end{proof}
Finally the following considers the scaling laws of the previous corollary.
\begin{cor}
\label{cor: forth}
In our distributed computing setting with a desired function \(F\) 
with admissible decompositions
\(
\mathcal{X}_F=\{\,\mathbf{X}_i \subseteq \mathbf{A}_{n,d_i} \,\},
\) with non-vanishing \( \varphi_{i} = |\mathbf{X}_i^{(d_i)}| / \binom{n}{d_i}>0 \) and with $d_{\min}$ being the minimum subfunction degree $d$ in $\mathcal{X}_F$, then 
for scaling \textcolor{black}{$n,N\le (\frac{9}{10}\sqrt{\frac{n}{d_{\min}}})^{d_{\min}}$}, the optimal communication cost, among all decompositions in \(\mathcal{X}_F\) and all data-and-task allocation schemes, scales as 
\[
\pi^{\star\star} \asymp \frac{n}{N^{1/d_{\min}}}
\]
and the optimal partitioning gain scales as \(N^{1/d_{\min}}\). 
\end{cor}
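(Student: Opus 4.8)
The plan is to derive Corollary~\ref{cor: forth} as a direct specialization of Corollary~\ref{cor: third}, reading off the scaling behaviour of the two endpoints of the interval $\bigl[\min_{\mathbf{X}_i}\frac{\varphi_i^{1/d_i}n}{N^{1/d_i}},\ \frac{4en}{N^{1/d_{\min}}}\bigr]$ under the stated regime. First I would fix the hypotheses: all $d_i$ are constants (they range over a finite set of admissible degrees), all $\varphi_i$ are bounded away from $0$ and above by $1$, and $n,N$ scale with $N\le(\tfrac{9}{10}\sqrt{n/d_{\min}})^{d_{\min}}$. The goal is to show both endpoints are $\asymp n/N^{1/d_{\min}}$, which by the sandwich then forces $\pi^{\star\star}\asymp n/N^{1/d_{\min}}$, and the partitioning-gain claim follows by dividing $n$ by this quantity.

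Second, I would handle the upper endpoint, which is immediate: $\frac{4en}{N^{1/d_{\min}}}$ is already a constant multiple of $n/N^{1/d_{\min}}$, so $\pi^{\star\star}\le c_2\cdot n/N^{1/d_{\min}}$ with $c_2=4e$. Third, and this is where the only real work sits, I would show the lower endpoint $\min_{\mathbf{X}_i\in\mathcal{X}_F}\frac{\varphi_i^{1/d_i}n}{N^{1/d_i}}$ is itself $\ge c_1\cdot n/N^{1/d_{\min}}$ for some constant $c_1>0$. For each $i$, since $d_i\ge d_{\min}$ and $N\ge1$, we have $N^{1/d_i}\le N^{1/d_{\min}}$, hence $\frac{n}{N^{1/d_i}}\ge\frac{n}{N^{1/d_{\min}}}$; and since $\varphi_i\ge\varphi_{\min}^{\ast}\triangleq\min_i\varphi_i>0$ with each $d_i$ bounded by some constant $d_{\max}$, we get $\varphi_i^{1/d_i}\ge(\varphi_{\min}^{\ast})^{1/d_{\min}}$ (using $\varphi_i\le1$ so that raising to a smaller exponent can only increase the value — more carefully, $\varphi_i^{1/d_i}\ge\varphi_i^{1/d_{\min}}\ge(\varphi_{\min}^{\ast})^{1/d_{\min}}$ since $x^{1/d_i}\ge x^{1/d_{\min}}$ for $x\in(0,1]$ and $d_i\ge d_{\min}$). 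Combining, every term in the minimum is at least $(\varphi_{\min}^{\ast})^{1/d_{\min}}\cdot n/N^{1/d_{\min}}$, so the minimum is too, giving $c_1=(\varphi_{\min}^{\ast})^{1/d_{\min}}$, a positive constant by the non-vanishing assumption.

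Putting the two bounds together yields $c_1\cdot\frac{n}{N^{1/d_{\min}}}\le\pi^{\star\star}\le c_2\cdot\frac{n}{N^{1/d_{\min}}}$, i.e.\ $\pi^{\star\star}\asymp n/N^{1/d_{\min}}$ in the sense of the notation fixed in the paper; the partitioning gain $n/\pi^{\star\star}$ then scales as $N^{1/d_{\min}}$. I do not anticipate a genuine obstacle here — the statement is essentially bookkeeping on top of Corollary~\ref{cor: third} — but the one point requiring care is the direction of the inequality $\varphi_i^{1/d_i}\ge\varphi_i^{1/d_{\min}}$, which relies crucially on $\varphi_i\le1$ (true since $\varphi_i=|\mathbf{X}_i|/\binom{n}{d_i}$ is a normalized size) and on $d_i\ge d_{\min}$; getting the monotonicity backwards would break the constant. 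A secondary subtlety is that $\mathcal{X}_F$ must be finite (or at least have degrees drawn from a finite set and densities bounded away from zero) for $\varphi_{\min}^{\ast}$ to be a genuine positive constant rather than an infimum that could vanish — this is implicit in the ``non-vanishing $\varphi_i$'' hypothesis and I would state it explicitly.
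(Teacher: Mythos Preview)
Your proposal is correct and takes essentially the same approach as the paper, which simply states that the result is ``direct from Theorem~\ref{thm:main}'' via the IC design; you have just fleshed out the sandwich argument that the paper leaves implicit. Your careful check of the monotonicity $\varphi_i^{1/d_i}\ge\varphi_i^{1/d_{\min}}$ for $\varphi_i\in(0,1]$ and $d_i\ge d_{\min}$, together with $N^{1/d_i}\le N^{1/d_{\min}}$, is exactly the bookkeeping needed and is handled correctly.
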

\begin{proof}
    The proof is direct from Theorem~\ref{thm:main}, and it again involves achievability from the IC design.
\end{proof}

To conclude, Theorem~\ref{thm:main} reveals that for any function, any subfunction decomposition \(\mathbf{X}\) with \(|\mathbf{X}| \geq \varphi |\Adn| \) will accept an optimal communication cost (measured, as we do so here, in the form of the maximum number of files $\pi_{\mathbf{X}}$ at any worker) that lies in the region $\pi_{\mathbf{X}}^\star \in [ \frac{\varphi^{\frac{1}{d}} n}{N^{\frac{1}{d}}}, \frac{4e\,n}{N^{\frac{1}{d}}}].$  There is no function, nor decomposition, whose optimal communication cost is outside the above region.  
This fact also captures potential efforts to decompose a desired function into a collection of \emph{non-atomic} subfunctions, taking as inputs smaller-in-size subfiles. Such subpacketization approach would tend to create a new larger $n$, new smaller files, and a new, possibly larger $\mathbf{X}$ that would again follow the rules of the fundamental limits described above, which in turn would dictate whether or not subpacketization gains would be possible.

In the end, under our metric, we see that the new IC design is a universal design, in the sense that --- under the above assumption of a fixed $\varphi$ --- it provides the order-optimal gain $N^{1/d}$, for any function and any decomposition.

\section{Achievable Scheme: Interweaved Clique (IC) Design}
\label{sec: CliqueScheme}
In this section, we introduce the IC design for data and task (subfunction) allocation in distributed computing. The design will partition any 
subset \(\mathbf{X}\subseteq\mathbf{A}_{n,d}\) into \(N\) groups \(\mathbf{\Phi}_{1}\), \(\mathbf{\Phi}_{2}\), \(\ldots\), \(\mathbf{\Phi}_{N}\), aiming to minimize $\pi_{\mathbf{X}}$ and $\delta_{\mathbf{X}}$, and it will involve three main steps. First, we will design a partition of the entire \(\mathbf{A}_{n,d}\), where this partition only involves \(N'\) groups. In this first step --- which will be described in Sections~\ref{subsec: Case1} and \ref{subsec: Case2} --- this $N'$ will take the form
\begin{equation}
\label{eq:N'}
    N' = \binom{k}{d}
\end{equation}
where \(k\) is defined as
\begin{equation}
\label{eq:largest_r}
    k \triangleq \max \left\{ r \in \mathbb{Z}^{+} \,\mid\, \binom{r}{d} \leq N \right\}.
\end{equation}
In the subsequent step, described in Section~\ref{subsec: N' to N}, we will introduce a technique to extend the partition, from $N'$ groups, to \(N\geq N'\) groups \(\widetilde{\mathbf{\Phi}}_1\), \(\widetilde{\mathbf{\Phi}}_2\), \(\ldots\), \(\widetilde{\mathbf{\Phi}}_{N}\), while the final step (Section~\ref{sec: Random task}), will mainly involve the creation of the final partition
\begin{equation}
\label{eq: partition refinement}
    \mathbf{\Phi}_{1}=\widetilde{\mathbf{\Phi}}_1\cap\mathbf{X}, \ \mathbf{\Phi}_{2} =\widetilde{\mathbf{\Phi}}_2\cap\mathbf{X},\ \ldots,\ \mathbf{\Phi}_{N}=\widetilde{\mathbf{\Phi}}_N\cap \mathbf{X}
\end{equation}
all while keeping the two costs in check.  When referring to the IC design, we will often use --- for brevity ---   \(\pi_{\mathbf{X}}\) and \(\delta_{\mathbf{X}}\), instead of \(\pi^{\boldsymbol{\mathcal{S}}}_{\mathbf{X}}\) and \(\delta^{\boldsymbol{\mathcal{S}}}_{\mathbf{X}}\). Furthermore, when  context allows, especially when we are partitioning \(\mathbf{X} = \mathbf{A}_{n,d}\), we may simply write \(\pi\) and \(\delta\), respectively, for brevity.

Let us start with the first step, and let us consider two separate cases: {Case~1} for when $n$ is divisible by the \(k\) defined in \eqref{eq:largest_r}, and {Case~2} for when $n$ is not divisible by \(k\).

\subsection{Case 1: \(k=\frac{n}{s}\) for some \(s\in \mathbb{Z}^+\)}
\label{subsec: Case1}
We here consider the case where $k$ from \eqref{eq:largest_r} divides $n$. Let $s =  \frac{n}{k}$, and let \(N'=\binom{k}{d}={n/s \choose d}.\) 
Let us first partition the file set \([n]\) into \(f\) disjoint families \(\mathcal{F}_1,\mathcal{F}_2,\dots,\mathcal{F}_f\), each taking the form
\begin{equation} \label{}
    \mathcal{F}_i\triangleq\{(i-1)s+1,(i-1)s+2,\dots,is\}, \ i\in[f]
\end{equation}
where we set \(f =  k=\frac{n}{s}\), 
and let us, for any subset \(\mathcal{I} \subseteq [f]\), define now --- to be used later --- the union of families 
\begin{equation}
\label{eq:F_I}
    \mathcal{F}_{\mathcal{I}} \triangleq \bigcup_{i \in \mathcal{I}} \mathcal{F}_i.
\end{equation}
Our aim now will be to design a partition 
\begin{equation}
\label{eq: A_n,d partition}
    \mathbf{A}_{n,d} = \bigcup\limits_{\sigma \in \binom{[f]}{d}} \widetilde{\mathbf{\Phi}}_\sigma
\end{equation}
where we partition $\mathbf{A}_{n,d}$ into $\binom{f}{d}$ disjoint groups $\widetilde{\mathbf{\Phi}}_\sigma$, each labeled by a $d$-tuple $\sigma \subset [f]$.
Before proceeding with the construction of this partition, we introduce the following definition of the 
\emph{support family} of a \(d\)-tuple.
\begin{definition}
For a \(d\)-tuple \(\mathbf{a} \in \mathbf{A}_{n,d}\), its \emph{support family} is defined as
\begin{equation*}
    \mathcal{B}(\mathbf{a}) \triangleq \left\{ j \in [f] \mid \mathbf{a} \cap \mathcal{F}_j \neq \emptyset \right\}
\end{equation*}
and the size of the support family as
\begin{equation*}
    b(\mathbf{a}) \triangleq |\mathcal{B}(\mathbf{a})| \in \big[\lceil{\frac{d}{s}}\rceil,d\big].
\end{equation*}
\end{definition}
We also let \( \mathbf{A}_{\mathrm{full}} \triangleq \{ \mathbf{a}\in \mathbf{A}_{n,d} \ : \ b(\mathbf{a}) = d\}\) represent the so-called set of full support (maximal support) $d$-tuples, and let \(\mathbf{A}_{\mathrm{com}}=\mathbf{A}_{n,d}\backslash \mathbf{A}_{\mathrm{full}}\) be its complement set.

Now, we must first partition \(\mathbf{A}_{\mathrm{full}}\) into $N'=\binom{f}{d}$ groups \textcolor{black}{\(\widetilde{\mathbf{\Phi}}^{\mathrm{(full)}}_\sigma\)}, \(\sigma\in \binom{[f]}{d}\). We design these groups as follows
\begin{equation}
\label{eqfull}
    \widetilde{\mathbf{\Phi}}^{\mathrm{(full)}}_\sigma=
    \{\mathbf{a}\in \mathbf{A}_{n,d}\ |\ \mathcal{B}(\mathbf{a})=\sigma\}
\end{equation}
where it is easy to verify that 
\begin{equation}
    \mathbf{A}_{\mathrm{full}} =\bigcup \limits_{\sigma\in \binom{[f]}{d}}\widetilde{\mathbf{\Phi}}^{\mathrm{(full)}}_\sigma
\end{equation}
i.e., that the groups $\widetilde{\mathbf{\Phi}}^{\mathrm{(full)}}_\sigma$ are disjoint and cover $\mathbf{A}_{\mathrm{full}}$.  
Finally, we note that \(|\widetilde{\mathbf{\Phi}}_\sigma^{\mathrm{(full)}}|=s^d\) for every \(\sigma\in \binom{[f]}{d}\), which implies that \(|\mathbf{A}_{\mathrm{full}}|= s^d\cdot \binom{f}{d}\), and thus that \(|\mathbf{A}_{\mathrm{com}}|=\binom{n}{d}-s^d\cdot\binom{f}{d}\).  

Now, it remains to partition \(\mathbf{A}_{\mathrm{com}}\) into \(N'\) groups. In this regard, we consider a partition of \(\mathbf{A}_{n,d}\), which classifies its \(d\)-tuples according to the size of their support family. This partition is as follows
\begin{equation}
    \mathbf{A}_{n,d}=\{\mathbf{C}_{\lceil{\frac{d}{s}}\rceil},\mathbf{C}_{\lceil{\frac{d}{s}}\rceil+1},\dots,\mathbf{C}_d\}
\end{equation}
where for each \(\beta\in [\lceil{\frac{d}{s}}\rceil,d]\), the set
\begin{equation}
\label{eq:C_beta}
    \mathbf{C}_\beta \triangleq \{\mathbf{a}\in \mathbf{A}_{n,d}\mid b(\mathbf{a})=\beta\}
\end{equation}
represents the set of \(d\)-tuples \( \mathbf{a}\in\mathbf{C}_\beta \) that each intersects exactly \( \beta \) families. Naturally, we have \(\mathbf{C}_d= \mathbf{A}_{\mathrm{full}}\) and \(\mathbf{A}_{\mathrm{com}}=\bigcup_{\beta=\lceil\frac{d}{s}\rceil}^{d-1}\mathbf{C}_\beta\). Let us fix 

 \(\beta \in [\lceil{\frac{d}{s}}\rceil,d-1]\). Then, for each \( \mathcal{I} \in \binom{[f]}{\beta} \), we define
    \begin{equation}
    \label{eq:C_beta, I}
          \mathbf{C}_{\beta,\mathcal{I}} \triangleq \left\{ \mathbf{a} \in \mathbf{C}_\beta \;\middle|\; \mathcal{B}(\mathbf{a})=\mathcal{I}\right\}
    \end{equation}
where we will see that \( |\mathbf{C}_{\beta,\mathcal{I}} | \) remains fixed for all \( \mathcal{I} \in {[f] \choose \beta} \) (cf. Appendix~\ref{proofoflem:count}).

 We define \(\mathbf{C}_{\beta, \mathcal{I}, \sigma }\) as a subset of \(\mathbf{C}_{\beta, \mathcal{I}}\) that are allocated to group \(\widetilde{\mathbf{\Phi}}_\sigma\). In Appendix \ref{appendix: C_beta,I}, we proceed with a sequence of steps that will lead to the creation of the \(\mathbf{C}_{\beta, \mathcal{I}, \sigma}\). Regarding the $\mathbf{C}_{\beta,\mathcal I,\sigma}$, we recall that for each \(\sigma \in {[f] \choose d}\), there exist \({d \choose \beta}\) distinct \(\mathcal{I} \in {[f] \choose \beta}\) such that \(\mathcal{I} \subset \sigma\). Consequently, 
   \begin{equation}
       \label{eq: phi_com}
       \widetilde{\mathbf{\Phi}}_{\sigma}^{\mathrm{(com)}}=\bigcup_{\beta=\lceil \frac{d}{s}\rceil}^{d-1}\bigcup_{\mathcal{I}\subset \sigma}\mathbf{C}_{\beta, \mathcal{I}, \sigma}.
   \end{equation}
At this point, we have a complete description of the subfunctions (\(d\)-tuples) allotted --- for now --- to worker \(\sigma\), and these subfunctions take the form
 \begin{equation}
     \label{eq:SubfunctionCase1}
     \widetilde{\mathbf{\Phi}}_{\sigma}=\widetilde{\mathbf{\Phi}}_{\sigma}^{(\mathrm{full})}\cup \widetilde{\mathbf{\Phi}}_{\sigma}^{\mathrm{(com)}}.
 \end{equation}
We now continue with Lemma \ref{lem:ub-lb}, which bounds the number of subfunctions associated --- initially --- to each of the first \(N' = \binom{k}{d}\) servers.
     \begin{lemma}
         \label{lem:ub-lb}
         For Case 1 (Section~\ref{subsec: Case1}), the construction entails a $|\widetilde{\mathbf{\Phi}}_{\sigma}|, \ \sigma \in \binom{[f]}{d}, $ that is bounded as follows 
         \[
         \frac{{n \choose d}}{N'}-2^d+d\le|\widetilde{\mathbf{\Phi}}_{\sigma}|\le \frac{{n \choose d}}{N'}+2^d-d.
         \]
     \end{lemma}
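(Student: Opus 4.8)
The plan is to separate the part of the IC construction that is \emph{exactly} balanced from the part that carries rounding errors, and then reduce the lemma to a bound on how those rounding errors can accumulate. Recall from \eqref{eq:SubfunctionCase1} that $\widetilde{\mathbf{\Phi}}_{\sigma}=\widetilde{\mathbf{\Phi}}_{\sigma}^{(\mathrm{full})}\cup\widetilde{\mathbf{\Phi}}_{\sigma}^{(\mathrm{com})}$, with $|\widetilde{\mathbf{\Phi}}_{\sigma}^{(\mathrm{full})}|=s^{d}$ for every $\sigma\in\binom{[f]}{d}$ and $|\mathbf{A}_{\mathrm{full}}|=s^{d}\binom{f}{d}$. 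Thus the full‑support part delivers the ideal share $s^{d}=|\mathbf{A}_{\mathrm{full}}|/N'$ to every group and contributes no imbalance; writing $\binom{n}{d}/N'=s^{d}+|\mathbf{A}_{\mathrm{com}}|/N'$, the lemma is equivalent to $\bigl|\,|\widetilde{\mathbf{\Phi}}_{\sigma}^{(\mathrm{com})}|-|\mathbf{A}_{\mathrm{com}}|/N'\,\bigr|\le 2^{d}-d$. If $s=1$ then $\lceil d/s\rceil=d$, $\mathbf{A}_{\mathrm{com}}=\emptyset$, and the bound is trivial, so I would assume $s\ge 2$ henceforth.

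Next I would expand $|\widetilde{\mathbf{\Phi}}_{\sigma}^{(\mathrm{com})}|$ via \eqref{eq: phi_com}. By the counting result of Appendix~\ref{proofoflem:count}, $|\mathbf{C}_{\beta,\mathcal{I}}|$ is independent of $\mathcal{I}$; denote it $m_{\beta}$, so $|\mathbf{C}_{\beta}|=\binom{f}{\beta}m_{\beta}$ and $|\mathbf{A}_{\mathrm{com}}|=\sum_{\beta=\lceil d/s\rceil}^{d-1}\binom{f}{\beta}m_{\beta}$. The construction of the $\mathbf{C}_{\beta,\mathcal{I},\sigma}$ in Appendix~\ref{appendix: C_beta,I} splits each $\mathbf{C}_{\beta,\mathcal{I}}$ among the $T_{\beta}\triangleq\binom{f-\beta}{d-\beta}$ admissible groups $\sigma\supseteq\mathcal{I}$ as evenly as possible, so $|\mathbf{C}_{\beta,\mathcal{I},\sigma}|\in\{\lfloor m_{\beta}/T_{\beta}\rfloor,\lceil m_{\beta}/T_{\beta}\rceil\}$. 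Since each $\sigma$ contains exactly $\binom{d}{\beta}$ sets $\mathcal{I}\in\binom{[f]}{\beta}$, summing over $\mathcal{I}\subset\sigma$ and then over $\beta$ produces a main term $\sum_{\beta}\binom{d}{\beta}\,m_{\beta}/T_{\beta}$; the elementary identity $\binom{d}{\beta}/\binom{f-\beta}{d-\beta}=\binom{f}{\beta}/\binom{f}{d}$ shows this equals $\sum_{\beta}\binom{f}{\beta}m_{\beta}/\binom{f}{d}=|\mathbf{A}_{\mathrm{com}}|/N'$. Consequently the deviation $\bigl|\,|\widetilde{\mathbf{\Phi}}_{\sigma}^{(\mathrm{com})}|-|\mathbf{A}_{\mathrm{com}}|/N'\,\bigr|$ is at most the number of pairs $(\beta,\mathcal{I})$ with $\mathcal{I}\subset\sigma$ for which $T_{\beta}\nmid m_{\beta}$, i.e.\ at most one unit per such piece.

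The main obstacle is squeezing this accumulated rounding error down to $2^{d}-d$. The crude count is $\sum_{\beta=\lceil d/s\rceil}^{d-1}\binom{d}{\beta}$. When $\lceil d/s\rceil\ge 2$ — equivalently $s<d$ — the terms $\beta\in\{0,1,d\}$ are all absent, so this sum is at most $2^{d}-d-2$ and one is finished immediately. The delicate regime is $s\ge d$, where $\beta=1$ is attainable and the crude bound is only $2^{d}-2$; here one must use that the assignment in Appendix~\ref{appendix: C_beta,I} is \emph{coordinated across} the different $\mathcal{I}\subset\sigma$, exploiting the vertex‑transitivity of the containment bipartite graph between $\binom{[f]}{\beta}$ and $\binom{[f]}{d}$, so that for each fixed $\beta$ the $\binom{d}{\beta}$ pieces feeding a common $\sigma$ cannot all round up (nor all round down) unless $T_{\beta}\mid m_{\beta}$, thereby limiting the per‑$\beta$ surplus and deficit strictly below $\binom{d}{\beta}$. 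Making this coordination explicit and tracking the worst‑case imbalance it leaves at any single $\sigma$ is the technical heart of the argument; once it shows the total over $\beta\in[\lceil d/s\rceil,d-1]$ stays within $\pm(2^{d}-d)$, combining with the exact $s^{d}$ contribution of the full part yields both inequalities of the lemma.
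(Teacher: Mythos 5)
Your decomposition and main-term computation coincide with the paper's own proof: the full-support part contributes exactly $s^{d}=|\mathbf{A}_{\mathrm{full}}|/N'$ to every group, the identity $\binom{d}{\beta}\big/\binom{f-\beta}{d-\beta}=\binom{f}{\beta}\big/\binom{f}{d}$ collapses the main term of $|\widetilde{\mathbf{\Phi}}_{\sigma}^{\mathrm{(com)}}|$ to $\binom{n}{d}/N'-s^{d}$, and the rounding error is at most one unit per pair $(\beta,\mathcal I)$ with $\mathcal I\subset\sigma$. The gap is in your last step. You correctly note that the crude error count $\sum_{\beta=\lceil d/s\rceil}^{d-1}\binom{d}{\beta}$ equals $2^{d}-2$ when $s\ge d$, exceeding the target $2^{d}-d$ for $d\ge 3$, and you propose to close this by a ``coordination'' property of the allocation in Appendix~\ref{appendix: C_beta,I}, namely that the $\binom{d}{\beta}$ pieces feeding a common $\sigma$ cannot all round up. You do not prove this, and it is in fact false for the construction as defined: the extra elements go to the first $r_\beta$ groups of $\mathcal{G}_{\beta,\mathcal I}^{\mathrm{lex}}$, and the group $\sigma=\{1,\dots,d\}$ is lexicographically first in $\mathcal{G}_{\beta,\mathcal I}$ for \emph{every} $\mathcal I\subset\sigma$; so whenever $r_\beta\ge 1$ it collects the $+1$ from all $\binom{d}{\beta}$ pieces of level $\beta$ simultaneously. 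The step you yourself flag as the technical heart of the argument therefore has no foundation in the actual construction.

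That said, your unease points at a genuine blemish in the paper's proof rather than at a defect of your own making: the paper bounds the same sum by $\sum_{\beta=0}^{d-1}\binom{d}{\beta}$ and evaluates this as $2^{d}-d$, whereas the correct value is $2^{d}-1$. What this style of argument honestly yields is $\bigl|\,|\widetilde{\mathbf{\Phi}}_{\sigma}|-\binom{n}{d}/N'\,\bigr|\le 2^{d}-2$ in the worst case $s\ge d$, marginally weaker than the stated $2^{d}-d$ when $d\ge3$. This is immaterial downstream, since Lemma~\ref{lem:ub-lb2} and Lemma~\ref{lem: delta pi, A,n,d} only exploit the $2^{d}$ scale of the slack; the right fix is to weaken the constant in the statement, not to hunt for a balancing property that the lexicographic block assignment does not possess.
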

     \begin{proof}
         The proof of Lemma~\ref{lem:ub-lb} is provided in Appendix~\ref{proofoflem:ub}.
     \end{proof}

The following lemma describes the communication cost for the same design. 

\begin{lemma}
\label{lem:lemma_pi_g=0}
For Case 1 (Section~\ref{subsec: Case1}), for any chosen integer \(s\) such that $s \mid n $, and for $k = n/s$, the IC design achieves
\[
\pi = s\cdot d.
\]

\end{lemma}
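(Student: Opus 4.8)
The plan is to show that for every label $\sigma\in\binom{[f]}{d}$, the set of files incident to the group $\widetilde{\mathbf{\Phi}}_\sigma$ is exactly $\mathcal{F}_\sigma = \bigcup_{i\in\sigma}\mathcal{F}_i$, which has size $|\mathcal{F}_\sigma| = s\cdot d$ since the families are disjoint and each $\mathcal{F}_i$ has $s$ elements; the claim $\pi=\max_\sigma|\alpha(\widetilde{\mathbf{\Phi}}_\sigma)| = s\,d$ then follows immediately because this value is the same for all $\sigma$. So the real content is the set equality $\alpha(\widetilde{\mathbf{\Phi}}_\sigma) = \mathcal{F}_\sigma$, and I would split it into the two inclusions.

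First I would prove $\alpha(\widetilde{\mathbf{\Phi}}_\sigma)\subseteq\mathcal{F}_\sigma$, i.e. that no $d$-tuple placed in group $\sigma$ touches a file outside the families indexed by $\sigma$. By~\eqref{eq:SubfunctionCase1}, $\widetilde{\mathbf{\Phi}}_\sigma$ is the union of $\widetilde{\mathbf{\Phi}}_\sigma^{(\mathrm{full})}$ and $\widetilde{\mathbf{\Phi}}_\sigma^{(\mathrm{com})}$. For the full-support part, by definition~\eqref{eqfull} every $\mathbf{a}\in\widetilde{\mathbf{\Phi}}_\sigma^{(\mathrm{full})}$ has $\mathcal{B}(\mathbf{a})=\sigma$, so $\mathbf{a}$ meets only the families indexed by $\sigma$ and hence $\mathbf{a}\subseteq\mathcal{F}_\sigma$. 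For the complementary part, by~\eqref{eq: phi_com} every $\mathbf{a}\in\widetilde{\mathbf{\Phi}}_\sigma^{(\mathrm{com})}$ lies in some $\mathbf{C}_{\beta,\mathcal{I},\sigma}\subseteq\mathbf{C}_{\beta,\mathcal{I}}$ with $\mathcal{I}\subset\sigma$; by~\eqref{eq:C_beta, I} such an $\mathbf{a}$ has $\mathcal{B}(\mathbf{a})=\mathcal{I}\subseteq\sigma$, so again $\mathbf{a}\subseteq\mathcal{F}_{\mathcal{I}}\subseteq\mathcal{F}_\sigma$. Taking the union over all $\mathbf{a}\in\widetilde{\mathbf{\Phi}}_\sigma$ gives the inclusion. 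This direction relies only on the labelling conventions of the construction and is essentially bookkeeping.

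Next I would prove $\mathcal{F}_\sigma\subseteq\alpha(\widetilde{\mathbf{\Phi}}_\sigma)$, i.e. that every file in $\mathcal{F}_\sigma$ actually appears in some tuple of group $\sigma$; this guarantees the communication cost is not smaller than $s\,d$. It suffices to exhibit, for each $j\in\sigma$ and each file $v\in\mathcal{F}_j$, one $d$-tuple $\mathbf{a}\in\widetilde{\mathbf{\Phi}}_\sigma^{(\mathrm{full})}$ with $v\in\mathbf{a}$: pick one file from each family $\mathcal{F}_i$, $i\in\sigma$, choosing $v$ itself from $\mathcal{F}_j$; the resulting $d$-set $\mathbf{a}$ has $\mathcal{B}(\mathbf{a})=\sigma$ (it meets all $d$ families in $\sigma$ and no others, using $|\sigma|=d$), hence $\mathbf{a}\in\widetilde{\mathbf{\Phi}}_\sigma^{(\mathrm{full})}$ by~\eqref{eqfull}, and $v\in\alpha(\widetilde{\mathbf{\Phi}}_\sigma)$. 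Since $v$ and $j$ were arbitrary, $\mathcal{F}_\sigma\subseteq\alpha(\widetilde{\mathbf{\Phi}}_\sigma)$.

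Combining the two inclusions yields $\alpha(\widetilde{\mathbf{\Phi}}_\sigma)=\mathcal{F}_\sigma$ and therefore $|\alpha(\widetilde{\mathbf{\Phi}}_\sigma)| = s\,d$ for every $\sigma$, so $\pi=\max_\sigma|\alpha(\widetilde{\mathbf{\Phi}}_\sigma)| = s\,d$, as claimed. I do not expect a serious obstacle here; the only point requiring a little care is making sure the definitions~\eqref{eqfull}, \eqref{eq:C_beta, I}, \eqref{eq: phi_com} are invoked exactly as written so that the support-family of every tuple in group $\sigma$ is confirmed to be a subset of $\sigma$ — the construction in Appendix~\ref{appendix: C_beta,I} only ever assigns $\mathbf{C}_{\beta,\mathcal{I}}$ to labels $\sigma\supset\mathcal{I}$, which is precisely what the argument needs, so I would state that dependence explicitly when citing it.
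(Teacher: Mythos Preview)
Your proposal is correct and follows essentially the same approach as the paper's proof: both establish that $\alpha(\widetilde{\mathbf{\Phi}}_\sigma)=\mathcal{F}_\sigma$ by observing that $\widetilde{\mathbf{\Phi}}_\sigma^{(\mathrm{full})}$ already spans all of $\mathcal{F}_\sigma$ while $\widetilde{\mathbf{\Phi}}_\sigma^{(\mathrm{com})}$ only involves tuples whose support family is a subset of $\sigma$ and hence contributes no new files. Your write-up is in fact slightly more explicit than the paper's about the reverse inclusion $\mathcal{F}_\sigma\subseteq\alpha(\widetilde{\mathbf{\Phi}}_\sigma)$, which the paper absorbs into the one-line assertion $|\alpha(\widetilde{\mathbf{\Phi}}_\sigma^{(\mathrm{full})})|=s\cdot d$.
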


\begin{proof}
The proof of Lemma \ref{lem:lemma_pi_g=0} is provided in Appendix \ref{Proofoflem: pi_g=0}.
\end{proof}
We now proceed with the second case.

\subsection{Case 2: $k \nmid n $}
\label{subsec: Case2}
The scheme here will follow the general format of the scheme in Case 1 (Section~\ref{subsec: Case1}), except that we will modify some of our parameters, to adapt to the fact that $k \nmid n $. In particular, for $k,f$ given, as before, by
\begin{equation}
\label{eq:largest_r2}
    k = \max \left\{ r \in \mathbb{Z}^{+} \,\mid\, \binom{r}{d} \leq N \right\}, \ \ \ f=k
\end{equation}
and for \begin{equation}
\label{eq:s_0}
    s_0 = \left\lfloor \frac{n}{k + d} \right\rfloor + 1
\end{equation}
we will temporarily discard 
\begin{equation}
\label{eq:g,f}
    g= n - k \cdot s_0
\end{equation}
files, and will proceed under a consideration that the new number of files is \( n'= n -g \). With this new number of files $n'$ in place, we will again initially consider that we have
\(N' = \binom{k}{d}\) workers. In this second case, as discussed in Appendix~\ref{Appendix: limit on N}, we will accept the limitation that 
\textcolor{black}{\( N\le (\frac{9}{10}\sqrt{\frac{n}{d}})^{d}\)}.

For the above parameters \( n' = n-g,f, d \), and \( N' = \binom{f}{d} \), the first step is to employ the design of Section~\ref{subsec: Case1}, to partition \( \mathbf{A}_{n',d} \) into \( N' \) groups \(\widetilde{\mathbf{\Phi}}_{\sigma}^{\mathrm{(full)}}\cup \widetilde{\mathbf{\Phi}}_{\sigma}^{\mathrm{(com)}}, \ \sigma \in \binom{[f]}{d}\). The second step considers the excluded \footnote{For example, for $n=5, n'=4,d=2$, we have that $\mathbf{A}_{\mathrm{exc}} = \{15,25,35,45\}$.}\(d\)-tuples 
\begin{equation}
\label{eq:A_dis}
\mathbf{A}_{\mathrm{exc}} = \mathbf{A}_{n,d} \setminus \mathbf{A}_{n',d}
\end{equation}
as well as the set of \( g \) \emph{excluded elements} \( [n] \setminus [n'] \), which we here denote as
\begin{equation}
\label{eq:D}
    \mathcal{E}\triangleq\{n'+1, \dots, n'+g = n\}.
\end{equation}
Any \(d\)-tuple \(\mathbf{t} \in \mathbf{A}_{\mathrm{exc}}\) will have an arbitrary number \(m_{\mathbf{t}}=|\mathbf{t}  \cap \mathcal{E}|\) of components/elements from the excluded file-index set $\mathcal{E}$, and it will have  \(d-m_{\mathbf{t}} = |\mathbf{t}  \cap [n']|\) elements from the rest. 
It is easy to see that  \( m\in [1, \min\{d,g\}]\) and thus that \(d-m \in [\max\{d-g,0\},d-1]\).  Whenever there is no ambiguity, we will henceforth revert to the simpler notation \(m\) instead of $ m_{\mathbf{t}}$.

For every \(m\in [1, \min({d,g})]\), we define the set
 \begin{equation} 
    \label{eq:R_m,beta}
    \mathbf{R}_{m,\beta} \triangleq \left\{ \mathbf{t} \in \mathbf{A}_{\mathrm{exc}} \;\middle|\; b(\mathbf{t}) = \beta  ,\ |\mathbf{t}  \cap \mathcal{E}|=m \right\}
    \end{equation}
which describes the \(d\)-tuples \( \mathbf{t} \) that intersect exactly \( \beta \) families and contain \(m\) excluded elements from $\mathcal{E}$. Notice that \(\beta\) can take values in the range \(\big[\lceil\frac{d-m}{s_0}\rceil, d-m\big]\). If \(m=d\leq g\), then \(\beta =0\), which means that all the entries of \(\mathbf{t}\) are from \(\mathcal{E}\). 

Let us now partition \(\mathbf{A}_{\mathrm{exc}}\)  as follows
    \[
    \mathbf{A}_{\mathrm{exc}}=\bigcup_{m=1}^{\min\{d, g\}}\bigcup_{\beta=\lceil\frac{d-m}{s_0}\rceil}^{d-m} \mathbf{R}_{m,\beta}.
    \]

    For each $\mathcal{I}\in\binom{[f]}{\beta}$, let us now define 
 \begin{equation}
  \label{eq:R_beta,I}   
 \mathbf{R}_{\beta,\mathcal{I}}
 \triangleq \left\{\mathbf{t}\in\mathbf{A}_{\mathrm{exc}}\ \mid\ \mathcal{B}(\mathbf{t})=\mathcal{I},\ 
 \mathbf{t}\in\bigcup_{m=1}^{\min\{d-\beta,g\}}\mathbf{R}_{m,\beta}\right\}
  \end{equation}
 to be the set of all d-tuples $ \mathbf{t} \in \mathbf{A}_{\mathrm{exc}}$ that intersect exactly all families in $\mathcal{I}$, where in the above, $\mathcal{B}(\mathbf{t})$ denotes the set of families that $\mathbf{t}$ intersects. It is relatively easy to see that the cardinality \(|\mathbf{R}_{\beta,\mathcal{I}}|\) remains the same for any $\mathcal{I}\in\binom{[f]}{\beta}$ (see Appendix \ref{appendix: size of t_beta}).
 Let us now also define 
 \begin{equation}
 \label{eq:R_beta}
  \mathbf{R}_{\beta}\triangleq\bigcup_{\mathcal{I}\in\binom{[f]}{\beta}}\mathbf{R}_{\beta,\mathcal{I}} \subset \mathbf{A}_{\mathrm{exc}}
  \end{equation}
to be the set of all excluded \(d\)-tuples that meet exactly $\beta$ families.
Furthermore, directly by applying the established ranges of parameters \(m\) and \(k\), we can conclude that the range of \(\beta \in [\beta_{\mathrm{min}}, \beta_{\mathrm{max}}]\), is defined by
\begin{equation}
\label{eq:beta_min}
    \beta_{\min} \;\triangleq\; \left\lceil \frac{d-\min\{d,g\}}{s_0} \right\rceil
            \;=\; \left\lceil \frac{\max\{0,d-g\}}{s_0} \right\rceil,
\end{equation}
\begin{equation}
\label{eq:beta_max}
    \beta_{\max} \;\triangleq\; d-1.
\end{equation}
   Our next step involves going through the range of $\beta$. For each \(\beta \in [\beta_{\mathrm{min}}, \beta_{\mathrm{max}}]\), we partition each time the set \(\mathbf{R}_{\beta}\) into \( N'={f \choose d} \) groups.  This partitioning is described in detail in Appendix \ref{appendix: R_beta,I}. In particular, let us first recall that each group is labeled by a \(\sigma \in {[f] \choose d}\). For each such $\sigma$, there exist \({d \choose \beta}\) different subsets \(\mathcal{I} \subset\sigma\) with cardinality \(\beta\). For each \(\mathcal{I}\subset \sigma\), the set \(\mathbf{R}_{\beta, \mathcal{I}, \sigma}\) collects all \(d\)-tuples in \(\mathbf{R}_{\beta, \mathcal{I}}\) associated to group $\sigma$ --- again as described in Appendix \ref{appendix: R_beta,I}. We then form the union
   \begin{equation}
       \label{eq: phi_exc}
       \widetilde{\mathbf{\Phi}}_{\sigma}^{\mathrm{(exc)}}=\bigcup_{\beta=\beta_{\mathrm{min}}}^{\beta_{\mathrm{max}}}\bigcup_{\mathcal{I}\subset \sigma}\mathbf{R}_{\beta, \mathcal{I}, \sigma}
   \end{equation}
and then the union 
\[\mathbf{A}_{\mathrm{exc}}=\bigcup_{\sigma}\widetilde{\mathbf{\Phi}}_{\sigma}^{\mathrm{(exc)}}. \] Recall now that  \(\mathbf{A}_{n',d}\) is already partitioned as 
\[\mathbf{A}_{n',d} = \bigcup \limits_{\sigma\in \binom{[f]}{d}} \bigl(\widetilde{\mathbf{\Phi}}_{\sigma}^{\mathrm{(full)}}\cup\widetilde{\mathbf{\Phi}}_{\sigma}^{\mathrm{(com)}}\bigr) \]
and that each \(\widetilde{\mathbf{\Phi}}_{\sigma}^{\mathrm{(full)}}\cup\widetilde{\mathbf{\Phi}}_{\sigma}^{\mathrm{(com)}} \subset \mathbf{A}_{n',d}\) is placed in group $\sigma \in {[f] \choose d}$. Consequently, in the end, we have
\begin{equation} \label{eq:Partition1a}
\mathbf{A}_{n,d}=\mathbf{A}_{n',d}\cup \mathbf{A}_{\mathrm{exc}}=\bigcup_{\sigma}(\widetilde{\mathbf{\Phi}}_{\sigma}^{\mathrm{(full)}}\cup \widetilde{\mathbf{\Phi}}_{\sigma}^{\mathrm{(com)}}\cup\widetilde{\mathbf{\Phi}}_{\sigma}^{\mathrm{(exc)}})=\bigcup_{\sigma}\widetilde{\mathbf{\Phi}}_{\sigma}
\end{equation}
where again, $\widetilde{\mathbf{\Phi}}_{\sigma}$ is the entire set of $d$-tuples assigned, as a first step, to group $\sigma$, up to the $N'$th group.

The following lemma describes the communication cost associated to the above task allocation. We also briefly recall that the design in this second case applies for $N\leq (\frac{9}{10}\sqrt{\frac{n}{d}})^{d}$, 
as well as recall that $ s_0 = \left\lfloor \frac{n}{k + d} \right\rfloor + 1$ and $g= n - k \cdot s_0$.

\begin{lemma}
\label{lem:lemma_pi_g>0}
For \( n \), \( d\), the IC design in Case 2 (Section~\ref{subsec: Case2}) achieves
\[
\pi = s_0\cdot d +g.
\]
\end{lemma}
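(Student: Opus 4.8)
My plan is to bound $|\alpha(\widetilde{\mathbf{\Phi}}_\sigma)|$ for an arbitrary group $\sigma \in \binom{[f]}{d}$, and then argue the bound $s_0 \cdot d + g$ is tight. The group $\widetilde{\mathbf{\Phi}}_\sigma = \widetilde{\mathbf{\Phi}}_\sigma^{\mathrm{(full)}} \cup \widetilde{\mathbf{\Phi}}_\sigma^{\mathrm{(com)}} \cup \widetilde{\mathbf{\Phi}}_\sigma^{\mathrm{(exc)}}$ decomposes into three parts, so I would bound the file-index set contributed by each. The first two parts, $\widetilde{\mathbf{\Phi}}_\sigma^{\mathrm{(full)}} \cup \widetilde{\mathbf{\Phi}}_\sigma^{\mathrm{(com)}}$, live in $\mathbf{A}_{n',d}$ and by the Case~1 construction (applied with parameters $n'$, $f=k$, $s_0$, since here $n' = k \cdot s_0$ so $s_0 = n'/k$ plays the role of ``$s$'') contribute files only from the $d$ families $\mathcal{F}_j$, $j \in \sigma$; invoking Lemma~\ref{lem:lemma_pi_g=0} (with $n \leftarrow n'$, $s \leftarrow s_0$), these contribute exactly $s_0 \cdot d$ distinct file indices, namely $\mathcal{F}_\sigma = \bigcup_{j\in\sigma}\mathcal{F}_j$. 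It remains to control the extra files pulled in by $\widetilde{\mathbf{\Phi}}_\sigma^{\mathrm{(exc)}}$.

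For the excluded part, each $\mathbf{t} \in \widetilde{\mathbf{\Phi}}_\sigma^{\mathrm{(exc)}}$ has $m_{\mathbf{t}} \in [1,\min\{d,g\}]$ elements from $\mathcal{E} = \{n'+1,\dots,n\}$ and $d - m_{\mathbf{t}}$ elements from $[n']$. The elements from $\mathcal{E}$ can add at most $|\mathcal{E}| = g$ new file indices beyond $\mathcal{F}_\sigma$. For the $[n']$-components: by construction (Appendix~\ref{appendix: R_beta,I}), $\mathbf{t} \in \mathbf{R}_{\beta,\mathcal{I},\sigma}$ only when $\mathcal{I} = \mathcal{B}(\mathbf{t}) \subset \sigma$, so every $[n']$-component of $\mathbf{t}$ lies in $\mathcal{F}_\mathcal{I} \subseteq \mathcal{F}_\sigma$ — hence these introduce no new indices. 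Therefore $\alpha(\widetilde{\mathbf{\Phi}}_\sigma) \subseteq \mathcal{F}_\sigma \cup \mathcal{E}$, giving $|\alpha(\widetilde{\mathbf{\Phi}}_\sigma)| \le s_0 d + g$ for every $\sigma$, so $\pi \le s_0 d + g$.

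For the matching lower bound $\pi \ge s_0 d + g$, I would exhibit a single group $\sigma$ attaining it. Take any $\sigma$; the full part alone already forces all of $\mathcal{F}_\sigma$ (since $\widetilde{\mathbf{\Phi}}_\sigma^{\mathrm{(full)}}$ contains tuples with $\mathcal{B}(\mathbf{a})=\sigma$ hitting every family in $\sigma$, and in fact — as in the Case~1 argument — these tuples collectively cover every index in $\mathcal{F}_\sigma$ because $|\widetilde{\mathbf{\Phi}}_\sigma^{\mathrm{(full)}}| = s_0^d$ realizes all coordinate combinations). For the $\mathcal{E}$-part, I need that some group receives $d$-tuples using all $g$ excluded indices; the natural approach is to check that the allocation rule for $\mathbf{R}_{m,\beta}$ in Appendix~\ref{appendix: R_beta,I}, when restricted to a convenient $\sigma$ (e.g. $\sigma = \{1,\dots,d\}$ together with tuples of type $m=1$, $\beta=d-1$), places, for each excluded index $e \in \mathcal{E}$, at least one tuple $\{e\}\cup(\text{a }(d-1)\text{-subset hitting families in }\sigma)$ into that group — this needs $g \le$ (number of such tuples available per group), which the parameter constraint $N \le (\tfrac{9}{10}\sqrt{n/d})^d$ (equivalently the bound on $g$ from $s_0 = \lfloor n/(k+d)\rfloor + 1$) should guarantee, and I'd verify this counting in the appendix.

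The main obstacle I anticipate is the lower-bound direction, specifically showing that the Appendix~\ref{appendix: R_beta,I} allocation of $\mathbf{A}_{\mathrm{exc}}$ genuinely spreads all $g$ excluded file-indices onto (at least) one common group, rather than some adversarial leftover where a group misses an index; this is really a verification that the explicit combinatorial assignment of excluded tuples is ``balanced enough,'' and it hinges delicately on the definitions of $s_0$ and $g$ and on the bound $N \le (\tfrac{9}{10}\sqrt{n/d})^{d}$ ensuring $g$ is small relative to the per-group tuple budget. The upper bound, by contrast, should be a clean containment argument as sketched above.
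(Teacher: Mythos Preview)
Your upper-bound argument is essentially identical to the paper's: you show \(\alpha(\widetilde{\mathbf{\Phi}}_\sigma)\subseteq\mathcal{F}_\sigma\cup\mathcal{E}\) by (i) invoking Lemma~\ref{lem:lemma_pi_g=0} on the parameters \(n',f,s_0\) to get \(\alpha(\widetilde{\mathbf{\Phi}}_\sigma^{\mathrm{(full)}}\cup\widetilde{\mathbf{\Phi}}_\sigma^{\mathrm{(com)}})=\mathcal{F}_\sigma\), and (ii) noting that any \(\mathbf{t}\in\widetilde{\mathbf{\Phi}}_\sigma^{\mathrm{(exc)}}\) has its \([n']\)-components in \(\mathcal{F}_{\mathcal{I}}\subseteq\mathcal{F}_\sigma\) (since the allocation only uses \(\mathcal{I}\subset\sigma\)) and its remaining components in \(\mathcal{E}\).

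Where you differ is that you additionally sketch a lower bound \(\pi\ge s_0 d+g\). The paper's own proof, despite the ``\(=\)'' in the lemma statement, in fact only establishes \(\pi\le s_0 d+g\); the matching lower bound is never argued. Since the only downstream use of this lemma (in the proof of Lemma~\ref{lem: delta pi, A,n,d}) requires merely the upper bound, the equality is immaterial to the paper's results. Your lower-bound sketch is plausible but, as you anticipate, would require a nontrivial verification that the lexicographic allocation of Appendix~\ref{appendix: R_beta,I} actually places tuples touching every \(e\in\mathcal{E}\) into a common group; this is not addressed in the paper, and you need not pursue it.
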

\begin{proof}
The proof of Lemma \ref{lem:lemma_pi_g>0} is provided in Appendix \ref{Proofoflem: pi_g>0}.
\end{proof}
We continue with bounding the number of elements per group. We briefly recall that \( N'={f \choose d} \). 
     \begin{lemma}
         \label{lem:ub-lb2}
         In the IC design in Case 2 (Section~\ref{subsec: Case2}), the following bound holds
         \[
         \frac{{n \choose d}}{N'}-2^{d+1}+2d\le|\widetilde{\mathbf{\Phi}}_\sigma|\leq \frac{{n \choose d}}{N'}+2^{d+1}-2d.
         \]
     \end{lemma}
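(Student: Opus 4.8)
The plan is to split $\widetilde{\mathbf{\Phi}}_{\sigma}$ into the contribution coming from $\mathbf{A}_{n',d}$ and the contribution coming from the excluded tuples $\mathbf{A}_{\mathrm{exc}}=\mathbf{A}_{n,d}\setminus\mathbf{A}_{n',d}$, bound each piece separately around its ideal value, and add. Concretely, write $\widetilde{\mathbf{\Phi}}_{\sigma}=\bigl(\widetilde{\mathbf{\Phi}}_{\sigma}^{(\mathrm{full})}\cup\widetilde{\mathbf{\Phi}}_{\sigma}^{(\mathrm{com})}\bigr)\,\cup\,\widetilde{\mathbf{\Phi}}_{\sigma}^{(\mathrm{exc})}$ as a disjoint union, where the first bracket is exactly the group $\sigma$ produced by running the Case~1 construction on the file set $[n']$ with $n'=k\,s_0$, divisor $s=s_0$, $f=k$, and $N'=\binom{k}{d}$ groups (note $k\mid n'$, so Case~1 indeed applies verbatim). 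Lemma~\ref{lem:ub-lb}, applied with $n$ replaced by $n'$, then gives directly
\[
\Bigl|\,|\widetilde{\mathbf{\Phi}}_{\sigma}^{(\mathrm{full})}\cup\widetilde{\mathbf{\Phi}}_{\sigma}^{(\mathrm{com})}|\;-\;\tfrac{\binom{n'}{d}}{N'}\Bigr|\;\le\;2^{d}-d .
\]

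The second and main step is to prove the matching estimate $\bigl|\,|\widetilde{\mathbf{\Phi}}_{\sigma}^{(\mathrm{exc})}|-|\mathbf{A}_{\mathrm{exc}}|/N'\bigr|\le 2^{d}-d$. Here I would use the two structural facts already recorded: $|\mathbf{R}_{\beta,\mathcal I}|$ equals a common value $r_\beta$ for every $\mathcal I\in\binom{[f]}{\beta}$ (Appendix~\ref{appendix: size of t_beta}), and the assignment rule of Appendix~\ref{appendix: R_beta,I} distributes each $\mathbf{R}_{\beta,\mathcal I}$ as evenly as possible among the $\binom{f-\beta}{d-\beta}$ groups $\sigma$ with $\mathcal I\subset\sigma$. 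Using $|\widetilde{\mathbf{\Phi}}_{\sigma}^{(\mathrm{exc})}|=\sum_{\beta=\beta_{\min}}^{d-1}\sum_{\mathcal I\subset\sigma,\,|\mathcal I|=\beta}|\mathbf{R}_{\beta,\mathcal I,\sigma}|$, the counting identity $\binom{f}{\beta}\binom{f-\beta}{d-\beta}=\binom{f}{d}\binom{d}{\beta}$ identifies the ideal value as $\sum_{\beta}\binom{d}{\beta}\,r_\beta/\binom{f-\beta}{d-\beta}=|\mathbf{A}_{\mathrm{exc}}|/N'$, and the deviation is then controlled by summing the per-block rounding errors exactly as in the proof of Lemma~\ref{lem:ub-lb} (Appendix~\ref{proofoflem:ub}). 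The crucial point to verify is that, although the excluded partition ranges over $\beta\in[\beta_{\min},d-1]$ --- which may reach down to $\beta=0$ (the all-$\mathcal E$ block, when $g\ge d$) and so is not merely a sub-range of the $[\lceil d/s\rceil,d]$ range treated in Lemma~\ref{lem:ub-lb} --- the same bookkeeping still produces an aggregate rounding error of at most $2^{d}-d$, since it never includes a full-support ($\beta=d$) block and the extra low-$\beta$ blocks carry only the small binomial weights $\binom{d}{0},\binom{d}{1},\dots$

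Finally, since the two pieces of $\widetilde{\mathbf{\Phi}}_{\sigma}$ are disjoint and $\binom{n'}{d}+|\mathbf{A}_{\mathrm{exc}}|=\binom{n}{d}$, adding the two estimates yields
\[
\Bigl|\,|\widetilde{\mathbf{\Phi}}_{\sigma}|\;-\;\tfrac{\binom{n}{d}}{N'}\Bigr|\;\le\;2\bigl(2^{d}-d\bigr)\;=\;2^{d+1}-2d ,
\]
which is exactly the claimed two-sided bound. The step I expect to be the real obstacle is the second one: pinning the aggregate rounding error of $|\widetilde{\mathbf{\Phi}}_{\sigma}^{(\mathrm{exc})}|$ down to precisely $2^{d}-d$. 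This requires re-running --- and importing the sharper parts of --- the argument behind Lemma~\ref{lem:ub-lb}, in particular handling the shifted and possibly extended $\beta$-range ($\beta_{\min}=\lceil\max\{0,d-g\}/s_0\rceil$ rather than $\lceil d/s\rceil$), and checking that the additional low-$\beta$ blocks together with the absence of the $\beta=d$ block keep the total within the same constant. The remaining arithmetic --- the cardinality identity and the final addition --- is routine.
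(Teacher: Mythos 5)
Your proposal is correct and follows essentially the same route as the paper's proof: the disjoint decomposition into the Case-1 part on $[n']$ (bounded via Lemma~\ref{lem:ub-lb} with error $2^{d}-d$) plus the excluded part (bounded by the per-$(\beta,\mathcal I)$ rounding error of at most one tuple, weighted by $\binom{d}{\beta}$ and summed over $\beta\in[\beta_{\min},d-1]\subseteq[0,d-1]$ to give the same $2^{d}-d$ via the identity $\binom{d}{\beta}/\binom{f-\beta}{d-\beta}=\binom{f}{\beta}/\binom{f}{d}$), then added. The step you flag as the obstacle is handled in the paper exactly as you anticipate --- the shifted $\beta$-range never includes $\beta=d$, so the aggregate rounding error is bounded by $\sum_{\beta=0}^{d-1}\binom{d}{\beta}$, which the paper takes as the stated constant.
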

     \begin{proof}
         The proof is provided in Appendix~\ref{proofoflem:ub2}.
         \end{proof}
         
\subsection{Extension of the Partition from \(N'\) Groups to \(N\) Groups}
\label{subsec: N' to N}
Recall (cf.~\eqref{eq:Partition1a}) that we have already partitioned \(\mathbf{A}_{n,d}\) into \(N' = \binom{f}{d}\) disjoint groups
\[
\widetilde{\mathbf{\Phi}}_{\sigma_1},\dots,\widetilde{\mathbf{\Phi}}_{\sigma_{N'}},
\qquad \sigma_1,\dots,\sigma_{N'}\in\binom{[f]}{d}.
\]
We will here redistribute the \(d\)-tuples of these $N'$ groups across all existing $N$ groups. 

Towards this, let us assume that the indices \(\sigma_1,\dots,\sigma_{N'} \) are in lexicographic order and, in order to ease notation, let us rename the corresponding \(N'\) groups by their lexicographic position, as follows
\begin{equation}
  \widetilde{\mathbb{P}}\triangleq \{\widetilde{\mathbf{\Phi}}_1,\dots,\widetilde{\mathbf{\Phi}}_{N'} \}
\end{equation}
where in particular, \(\widetilde{\mathbf{\Phi}}_b=\widetilde{\mathbf{\Phi}}_{\sigma_b}\) for \(b \in[N']\). 
Recalling again that there are \(N\ge N'\) actual groups, let us first define the following variables
\begin{equation}
\label{eq: q,p,r}
  q\triangleq\Big\lfloor\frac{N}{N'}\Big\rfloor,\qquad
p\triangleq\Big\lceil\frac{N}{N'}\Big\rceil,\qquad
r\triangleq N\bmod N' \quad(0\le r<N')
\end{equation}
thus noting that \(
N=qN'+r,\) where \(p=q\ \text{if }r=0\), and  \(p=q+1\ \text{if }r>0.\) 

At this point, we proceed with the first step of dividing the \(d\)-tuple set of each of the first $N'$ groups into different parts, and then with the second step of redistributing some of these parts to fill up the empty $N-N'$ groups.
\paragraph*{Step 1 -- Dividing the \(d\)-tuples of each of the first $N'$ groups}   
For each \(b\in[N']\), we define the number of parts 
\[
s_b\triangleq\begin{cases}
p & \text{if } 1\le b\le r,\\[4pt]
q & \text{if } r<b\le N'
\end{cases}
\]
and we split each \(\widetilde{\mathbf{\Phi}}_b\) into \(s_b\) disjoint sub-parts using lexicographic ordering that yields slicing of equal sizes, plus\footnote{We keep track of the exact size of each sub-part.} or minus $1$. 
We denote these sub-parts by 
\[
\widetilde{\mathbf{\Phi}}_b^{(0)},\widetilde{\mathbf{\Phi}}_b^{(1)},\dots,\widetilde{\mathbf{\Phi}}_b^{(s_b-1)},
\qquad \widetilde{\mathbf{\Phi}}_b=\bigcup_{j=0}^{s_b-1}\widetilde{\mathbf{\Phi}}_b^{(b')}.
\]
\paragraph*{Step 2 -- Extending to \(N\) groups} 
We then relabel these sub-parts to obtain the desired \(N\) groups.  
We define the new \(N\) groups \(\widetilde{\mathbf{\Phi}}_1,\dots,\widetilde{\mathbf{\Phi}}_N\) by the indexing rule
\[
\widetilde{\mathbf{\Phi}}_{\,b + b'N'} \;\triangleq\; \widetilde{\mathbf{\Phi}}_b^{(b')},
\qquad\text{for } b\in\{1,\dots,N'\},\; b'\in\{0,\dots,s_b-1\}\] where we can very easily check the following.
\begin{enumerate}
   \item \emph{Indexing bijection:} The mapping \((b,b')\mapsto t=b+b'N'\) is a bijection between
  \(\{1,\dots,N'\}\times\{0,\dots,s_b-1\}\) and \(\{1,\dots,N\}\), so every new index \(t\in\{1,\dots,N\}\) corresponds to exactly one subpart. To see this, we simply note that index \(b+b'N'\) `runs across' all the integers \(1,\dots,N\) because when \(1\le b\le r\) we have \(b'\in\{0,\dots,p-1\}\) and the largest index is \(r+(p-1)N'=qN'+r=N\); when \(r<b\le N'\) we have \(b'\in\{0,\dots,q-1\}\) and the largest index is \(N' + (q-1)N' = qN' \le N\).
  \item \emph{Partition:} By construction, \(\widetilde{\mathbf{\Phi}}_1,\dots,\widetilde{\mathbf{\Phi}}_N\) are pairwise disjoint and cover 
\(\mathbf{A}_{n,d}\) thus guaranteeing the partition
  \begin{equation} \label{eq:PartitionBeforeLast}
  \mathbf{A}_{n,d}=\bigcup_{b=1}^N \widetilde{\mathbf{\Phi}}_b.
  \end{equation}
  This directly holds because previously the group set $\widetilde{\mathbf{\Phi}}_{\sigma_1},\dots,\widetilde{\mathbf{\Phi}}_{\sigma_{N'}},$ partitioned \(\mathbf{A}_{n,d}\), and because the redistribution (i.e., the transition from $N'$ to $N$ groups) guarantees, by design, that no entries from these first $N'$ groups appears twice.
 
\end{enumerate}
At this point, we have the following two lemmas; the main lemma being Lemma~\ref{lem: delta pi, A,n,d}, which needs the following lemma that bounds the variables $q,p,r$ from \eqref{eq: q,p,r}, as they pertain to the IC design. 
\begin{lemma} 
\label{lem:N,N'}
  For any \(n,d,N\), the IC design's use of $N' = \binom{k}{d}$ and $k = \max \left\{ r \in \mathbb{Z}^{+} \,\middle|\, \binom{r}{d} \leq N \right\}$ (cf.~ \eqref{eq:largest_r}), guarantees that 
  \[
   \frac{N}{N'}< d+1\le 2^d.
    \]
\end{lemma}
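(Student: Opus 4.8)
The plan is to exploit the defining maximality of $k$ in \eqref{eq:largest_r} together with the closed form for the ratio of consecutive binomial coefficients. First I would record the two facts that come for free from the definition of $k$: namely $\binom{k}{d}\le N$ and, crucially, $\binom{k+1}{d}>N$, the latter being precisely the assertion that $k$ is the \emph{largest} integer with $\binom{\cdot}{d}\le N$. Dividing the strict inequality $N<\binom{k+1}{d}$ by $N'=\binom{k}{d}$ then gives $\frac{N}{N'}<\frac{\binom{k+1}{d}}{\binom{k}{d}}$, so the whole lemma reduces to controlling this binomial ratio.

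Next I would evaluate that ratio. Cancelling factorials yields $\frac{\binom{k+1}{d}}{\binom{k}{d}}=\frac{k+1}{k+1-d}$, and this manipulation is legitimate because $k\ge d$: indeed $\binom{d}{d}=1\le N$ forces $k\ge d$ by maximality, so $k+1-d\ge 1>0$ and the denominator is a genuine positive integer. Hence $\frac{N}{N'}<\frac{k+1}{k+1-d}$. I would then bound the right-hand side by $d+1$ by writing $\frac{k+1}{k+1-d}=1+\frac{d}{k+1-d}\le 1+d$, using $k+1-d\ge1$; equivalently, cross-multiplying, $\frac{k+1}{k+1-d}\le d+1\iff k+1\le(d+1)(k+1-d)\iff d(d+1)\le d(k+1)\iff d\le k$, which we have. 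Chaining the two steps gives $\frac{N}{N'}<\frac{k+1}{k+1-d}\le d+1$, i.e. $\frac{N}{N'}<d+1$.

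Finally, the inequality $d+1\le 2^d$ is elementary: it holds with equality at $d=1$ and extends to all $d\ge1$ by a one-line induction, since $2^{d+1}=2\cdot 2^d\ge 2(d+1)=2d+2\ge d+2$. This completes the chain $\frac{N}{N'}<d+1\le 2^d$ claimed in the statement.

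There is no genuine obstacle here; the only two points that need a moment's care are (i) verifying $k\ge d$ — which is what makes the binomial ratio a finite positive fraction and the factorial cancellation valid, and which rests solely on $N\ge 1$ — and (ii) keeping track of where the chain is strict: the strictness of $\frac{N}{N'}<d+1$ comes entirely from $N<\binom{k+1}{d}$, whereas the bound $\frac{k+1}{k+1-d}\le d+1$ is itself tight exactly when $k=d$ (equivalently $N'=1$), so it must appear as a non-strict inequality in the chain.
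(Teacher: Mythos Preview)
Your argument is correct and follows essentially the same route as the paper: both use maximality of $k$ to get $N<\binom{k+1}{d}$, evaluate $\binom{k+1}{d}/\binom{k}{d}=\frac{k+1}{k+1-d}$, and then bound this by $d+1$ (the paper via monotonicity of $x/(x-d)$ on $[d+1,\infty)$, you via $1+\tfrac{d}{k+1-d}\le1+d$ using $k+1-d\ge1$). Your explicit verification that $k\ge d$ and your induction for $d+1\le2^d$ fill in details the paper leaves implicit, but the approach is the same.
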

\begin{proof}
The proof of Lemma \ref{lem:N,N'} is provided in Appendix \ref{proofoflem: N, N'}.
\end{proof}
From Case 1 and Case 2, we have the following upper bound guarantees on the \(\delta\) and \(\pi\) achieved by the IC design.
\begin{lemma}
\label{lem: delta pi, A,n,d}
Given \(n\), \(d \le \frac{n}{32}\), and \(N \le (\frac{9}{10}\sqrt{\frac{n}{d}})^{d} \), the partition $\mathbf{A}_{n,d}=\bigcup_{b=1}^N \widetilde{\mathbf{\Phi}}_b$ from \eqref{eq:PartitionBeforeLast}, guarantees 
\[
\delta=\frac{\max_{b\in[N]}|\widetilde{\mathbf{\Phi}}_b|}{\lceil{n \choose d}/N \rceil}\le 4
\]
and 
\[
\pi\le \frac{4e \cdot n}{N^{\frac{1}{d}}}.
\]
\end{lemma}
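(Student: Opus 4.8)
\textbf{Proof proposal for Lemma~\ref{lem: delta pi, A,n,d}.}

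The plan is to verify the two bounds separately, treating $\delta$ first and then $\pi$, in each case reducing to the two-case construction ($k\mid n$ versus $k\nmid n$) and then tracking the effect of the $N'\!\to\!N$ extension. For the computation cost $\delta$, I would start from Lemmas~\ref{lem:ub-lb} and~\ref{lem:ub-lb2}, which give $|\widetilde{\mathbf{\Phi}}_\sigma|\le \binom{n}{d}/N' + 2^{d+1}-2d$ for the groups after the first step (the Case~2 bound dominates the Case~1 one). The $N'\!\to\!N$ extension splits each such group into $s_b\in\{q,p\}$ nearly-equal slices, so each final group has size at most $\lceil |\widetilde{\mathbf{\Phi}}_b|/q\rceil \le |\widetilde{\mathbf{\Phi}}_b|/q + 1$. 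Using Lemma~\ref{lem:N,N'}, $N/N' < d+1$, so $q = \lfloor N/N'\rfloor \ge N/N' - 1 > N/N' \cdot (1 - N'/N)$; more usefully, $q \ge \max\{1, N/(N'(d+1))\cdot\text{something}\}$ — I would instead argue directly that $\max_b |\widetilde{\mathbf{\Phi}}_b^{(b')}| \le \binom{n}{d}/N + O(2^d)$ by noting that the total mass $\binom{n}{d}$ is spread across $N$ slices that differ in size by at most a bounded amount inherited from the $\pm 1$ slicing and the $2^{d+1}$ slack per original group. Then dividing by $\lceil \binom{n}{d}/N\rceil$ and invoking $N \le (\tfrac{9}{10}\sqrt{n/d})^d$ to show the additive $O(2^d)$ term is dominated by $\binom{n}{d}/N$ (so the ratio is at most, say, $1 + 3 \le 4$) closes the first bound. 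The condition $d \le n/32$ enters here to control $2^d$ relative to $\binom{n}{d}/N$.

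For the communication cost $\pi$, the key observation is that the $N'\!\to\!N$ extension only \emph{shrinks} groups (each $\widetilde{\mathbf{\Phi}}_b^{(b')}\subseteq\widetilde{\mathbf{\Phi}}_b$), so $\alpha(\widetilde{\mathbf{\Phi}}_b^{(b')})\subseteq\alpha(\widetilde{\mathbf{\Phi}}_b)$ and hence $\pi$ after extension is at most $\pi$ before extension. Thus it suffices to bound $\pi$ for the $N'$-group partition, which is exactly what Lemmas~\ref{lem:lemma_pi_g=0} and~\ref{lem:lemma_pi_g>0} provide: $\pi = sd$ in Case~1 and $\pi = s_0 d + g$ in Case~2. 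So the remaining work is purely to show that both of these quantities are at most $\tfrac{4e\,n}{N^{1/d}}$. In Case~1, $s = n/k$ with $k = \max\{r : \binom{r}{d}\le N\}$, so $\binom{k}{d}\le N$ gives $k^d/d^d \le \binom{k}{d}\cdot(\text{lower bound factor})$... more carefully, $\binom{k+1}{d} > N$ forces $k+1 > $ (roughly) $d\cdot N^{1/d}/e$ via $\binom{k+1}{d} \le (e(k+1)/d)^d$, hence $k \ge d N^{1/d}/e - 1$, and then $s = n/k \le \tfrac{n}{dN^{1/d}/e - 1}$, giving $\pi = sd \le \tfrac{nd}{dN^{1/d}/e-1} \le \tfrac{4e\,n}{N^{1/d}}$ once $N$ is large enough relative to $d$ (the constant $4$ absorbs the $-1$ and any rounding). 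In Case~2, $s_0 = \lfloor n/(k+d)\rfloor + 1 \le n/(k+d) + 1$ and $g = n - k s_0 < k + s_0 \le k + n/(k+d)+1$, so $\pi = s_0 d + g \le \tfrac{dn}{k+d} + d + k + \tfrac{n}{k+d} + 1$; substituting $k \asymp d N^{1/d}/e$ and using $N \le (\tfrac{9}{10}\sqrt{n/d})^d$ (which ensures $k \le \tfrac{9}{10}\sqrt{n/d} \ll \tfrac{n}{k+d}$, i.e.\ the $k$ term is lower order) again yields $\pi \le \tfrac{4e\,n}{N^{1/d}}$.

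The main obstacle I anticipate is the bookkeeping in the $\delta$ bound: one must simultaneously control the $\pm 1$ slicing error accumulated over the extension, the $2^{d+1}-2d$ slack from Lemma~\ref{lem:ub-lb2}, \emph{and} the ceiling in the denominator $\lceil\binom{n}{d}/N\rceil$, and show the worst case over \emph{all} $N$ final groups (not just the "typical" ones of size $\binom{n}{d}/N$) stays below $4$. The cleanest route is probably to prove the uniform bound $\max_b|\widetilde{\mathbf{\Phi}}_b| \le \lceil\binom{n}{d}/N\rceil + c\cdot 2^d$ for an explicit small $c$, then write $\delta \le 1 + c\cdot 2^d/\lceil\binom{n}{d}/N\rceil$ and verify $c\cdot 2^d \cdot N \le 3\binom{n}{d}$ using the two hypotheses $d\le n/32$ and $N \le (\tfrac{9}{10}\sqrt{n/d})^d$; the latter gives $N \le (\tfrac{9}{10})^d (n/d)^{d/2}$ while $\binom{n}{d}\ge (n/d)^d$, so $N/\binom{n}{d} \le (\tfrac{9}{10})^d (d/n)^{d/2}$ decays fast enough to kill $2^d$ when $d \le n/32$. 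For $\pi$, by contrast, the argument is essentially a one-line monotonicity reduction followed by elementary manipulation of the binomial-coefficient estimate $(k/d)^d \le \binom{k}{d} \le (ek/d)^d$, so I expect that half to be routine.
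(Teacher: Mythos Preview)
Your overall strategy matches the paper's: for $\delta$, start from Lemmas~\ref{lem:ub-lb} and~\ref{lem:ub-lb2} and track the $N'\!\to\!N$ splitting; for $\pi$, observe that splitting only shrinks groups (so $\pi$ does not increase) and then bound $\pi$ for the $N'$-group partition via Lemmas~\ref{lem:lemma_pi_g=0} and~\ref{lem:lemma_pi_g>0} together with the estimate $(k/d)^d\le\binom{k}{d}\le(ek/d)^d$. However, two of your concrete bounds are false as stated.

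\textbf{The $\delta$ bound.} Your ``cleanest route'' asserts $\max_b|\widetilde{\mathbf{\Phi}}_b|\le\lceil\binom{n}{d}/N\rceil+c\cdot 2^d$, aiming for $\delta\le 1+\text{(small)}$. This fails: take $N=2N'-1$, so $q=\lfloor N/N'\rfloor=1$ and the last of the $N'$ original groups is not split at all. Its size is $\approx\binom{n}{d}/N'\approx 2\binom{n}{d}/N$, which exceeds $\lceil\binom{n}{d}/N\rceil$ by an amount of order $\binom{n}{d}/N$, not $O(2^d)$. The paper instead bounds $\max_b|\widetilde{\mathbf{\Phi}}_b|\le\binom{n}{d}/(N-r)+2^{d+1}$ (using $qN'=N-r$), then writes
\[
\delta\;\le\;\frac{N}{N-r}+\frac{2^{d+1}N}{\binom{n}{d}}
\]
and controls the first term by $\tfrac{N/N'}{\lfloor N/N'\rfloor}\le 2$. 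The second term is bounded by $2Z$ where $Z=2^d/(n/d)^{d/2}$; the hypotheses $N\le(n/d)^{d/2}$ and $d\le n/32$ force $Z\le 1$, yielding $\delta\le 2+2=4$. So the ``$4$'' arises as $2+2$, not $1+3$.

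\textbf{The $\pi$ bound in Case~2.} You write $g=n-ks_0<k+s_0$. This is false: with $n=200$, $d=5$, $k=10$ (which satisfies $d\le n/32$ and $k\in\mathcal{K}_{\mathrm{valid}}$), one gets $s_0=14$ and $g=60>24=k+s_0$. The paper instead uses the bound $g\le s_0 d$ (established when setting up Case~2), whence $\pi=s_0d+g\le 2s_0d$. Then $N'=\binom{n'/s_0}{d}\le(en'/(s_0d))^d$ gives $s_0 d\le en'/N'^{1/d}$, and combining with $n'\le n$ and $(N/N')^{1/d}\le 2$ from Lemma~\ref{lem:N,N'} yields $\pi\le 4en/N^{1/d}$.
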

\begin{proof}
    The proof is found in Appendix~\ref{Proofoflem: delta pi, A,n,d}.
\end{proof}

\subsection{Partition Refinement }
\label{sec: Random task}
In the above, we have provided a partition \(\widetilde{\mathbb{P}} = \{ \widetilde{\mathbf{\Phi}}_1, \widetilde{\mathbf{\Phi}}_2, \ldots, \widetilde{\mathbf{\Phi}}_N \}\) 
of \(\mathbf{A}_{n,d}\), and offered guarantees on the achievable $\pi$ and $\delta$.  Now we need to consider the actual subfunction set \(\mathbf{X} \subseteq \mathbf{A}_{n,d}\), which means that we need to partition $\mathbf{X}$, and this partition will yield the file allocation (corresponding to a cost $\pi_{\mathbf{X}}$), as well as the subfunction allocation (corresponding to a cost $\delta_{\mathbf{X}}$). The file allocation is automatic; we simply maintain the exact file allocation suggested by the above partition of \(\mathbf{A}_{n,d}\) and thus maintain the exact same $\pi$. The subfunction allocation across the $N$ servers will be defined by the new partition, now of $\mathbf{X}$, and this is simply the intersection of the above partition $\widetilde{\mathbb{P}}$ with $\mathbf{X}$. While this step is simple enough, it runs the risk of yielding a very large $\delta_\mathbf{X}$, depending on $\mathbf{X}$. Thus, our main effort now will be to bound $\delta_{\mathbf{X}}$.

To analyze this, we can view \(\mathbf{X}\) as a realization obtained from a random thinning of \(\mathbf{A}_{n,d}\), where each \(d\)-tuple \(\mathbf{a} \in \mathbf{A}_{n,d}\) 
is independently included in \(\mathbf{X}\) with probability \(\varphi=|\mathbf{X}| / \binom{n}{d}\). Let us proceed to bound $\delta_{\mathbf{X}}$.

Recalling the original partition 
\(\widetilde{\mathbb{P}} = \{ \widetilde{\mathbf{\Phi}}_1, \widetilde{\mathbf{\Phi}}_2, \ldots, \widetilde{\mathbf{\Phi}}_N \}\) 
of \(\mathbf{A}_{n,d}\), we will now assign to each server $b\in [N]$, the entries of the group
\begin{equation}
    \label{eq: sampled group size}
    \mathbf{\Phi}_{b}=\widetilde{\mathbf{\Phi}}_b \cap \mathbf{X}
\end{equation}
based on which we define 
\begin{equation}
\label{eq: random, mu_max}
\mu_b\triangleq| \mathbf{\Phi}_{b}|, \ b\in[N], \ \ \mu_{\min} \triangleq \min_{b\in[N]}\mu_b, \quad \mu_{\max} \triangleq \max_{b\in[N]}\mu_b
\end{equation}
thus rewriting 
\begin{equation}
\label{eq: random, delta_x}
 \delta_{\mathbf{X}}  =  \frac{\max_{b} \mu_b}{\lceil|\mathbf{X}|/N\rceil}=\frac{\mu_{\mathrm{max}}}{\lceil(\varphi\cdot {n \choose d})/N\rceil} .  
\end{equation}

\textcolor{black}{We proceed with the following lemma, which pertains to the IC design partition of \(\mathbf{X}\), and which holds for all \(\varphi\ge \varphi_{\mathrm{min}} \approx \frac{ \ln n}{n^{d/2}}\) (see later cf.\eqref{eq:varphi_min}). Note also that $\varphi_{\mathrm{min}}$ is also upper bounded by 
$C_d\cdot\frac{ \ln n}{n^{d/2}}$, where \(C_d=192 \cdot (e\cdot d)^{d/2}\).} 

\begin{lemma}
\label{lem: conc on group size}
For \(n\ge 32d\), \(N\le (\frac{9}{10}\sqrt{\frac{n}{d}})^{d} \), and 
 $\varphi\ge \varphi_{\mathrm{min}} ,$ then with probability at least \(1-\tfrac1n\), we have 
\[
\max_{b \in [N]}|\mathbf{\Phi}_b| \;\le\; \frac{5}{4}\varphi \max_{b \in [N]}|\widetilde{\mathbf{\Phi}}_b|
\]and thus we have 
\[
\delta_{\mathbf{X}}\le 5.
\]
\end{lemma}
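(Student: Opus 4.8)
The plan is to establish the claimed concentration on group sizes by a union bound over the $N$ servers, combining an upper‑tail Chernoff bound for each $\mu_b = |\widetilde{\mathbf{\Phi}}_b \cap \mathbf{X}|$ with the structural lower bound on $\max_b |\widetilde{\mathbf{\Phi}}_b|$ coming from the preceding lemmas. Fix a server $b$. Since each $d$‑tuple of $\widetilde{\mathbf{\Phi}}_b$ is kept independently with probability $\varphi$, the count $\mu_b$ is a sum of independent Bernoulli's with mean $\varphi|\widetilde{\mathbf{\Phi}}_b|$. A multiplicative Chernoff bound gives $\Prob\bigl(\mu_b \ge (1+\epsilon)\varphi|\widetilde{\mathbf{\Phi}}_b|\bigr) \le \exp\bigl(-c_\epsilon\,\varphi|\widetilde{\mathbf{\Phi}}_b|\bigr)$ for a constant $c_\epsilon>0$; taking $\epsilon = \tfrac14$ targets the factor $\tfrac54$ in the statement. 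The key quantitative input is that $|\widetilde{\mathbf{\Phi}}_b|$ is not too small: by Lemmas~\ref{lem:ub-lb}, \ref{lem:ub-lb2} and the $N'\!\to\!N$ redistribution of Section~\ref{subsec: N' to N} (with the bound $N/N' < d+1$ from Lemma~\ref{lem:N,N'}), each $|\widetilde{\mathbf{\Phi}}_b|$ is of order $\binom{n}{d}/N$, up to the additive $O(2^d)$ slack and the at‑most‑factor‑$(d+1)$ loss from slicing. One then checks that $\varphi_{\mathrm{min}} = \tfrac{96 N\log(2Nn)}{\binom{n}{d} - 2^{d+2}N}$ is precisely calibrated so that $\varphi_{\mathrm{min}}\cdot|\widetilde{\mathbf{\Phi}}_b| \ge \tfrac{1}{c_{1/4}}\log(2Nn)$ uniformly in $b$ — the $96$ and the $2^{d+2}N$ correction in the denominator are exactly what absorb the Chernoff constant and the $O(2^d)$ additive slack in $|\widetilde{\mathbf{\Phi}}_b|$.

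Concretely, I would proceed as follows. First, record the lower bound $\min_b |\widetilde{\mathbf{\Phi}}_b| \ge \tfrac{\binom{n}{d}}{N} - 2^{d+1}$ (or the analogous clean form implied by the earlier lemmas after the $N'\to N$ split), valid under $n\ge 32d$ and $N\le(\tfrac{9}{10}\sqrt{n/d})^d$; the latter constraint guarantees $N' = \binom{k}{d}$ is large enough that the slicing in Section~\ref{subsec: N' to N} does not crush any group below the stated order, and also that $2^{d+2}N < \binom{n}{d}$ so $\varphi_{\mathrm{min}}$ is well‑defined and positive. Second, apply the Chernoff bound with $\epsilon=\tfrac14$ to each server and show $\exp(-c_{1/4}\varphi|\widetilde{\mathbf{\Phi}}_b|) \le \tfrac{1}{2Nn}$ whenever $\varphi\ge\varphi_{\mathrm{min}}$, using the lower bound on $|\widetilde{\mathbf{\Phi}}_b|$ and the explicit constant in $\varphi_{\mathrm{min}}$. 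Third, union bound over $b\in[N]$ to get $\Prob\bigl(\exists b:\ \mu_b > \tfrac54\varphi|\widetilde{\mathbf{\Phi}}_b|\bigr) \le \tfrac{1}{2n} \le \tfrac1n$, which yields $\max_b|\mathbf{\Phi}_b| \le \tfrac54\varphi\max_b|\widetilde{\mathbf{\Phi}}_b|$ on the complementary event.

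Finally, to convert this into $\delta_{\mathbf{X}}\le 5$, I would chain it with Lemma~\ref{lem: delta pi, A,n,d}, which gives $\max_b|\widetilde{\mathbf{\Phi}}_b| \le 4\lceil\binom{n}{d}/N\rceil$. Then on the good event, using~\eqref{eq: random, delta_x},
\[
\delta_{\mathbf{X}} = \frac{\max_b|\mathbf{\Phi}_b|}{\lceil(\varphi\binom{n}{d})/N\rceil} \le \frac{\tfrac54\varphi\cdot 4\lceil\binom{n}{d}/N\rceil}{\lceil(\varphi\binom{n}{d})/N\rceil} = \frac{5\,\varphi\lceil\binom{n}{d}/N\rceil}{\lceil\varphi\binom{n}{d}/N\rceil},
\]
and a short estimate shows $\varphi\lceil\binom{n}{d}/N\rceil \le \lceil\varphi\binom{n}{d}/N\rceil$ up to the rounding — more carefully, since $\varphi\le 1$ one has $\varphi\lceil M\rceil \le \lceil\varphi M\rceil + 1$ and a lower bound on $\varphi\binom{n}{d}/N$ (forced by $\varphi\ge\varphi_{\mathrm{min}}$, which makes $\varphi\binom{n}{d}/N \gtrsim \log(2Nn)\gg 1$) makes the ceiling overhead negligible, so the ratio is at most $5$. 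The main obstacle I anticipate is not the Chernoff step itself but the bookkeeping that ties the explicit constant $96$ and the $-2^{d+2}N$ denominator correction in $\varphi_{\mathrm{min}}$ to the simultaneous requirements that (i) $\varphi|\widetilde{\mathbf{\Phi}}_b|$ exceed the threshold needed for a $\tfrac{1}{2Nn}$ failure probability for \emph{every} $b$ despite the $O(2^d)$ additive variation in group sizes and the factor‑$(d+1)$ loss from the $N'\to N$ slicing, and (ii) the ceiling‑function gap in passing from $\max_b|\mathbf{\Phi}_b|$ to $\delta_{\mathbf{X}}$ stays within the budget; handling the small‑$\varphi$ edge of the range (where $\varphi\binom{n}{d}/N$ is only logarithmically large) is where the constants are tightest.
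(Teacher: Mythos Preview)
Your proposal is correct and follows essentially the same route as the paper: Chernoff with $\varepsilon=\tfrac14$ on each $\mu_b$, a union bound over the $N$ groups, and the structural lower bound on $\min_b|\widetilde{\mathbf{\Phi}}_b|$ from the earlier lemmas to verify that $\varphi\ge\varphi_{\mathrm{min}}$ drives the failure probability below $1/n$, followed by $\delta_{\mathbf{X}}\le\tfrac54\,\delta\le 5$ via Lemma~\ref{lem: delta pi, A,n,d}. One small correction: the clean lower bound after the $N'\!\to\!N$ slicing is $\min_b|\widetilde{\mathbf{\Phi}}_b|\ge \tfrac{\binom{n}{d}}{2N}-2^{d+1}$ (using $pN'\le N+N'\le 2N$), not $\tfrac{\binom{n}{d}}{N}-2^{d+1}$; that extra factor of~$2$ is exactly what turns the Chernoff constant $48$ into the $96$ in $\varphi_{\mathrm{min}}$, and your hedged ``analogous clean form'' already anticipates this.
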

 This concludes the proof of the main theorem, showing that the IC design, both for Case 1 and Case 2, guarantees \(\delta_{\mathbf{X}}\le 5\) with probability \(1-\tfrac1n\), while also recalling that it guarantees \(\pi_{\mathbf{X}}\le  \frac{4e\cdot n}{N^{1/d}}\).

\section{Comparison}
\label{sec:comparison}
In this section, we compare the performance of the proposed IC design, with existing graph partitioning algorithms in terms of their average replication factor guarantees. 
Although \(\mathrm{ARF}\) minimization is not the primary objective of the IC design, we have seen that 
\begin{equation}\label{eq:piToARF}
    \mathrm{ARF} \leq \frac{N\pi}{n}
\end{equation}
    and thus what we minimize here is effectively an upper bound on the \(\mathrm{ARF}\). 
To the best of our knowledge, theoretical guarantees on achievable \(\mathrm{ARF}\) are limited to the case of $d=2$. We restrict our comparisons to this case, and focus on the state-of-art results reported in~\cite{Dynamic,TrillionEdges,Zhang2017GraphEdgePartitioning,ProjectivePlane}. 

For the IC design, for Case~1 (where the divisibility conditions of Lemma~\ref{lem:lemma_pi_g=0} hold), we have 
$\pi = sd = 2n/k$ (recall Lemma~\ref{lem:lemma_pi_g=0} and recall that $s=n/k$),  where we also recall that $k$ is such that $N=\binom{k}{2}$, which means that
\(
k = \frac{1+\sqrt{\,8N+1\,}}{2}, 
\)
which means that $\pi = \frac{4n}{1+\sqrt{8N+1}}<\frac{\sqrt{2}n}{\sqrt{N}}
,$ which finally says (cf.~\eqref{eq:piToARF}) that
\[
\mathrm{ARF}^{(\mathrm{IC})} < \sqrt{2N}.
\]
Similarly, for Case 2, 
recall that Lemma~\ref{lem:lemma_pi_g>0} gives $\pi = s_{0}d + g$, where $s_{0}=\lfloor n/(k+d)\rfloor+1$ and $g=n-s_{0}k$. Combined with  
\(
\frac{k(k-1)}{2} \le N \le \frac{k(k+1)}{2}
\)
i.e., combined with  
\[
\frac{-1+\sqrt{\,8N+1\,}}{2} \;\le\; k \;\le\; \frac{1+\sqrt{\,8N+1\,}}{2}
\]
and after setting $k=\frac{-1+\sqrt{\,8N+1\,}}{2}$, we obtain
\[
\pi \le \frac{4n}{k+2}+2-k
\le \frac{2\sqrt{2}\,n}{\sqrt{N}} + \frac{5-\sqrt{8N+1}}{2}
\]
and since  \(\frac{5-\sqrt{8N+1}}{2}\leq 0\) for all $N \geq 3$, we get $\pi \le \frac{2\sqrt{2}\,n}{\sqrt{N}}$
which, for every \(N \geq 3\), gives 
\begin{equation} \label{eq:comparisonCase2}
\mathrm{ARF}^{(\mathrm{IC})} \le 2\sqrt{2N}.
\end{equation}

Let us now compare $\mathrm{ARF}^{(\mathrm{IC})}$ with the guarantee on the achievable \(\mathrm{ARF}^{(\mathrm{Dynamic})}\) of the algorithms in \cite{Dynamic,TrillionEdges}, from where we know that 
\[
\mathrm{ARF}^{(\mathrm{Dynamic})} \le \frac{n + |\mathbf{X}| + N}{n}.
\]
Equating the general-case bound of $\mathrm{ARF}^{(\mathrm{IC})}$ in \eqref{eq:comparisonCase2} with $\mathrm{ARF}^{(\mathrm{Dynamic})}$, and solving for
$\varphi = |\mathbf{X}|/\binom{n}{2}$, yields
\begin{equation} \label{eq:comparisonDynamic1}
    \mathrm{ARF}^{(\mathrm{IC})} \le \mathrm{ARF}^{(\mathrm{Dynamic})}
\quad\Longleftrightarrow\quad
\varphi \ge \frac{2}{n-1}\!\left(2\sqrt{2N} - 1 - \frac{N}{n}\right)
\end{equation}
where we emphasize one more time that this above condition on \(\varphi\) is a result of comparing existing \emph{guarantees} on achievable performance; the above precisely characterizes when our IC design gives a better guarantee on the achievable \(\mathrm{ARF}\) than the guarantees of the schemes in \cite{Dynamic,TrillionEdges}. Let us recall that for the case of $d=2$ here, the smallest $\varphi$ scales with $n^2$. Looking at~\eqref{eq:comparisonDynamic1}, we can readily conclude that when \(N\) is small compared to \(n\) (in which case, the expression \(2\sqrt{2N} - 1 - \tfrac{N}{n}\) becomes less significant), the IC design performs better for all $\varphi$ that scale bigger than $1/n$, whereas, on the other extreme of having $N\approx n$ (this is an extreme because, for $d=2$, we ask for \(N\leq (\frac{9}{10}\sqrt{\frac{n}{2}})^2<n\)), we can see that the IC design performs better for all $\varphi$ that scale bigger than \(1/\sqrt{n}\).

Similar performance is also recorded in \cite{Zhang2017GraphEdgePartitioning}, which presents an algorithm that is based on a neighborhood heuristic, and which has an achievable \(\mathrm{ARF}\) that satisfies
\[
\mathrm{ARF}^{(\mathrm{NH})} \le 2\sqrt{N} + \frac{N}{n}
\]
which comes close to the IC design's guarantees of $\mathrm{ARF}^{(\mathrm{IC})} < \sqrt{2N}$ and $\mathrm{ARF}^{(\mathrm{IC})} < 2\sqrt{2N}$, respectively for Case~1 and Case~2.

Finally, it is interesting to compare to the projective-plane-based construction of \cite{ProjectivePlane}, which is though applicable only when $N = q^{2}+q+1$ for a prime power $q$, and for which the guaranteed 
\[
1.5\sqrt{N} \le \mathrm{ARF}^{(\mathrm{Projective})} \le 2\sqrt{N}
\]
again comes close to our own guarantees.

To conclude, we emphasize that our primary design objective is to minimize \(\pi\); \(\mathrm{ARF}\) is a derived metric, yet even under this secondary measure, the IC design offers guarantees that are competitive with --- and often superior to --- the guarantees for the existing edge partitioning schemes, for a broad range of parameters.  

\section{Conclusions}
\label{sec: conclu}
This paper introduced the Interweaved Clique design, a deterministic construction for data and task allocation in a general distributed computing setting. In a setting where the desired function is modeled as admitting a decomposition into subfunctions indexed by a set $\mathbf{X} \subseteq \mathbf{A}_{n,d}$ --- where each element of $\mathbf{X}$ corresponds to a $d$-tuple of the input files of a subfunction --- the IC design provides an explicit partition of $\mathbf{X}$ across $N$ workers and yields a communication cost $\pi_{\mathbf{X}}$ whose scaling is optimal for all $\mathbf{X}$ having non-vanishing normalized size $\varphi$. The same construction also controls the computation-delay parameter $\delta_{\mathbf{X}}$, and for all task sets with non-vanishing normalized size, this delay remains bounded by a constant with high probability. 

In addition to the IC design, we have a converse which, albeit simple, it is also tight in the sense that it is the tightest (largest) lower bound that holds true for all $\mathbf{X}$. In particular, for given \(\varphi \in (0,1]\), the deterministic lower bound \(\pi_{lb} \triangleq \varphi^{1/d}\frac{n}{N^{1/d}}\) is tight in a sense that there exists a \(\mathbf{X} \subseteq{[n] \choose d}\) that achieve \(\pi_{\mathbf{X}} \asymp \pi_{lb} \). Intuitively, such \(\mathbf{X}\) is a union of \(N\) disjoint cliques where each clique \(\mathcal{C}_i\), \(i \in [N]\) formed by at most \(\pi_{lb}\) elements such that each pair of these cliques can have \(d-1\) elements in common (partition condition) , i.e.,  \(|\mathcal{C}_i\cap\mathcal{C}_j|\le d-1\). This confirms that when \(\mathbf{X}\) is structured as a collection of dense, disjoint cliques (\(|\mathcal{C}_i\cap\mathcal{C}_j|\le d-1\)), the maximum number of digits per group meets the theoretical minimum. 

Together, these results characterize the \emph{order-optimal} communication and computation performance, $\pi_{\mathbf{X}}$ and $\delta_{\mathbf{X}}$, achievable under file and task assignments with a fixed subfunction degree~$d$. It is worth commenting here that this choice of $\pi_{\mathbf{X}}$ as the communication metric differs from the average replication factor commonly used in the hypergraph-partitioning literature. Recall that \(\mathrm{ARF}\) measures the average number of workers to which each file is replicated, whereas $\pi_{\mathbf{X}}$ measures the maximum number of files transmitted to any worker and thus captures the worst-case load on the master--worker links in systems with parallel, equal-capacity communication channels. This distinction becomes critical when computational-delay $\delta_{\mathbf{X}}$ constraints are absent: \(\mathrm{ARF}\) becomes degenerate, as one may assign all files and all subfunctions to a single worker, yielding $\mathrm{ARF}=1$ while incurring $\pi_{\mathbf{X}} = n$, the largest possible communication cost. Hence, \(\mathrm{ARF}\) does not reflect the operational communication bottleneck in such settings. In contrast, $\pi_{\mathbf{X}}$ remains meaningful irrespective of any computation-delay considerations and directly characterizes the worst-case communication load.

Let us also recall certain structural properties of the IC design which facilitate practical deployment. Firstly, the construction is deterministic and does not rely on search-based procedures, keeping implementation complexity minimal. Secondly, and very importantly, the designed file allocation remains independent of the particular task decomposition $\mathbf{X}$. Once files are placed across the $N$ workers, the same placement can be used for any $\mathbf{X} \subseteq \mathbf{A}_{n,d}$; only the subfunction assignment changes from one task to another. As a result, introducing a new task set does not require relocating files across workers, and the performance guarantees remain valid across these sets.   

In the end, the near-unified nature of the design, and its simplicity, allows for the first time, for the derivation of fundamental limits that hold for pertinent parameter regimes, and which yield tight scaling laws for a broad range of scenarios of non-vanishingly small $\mathbf{X}$.

\begin{appendices}
\label{sec: appendixs}

\section{Proof of Lemmas}
\label{proofoflemmas}

\subsection{Proof of Lemma \ref{lem:ub-lb}}
\label{proofoflem:ub}
From \eqref{eq:SubfunctionCase1} in Section \ref{subsec: Case1}, we recall that for each \(\sigma \in {[f] \choose d}\), the set $\widetilde{\mathbf{\Phi}}_{\sigma}$ is partitioned as $\widetilde{\mathbf{\Phi}}_{\sigma}=\widetilde{\mathbf{\Phi}}_{\sigma}^{(\mathrm{full})}\cup \widetilde{\mathbf{\Phi}}_{\sigma}^{\mathrm{(com)}}$, where $\widetilde{\mathbf{\Phi}}_{\sigma}^{(\mathrm{full})}$ consists of \emph{full-support} \(d\)-tuples, and where \(|\widetilde{\mathbf{\Phi}}_\sigma^{\mathrm{(full)}}|=s^d\). Also recall (from \eqref{eq: phi_com}) that \(\widetilde{\mathbf{\Phi}}_\sigma^{\mathrm{(com)}} = \bigcup_{\beta=\lceil \frac{d}{s}\rceil}^{d-1}\bigcup_{\mathcal{I}\subset \sigma} \mathbf{C}_{\beta, \mathcal I, \sigma}\). Let us also remember that for any \(\beta \in[\lceil \frac{d}{s}\rceil,  d-1]\), each group \(\sigma \in {[f] \choose d}\) has exactly \({d \choose \beta}\) distinct subsets \(\mathcal I\) of cardinality \(\beta\), and that for each such \(\mathcal{I} \subset \sigma\), the cardinality of \(\mathbf{C}_{\beta, \mathcal I, \sigma}\) is either \(q_{\beta}+1\) or \(q_{\beta}\).
Now let us note that from \eqref{eq: q_beta}, we have 
    \begin{equation}
      \frac{t_\beta}{m_\beta}-1\le q_{\beta}= \lfloor \frac{t_\beta}{m_\beta}\rfloor\le \frac{t_\beta}{m_\beta} 
    \end{equation}
which gives 
 \begin{equation}
   \label{eq: lem3, ub, 2}
 |\widetilde{\mathbf{\Phi}}_{\sigma}^{\mathrm{(com)}}|\leq \sum_{\beta=\lceil\frac{d}{s}\rceil}^{d-1}(q_{\beta}+1)\binom{d}{\beta}\leq \sum_{\beta=\lceil\frac{d}{s}\rceil}^{d-1}\frac{t_\beta}{m_\beta}\cdot\binom{d}{\beta}+\sum_{\beta=\lceil\frac{d}{s}\rceil}^{d-1}\binom{d}{\beta}
\end{equation}
which in turn gives
 \begin{equation}
   \label{eq: lem3, lb, 2}
 |\widetilde{\mathbf{\Phi}}_{\sigma}^{\mathrm{(com)}}|\geq \sum_{\beta=\lceil\frac{d}{s}\rceil}^{d-1}q_{\beta}\binom{d}{\beta}\geq \sum_{\beta=\lceil\frac{d}{s}\rceil}^{d-1}\frac{t_\beta}{m_\beta}\cdot\binom{d}{\beta}-\sum_{\beta=\lceil\frac{d}{s}\rceil}^{d-1}\binom{d}{\beta}.
\end{equation}
Now using \eqref{eq: m_beta}, we get
    \begin{equation}
      \label{eq: lem3, ub, 3}
        \frac{t_\beta}{m_\beta} \cdot\binom{d}{\beta}= \frac{t_\beta}{{f-\beta \choose d-\beta}}\cdot\binom{d}{\beta}=t_\beta\cdot \frac{{f \choose \beta }}{{f\choose d}}  
    \end{equation}
noting also that  
\begin{equation}
\label{eq: lem3, ub, 4}
    \sum_{\beta=\lceil\frac{d}{s}\rceil}^{d-1}\binom{d}{\beta}\leq \sum_{\beta=0}^{d-1}\binom{d}{\beta}=2^d-d.
\end{equation}
At this point, using \eqref{eq: lem3, ub, 3} and \eqref{eq: lem3, ub, 4}, we bound \eqref{eq: lem3, ub, 2} and \eqref{eq: lem3, lb, 2}  as follows
  \begin{equation}
  \label{eq: lem3, ub, 5}
      |\widetilde{\mathbf{\Phi}}_{\sigma}^{\mathrm{(com)}}|\leq  \sum_{\beta=\lceil\frac{d}{s}\rceil}^{d-1}\frac{t_\beta \binom{f}{\beta}}{\binom{f}{d}}+2^d-d,
  \end{equation}
   \begin{equation}
  \label{eq: lem3, lb, 5}
      |\widetilde{\mathbf{\Phi}}_{\sigma}^{\mathrm{(com)}}|\geq  \sum_{\beta=\lceil\frac{d}{s}\rceil}^{d-1}\frac{t_\beta \binom{f}{\beta}}{\binom{f}{d}}-2^d+d.
  \end{equation}
Recalling~\eqref{eq: size of t_beta}, we have 
    \[\sum_{\beta=\lceil\frac{d}{s}\rceil}^{d-1}t_\beta \binom{f}{\beta}=\sum_{\beta=\lceil\frac{d}{s}\rceil}^{d}t_\beta \cdot \binom{f}{\beta} -t_d\cdot {f \choose d}={n \choose d}-t_d\cdot {f \choose d}\]
    which means that
    \begin{equation}
      \label{eq: lem3, ub, 6}
        |\widetilde{\mathbf{\Phi}}_{\sigma}^{\mathrm{(com)}}|\leq\frac{{n \choose d}-{f \choose d}t_d}{{f \choose d} }+2^d-d=\frac{{n \choose d}}{{f \choose d} }-t_d+2^d-d
    \end{equation}
    and that
    \begin{equation}
      \label{eq: lem3, lb, 6}
        |\widetilde{\mathbf{\Phi}}_{\sigma}^{\mathrm{(com)}}|\geq\frac{{n \choose d}-{f \choose d}t_d}{{f \choose d} }-2^d+d=\frac{{n \choose d}}{{f \choose d} }-t_d-2^d+d.
    \end{equation}
    Since \(t_d=s^d=|\widetilde{\mathbf{\Phi}}_{\sigma}^{\mathrm{(full)}}|\), we can lower and upper bound \(|\widetilde{\mathbf{\Phi}}_{\sigma}|\) as follows
     \begin{align}
      \label{eq: lem3, ub, 7}
        \frac{{n \choose d}}{{f \choose d} }-|\widetilde{\mathbf{\Phi}}_{\sigma}^{\mathrm{(full)}}|-2^d+d \le|\widetilde{\mathbf{\Phi}}_{\sigma}^{\mathrm{(com)}}|\leq\frac{{n \choose d}}{{f \choose d} }-|\widetilde{\mathbf{\Phi}}_{\sigma}^{\mathrm{(full)}}|+2^d-d,\\
       \frac{{n \choose d}}{{f \choose d} }-2^d+d \le  |\widetilde{\mathbf{\Phi}}_{\sigma}^{\mathrm{(com)}}|+|\widetilde{\mathbf{\Phi}}_{\sigma}^{\mathrm{(full)}}|\leq\frac{{n \choose d}}{{f \choose d} }+2^d-d,\\
       \frac{{n \choose d}}{{f \choose d} }-2^d+d \le  |\widetilde{\mathbf{\Phi}}_{\sigma}|\leq\frac{{n \choose d}}{{f \choose d} }+2^d-d
    \end{align}
    which concludes the proof.
    
\subsection{Proof of Lemma \ref{lem:lemma_pi_g=0}}
\label{Proofoflem: pi_g=0}
Let us first recall the IC design partition of \(\mathbf{A}_{n,d}\) from~\eqref{eq: A_n,d partition}.
We need to establish \(\pi =\max_{\sigma \in {[f] \choose d}}|\alpha(\widetilde{\mathbf{\Phi}}_\sigma)|\), recalling that  \(\widetilde{\mathbf{\Phi}}_\sigma=\widetilde{\mathbf{\Phi}}_\sigma^{\mathrm{(full)}}\cup \widetilde{\mathbf{\Phi}}_\sigma^{\mathrm{(com)}}\), where \(\widetilde{\mathbf{\Phi}}_\sigma^{\mathrm{(full)}}\cap \widetilde{\mathbf{\Phi}}_\sigma^{\mathrm{(com)}}=\emptyset\). We also recall from \eqref{eqfull} that  \(|\alpha(\widetilde{\mathbf{\Phi}}_\sigma^{\mathrm{(full)}})|=s\cdot d\) for every \(\sigma\in \binom{[f]}{d}\), since \(|\mathcal{F}_i|=s\), for all \(i\in [f]\).
Now, for any fixed \(\sigma\in \binom{[f]}{d}\), consider a \(d\)-tuple \(\mathbf{a}\in \widetilde{\mathbf{\Phi}}_\sigma^{\mathrm{(com)}} \) and let \(\mathcal{B}(\mathbf{a})=\mathcal{I}\) for some \(\mathcal{I}\in \binom{[f]}{\beta}\), where \(\beta\in[\lceil{\frac{d}{s}}\rceil, d-1]\). \textcolor{black}{Then, from \eqref{eq: phi_com}, we know} the support family of any \(d\)-tuple in \(\widetilde{\mathbf{\Phi}}_\sigma^{\mathrm{(com)}}\) is a subset of \(\sigma\). Consequently, we have \(\alpha(\widetilde{\mathbf{\Phi}}_\sigma^{\mathrm{(com)}})\subseteq \alpha(\widetilde{\mathbf{\Phi}}_\sigma^{\mathrm{(full)}})\) for every \(\sigma\in \binom{[f]}{d}\), which yields \(\alpha(\widetilde{\mathbf{\Phi}}_\sigma^{\mathrm{(full)}})=\alpha(\widetilde{\mathbf{\Phi}}_\sigma)\), which in turn means that 
\begin{equation}
\pi = \max\limits_{\sigma\in \binom{[f]}{d}}\alpha(\widetilde{\mathbf{\Phi}}_\sigma) = \max\limits_{\sigma\in \binom{[f]}{d}} \alpha(\widetilde{\mathbf{\Phi}}_\sigma^{(\mathrm{full})})=s\cdot d
\end{equation}
which concludes the proof.

\subsection{Proof of Lemma \ref{lem:lemma_pi_g>0}}
\label{Proofoflem: pi_g>0}
Let us recall the IC design, as described for Case~2 in Section~\ref{subsec: Case2}, and let us recall \(k\) from~\eqref{eq:largest_r} and \(f,g\) from~\eqref{eq:g,f}. 
Let us recall also that \(N' = \binom{f}{d}\), that \(n' = k \cdot s_0\) where \(s_0\) is set as in~\eqref{eq:s_0}, and that $\mathbf{A}_{n',d}$ was partitioned as
\begin{equation}
\mathbf{A}_{n',d} = \bigl\{\, 
\widetilde{\mathbf{\Phi}}_\sigma^{\mathrm{(full)}} 
\cup 
\widetilde{\mathbf{\Phi}}_\sigma^{\mathrm{(com)}}
\ \mid\ 
\sigma \in \tbinom{[f]}{d}
\,\bigr\}.   
\end{equation}
Finally, let us recall \textcolor{black}{from Lemma \ref{lem:lemma_pi_g=0}} that for every \(\sigma \in \binom{[f]}{d}\), it is the case that
\begin{equation}
\bigl|\alpha\bigl(
\widetilde{\mathbf{\Phi}}_\sigma^{\mathrm{(full)}} 
\cup 
\widetilde{\mathbf{\Phi}}_\sigma^{\mathrm{(com)}}
\bigr)\bigr|
= s_0 \cdot d.  
\end{equation}
We now analyze the excluded \(d\)-tuple set
\(
\mathbf{A}_{\mathrm{exc}} = \mathbf{A}_{n,d} \setminus \mathbf{A}_{n',d},
\)
noting that for each \(\sigma \in \binom{[f]}{d}\), we have the decomposition
\(
\widetilde{\mathbf{\Phi}}_\sigma
=
\widetilde{\mathbf{\Phi}}_\sigma^{\mathrm{(full)}}
\;\cup\;
\widetilde{\mathbf{\Phi}}_\sigma^{\mathrm{(com)}}
\;\cup\;
\widetilde{\mathbf{\Phi}}_\sigma^{\mathrm{(exc)}},
\)
with
\[
\bigl(
\widetilde{\mathbf{\Phi}}_\sigma^{\mathrm{(full)}} 
\cup 
\widetilde{\mathbf{\Phi}}_\sigma^{\mathrm{(com)}}
\bigr)
\cap 
\widetilde{\mathbf{\Phi}}_\sigma^{\mathrm{(exc)}}
= \emptyset.
\]
Consider any \(d\)-tuple \(\mathbf{a} \in \widetilde{\mathbf{\Phi}}_\sigma^{\mathrm{(exc)}}\). 
Let \(\mathcal{B}(\mathbf{a}) = \mathcal{I}\) for some 
\(\mathcal{I} \in \binom{[f]}{\beta}\), where 
\(\beta \in [\beta_{\mathrm{min}}, \beta_{\mathrm{max}}]\) 
(as defined in~\eqref{eq:beta_min}–\eqref{eq:beta_max}). 
For the excluded element set \(\mathcal{E}\) defined in~\eqref{eq:D}, we now have
\begin{equation}
\label{eq: a,E le g}
|\mathbf{a} \cap \mathcal{E}| = m \le |\mathcal{E}| = g.
\end{equation}
\textcolor{black}{ Using \eqref{eq: a,E le g} and the fact that \(\mathcal{I} \subset \sigma\) which we know from \eqref{eq: phi_exc},} we can conclude that the elements of any \(d\)-tuple in 
\(\widetilde{\mathbf{\Phi}}_\sigma^{\mathrm{(exc)}}\) lie in
\begin{equation}
\mathcal{F}_{\mathcal{I}} \cup (\mathbf{a} \cap \mathcal{E})
\subseteq \mathcal{F}_\sigma \cup \mathcal{E}    
\end{equation}
which means that \(\alpha\bigl(\widetilde{\mathbf{\Phi}}_\sigma^{\mathrm{(exc)}}\bigr)
\subseteq 
\mathcal{F}_\sigma \cup \mathcal{E}, \ 
\text{for all } 
\sigma \in \binom{[f]}{d}.
\) 
Since
\(
\alpha\bigl(
\widetilde{\mathbf{\Phi}}_\sigma^{\mathrm{(full)}}
\cup
\widetilde{\mathbf{\Phi}}_\sigma^{\mathrm{(com)}}
\bigr)
= \mathcal{F}_\sigma=\bigcup_{i \in \sigma} \mathcal{F}_i,
\)
we have
\begin{equation}
\alpha(\widetilde{\mathbf{\Phi}}_\sigma)
=
\alpha\!\left(
\widetilde{\mathbf{\Phi}}_\sigma^{\mathrm{(full)}}
\cup
\widetilde{\mathbf{\Phi}}_\sigma^{\mathrm{(com)}}
\cup
\widetilde{\mathbf{\Phi}}_\sigma^{\mathrm{(exc)}}
\right)
\subseteq
\alpha\!\left(
\widetilde{\mathbf{\Phi}}_\sigma^{\mathrm{(full)}}\right)\cup \alpha\!\left(
\widetilde{\mathbf{\Phi}}_\sigma^{\mathrm{(com)}}
\cup
\widetilde{\mathbf{\Phi}}_\sigma^{\mathrm{(exc)}}
\right)
\subseteq
\mathcal{F}_\sigma \cup \mathcal{E}   
\end{equation}
and taking the maximum over all \(\sigma\), we obtain
\begin{equation}
\pi
=
\max_{\sigma \in \binom{[f]}{d}}
\alpha(\widetilde{\mathbf{\Phi}}_\sigma)
\le
|\mathcal{F}_\sigma \cup \mathcal{E}|
= s_0 \cdot d + g    
\end{equation}
where the last equality follows from having
\(|\mathcal{F}_i| = s_0\) for all \(i \in [f]\), 
and from the fact that \(\mathcal{F}_\sigma\) contains \(d\) families.

\subsection{Proof of Lemma \ref{lem:ub-lb2} }
\label{proofoflem:ub2}
In Section~\ref{subsec: Case2}, given \(n\), \(d\), and \(N\), we can determine \(g\) and \(f\) from \eqref{eq:g,f}, and hence \(n' = n - g\). For each \(\sigma \in \binom{f}{d}\), recalling that the set \(\widetilde{\mathbf{\Phi}}_{\sigma}\) is partitioned into full-support \(d\)-tuples, complement \(d\)-tuples, and excluded \(d\)-tuples, we have
\begin{equation}
\label{eq:sum}
     |\widetilde{\mathbf{\Phi}}_{\sigma}|=|\widetilde{\mathbf{\Phi}}_\sigma^{\mathrm{(com)}}|+|\widetilde{\mathbf{\Phi}}_\sigma^{\mathrm{(full)}}|+|\widetilde{\mathbf{\Phi}}_\sigma^{\mathrm{(exc)}}|.
\end{equation}
Furthermore, from Lemma \ref{lem:ub-lb}, and given parameter set \(n', d\), and \(N'={f \choose d}\), we get that
\begin{equation}
\label{eq: 2 fisrtterm}
   \frac{{n' \choose d}}{{f \choose d}}-2^d+d \le |\widetilde{\mathbf{\Phi}}_\sigma^{\mathrm{(com)}}|+|\widetilde{\mathbf{\Phi}}_\sigma^{\mathrm{(full)}}|\leq \frac{{n' \choose d}}{{f \choose d}}+2^d-d.
\end{equation}
Recall now that \(\widetilde{\mathbf{\Phi}}_\sigma^{\mathrm{(exc)}} = \bigcup_{\beta=\beta_{\mathrm{min}}}^{\beta_{\mathrm{max}}}\bigcup_{\mathcal{I}\subset \sigma} \mathbf{R}_{\beta, \mathcal I, \sigma}\), and note that the method described in Appendix \ref{appendix: R_beta,I} guarantees that for any \(\beta \in \big[\beta_{\mathrm{min}}, \beta_{\mathrm{max}}]\),
\textcolor{black}{each \(\sigma \in {[f] \choose d}\) has exactly \({d \choose \beta}\) distinct subsets \(\mathcal I\) of cardinality \(\beta\), and that for each such \(\mathcal{I} \subset \sigma\), the cardinality of \(\mathbf{R}_{\beta, \mathcal I, \sigma}\) is either \(q_{\beta}+1\) or \(q_{\beta}\).}
This in turn means that 
\begin{equation}
\label{eq:phi_dis}
    |\widetilde{\mathbf{\Phi}}_{\sigma}^{\mathrm{(exc)}}|\leq \sum_{\beta=\beta_{\mathrm{min}}}^{\beta_{\mathrm{max}}}(q_{\beta}+1)\cdot\binom{d}{\beta}\\
    \leq\sum_{\beta=\beta_{\mathrm{min}}}^{\beta_{\mathrm{max}}}\frac{t_\beta}{m_\beta}\cdot\binom{d}{\beta}+ \sum_{\beta=\beta_{\mathrm{min}}}^{\beta_{\mathrm{max}}}\binom{d}{\beta}
\end{equation}
\begin{equation}
\label{eq:phi_dis2}
    |\widetilde{\mathbf{\Phi}}_{\sigma}^{\mathrm{(exc)}}|\geq \sum_{\beta=\beta_{\mathrm{min}}}^{\beta_{\mathrm{max}}}q_{\beta}\cdot\binom{d}{\beta},\\
    \geq\sum_{\beta=\beta_{\mathrm{min}}}^{\beta_{\mathrm{max}}}\frac{t_\beta}{m_\beta}\cdot\binom{d}{\beta}-\sum_{\beta=\beta_{\mathrm{min}}}^{\beta_{\mathrm{max}}}\binom{d}{\beta}.
\end{equation}
Now going back to \eqref{eq: m_beta}, we have that
\begin{equation}
\label{eq: sum q_beta}
   \frac{t_\beta}{m_\beta} \cdot\binom{d}{\beta}= \frac{t_\beta}{{f-\beta \choose d-\beta}}\cdot\binom{d}{\beta}=t_\beta\cdot \frac{{f \choose \beta }}{{f\choose d}}
\end{equation}
while also noting that  
\begin{equation}
\label{eq: sum d, beta}
    \sum_{\beta=\beta_{\mathrm{min}}}^{\beta_{\mathrm{max}}}\binom{d}{\beta}\leq \sum_{\beta=0}^{d-1}\binom{d}{\beta}=2^d-d.
\end{equation}
By substituting \eqref{eq: sum q_beta} and \eqref{eq: sum d, beta} in \eqref{eq:phi_dis}, we get the following inequalities
\begin{align}
\label{eq:lem5,phi_dis}
    \sum_{\beta=\beta_{\mathrm{min}}}^{\beta_{\mathrm{max}}}\frac{t_\beta \cdot\binom{f}{\beta}}{\binom{f}{d}}-2^d+d\le  |\widetilde{\mathbf{\Phi}}_{\sigma}^{\mathrm{(exc)}}|\leq \sum_{\beta=\beta_{\mathrm{min}}}^{\beta_{\mathrm{max}}}\frac{t_\beta \cdot\binom{f}{\beta}}{\binom{f}{d}}+2^d-d,
   \\
   \label{eq:lem5,phi_dis2}
   \frac{\sum_{\beta=\beta_{\mathrm{min}}}^{\beta_{\mathrm{max}}}|\mathbf{R}_{\beta}|}{N'}-2^d+d\le  |\widetilde{\mathbf{\Phi}}_{\sigma}^{\mathrm{(exc)}}|\leq\frac{\sum_{\beta=\beta_{\mathrm{min}}}^{\beta_{\mathrm{max}}}|\mathbf{R}_{\beta}|}{N'}+2^d-d
\end{align}
where the step from \eqref{eq:lem5,phi_dis} to \eqref{eq:lem5,phi_dis2} follows from \eqref{eq: R_beta, t_beta}. Furthermore, given that  \(\sum_{\beta=\beta_{\mathrm{min}}}^{\beta_{\mathrm{max}}}|\mathbf{R}_{\beta}|=|\mathbf{A_{\mathrm{exc}}}|={n \choose d}-{n' \choose d}\) we have that 
    \begin{equation}
        \label{eq: phi_diss}
        \frac{{n \choose d}-{n' \choose d}}{N'}-2^d+d\le |\widetilde{\mathbf{\Phi}}_{\sigma}^{\mathrm{(exc)}}|\leq \frac{{n \choose d}-{n' \choose d}}{N'}+2^d-d.
    \end{equation}
    We can now bound \eqref{eq:sum}, using \eqref{eq: 2 fisrtterm} and \eqref{eq: phi_diss}, in order to obtain
    \begin{equation}
    \label{eq:eqfinal1}
        |\widetilde{\mathbf{\Phi}}_{\sigma}|\leq \frac{{n-g \choose d}}{N'}+2^d-d+\frac{{n \choose d}-{n-g \choose d}}{N'}+2^d-d
    \end{equation}
    and 
     \begin{equation}
     \label{eq:eqfinal2}
        |\widetilde{\mathbf{\Phi}}_{\sigma}|\geq \frac{{n-g \choose d}}{N'}-2^d+d+\frac{{n \choose d}-{n-g \choose d}}{N'}-2^d+d
    \end{equation}
and since~\eqref{eq:eqfinal1} and \eqref{eq:eqfinal2} hold for every \(\sigma \in \binom{[f]}{d} \), we get  
\begin{equation}
    \label{phi_upper 1}
    \max\limits_{\sigma \in \binom{[f]}{d}} |\widetilde{\mathbf{\Phi}}_{\sigma}| \leq\frac{{n \choose d}}{N'}+2^{d+1}-2d,
\end{equation}
\begin{equation}
    \label{phi_upper 2}
    \min\limits_{\sigma \in \binom{[f]}{d}} |\widetilde{\mathbf{\Phi}}_{\sigma}| \geq\frac{{n \choose d}}{N'}-2^{d+1}+2d
\end{equation}
which concludes the proof.

\subsection{Proof of Lemma \ref{lem:N,N'}} \label{proofoflem: N, N'}

Directly from the definition of \(k\) in \eqref{eq:largest_r} and of \(N'\) in \eqref{eq:N'}, we note that
\begin{equation}\label{eq:k+1_choose_d}
    N<\binom{k+1}{d}, \quad N'=\binom{k}{d}
\end{equation}
which means that 
\begin{align}
    \frac{N}{N'}<\frac{\binom{k+1}{d}}{\binom{k}{d}}=\frac{k+1}{k+1-d}.
\end{align}
Since the function \(f(x) = \frac{x}{x - d}\) is decreasing in the domain \([d+1, \infty)\), we can conclude that 
\begin{equation}
    \max_{k \in \mathcal{K}_{\mathrm{valid}}} \frac{k+1}{k+1-d} \le d+1
\end{equation}
which means that 
\begin{equation}
    \frac{N}{N'} < d+1 \le 2^{d}.
\end{equation}

\subsection{Proof of Lemma \ref{lem: delta pi, A,n,d}}
\label{Proofoflem: delta pi, A,n,d}
Let us consider the design for Case~1, as described in Section~\ref{subsec: Case1}, where \(n = k \cdot s\) for some integer \(s\), with \(k\) defined in~\eqref{eq:largest_r}. Recall that in Lemma \ref{lem:ub-lb}, we have
\[
\max_{\sigma \in {[f] \choose d}}|\widetilde{\mathbf{\Phi}}_\sigma|
\le  \frac{\binom{n}{d}}{N'}+2^d-d.
\]
Recall also that, as discussed in Section~\ref{subsec: N' to N} --- when extending the partition from from \(N'\) to \(N\) groups --- each group is divided into \(q\) or \(p\) parts (see~\eqref{eq: q,p,r} for the definitions of \(q\), \(p\), and \(r\)). 
Now since \(q \le p\), we can conclude that 
\begin{align}
\label{eq: lemma 6,0}
  \max_{b \in [N]}|\widetilde{\mathbf{\Phi}}_b|
&\le\lceil \frac{\frac{\binom{n}{d}}{N'} +2^d -d}{q}\rceil \\
\label{eq: lemma 6,1}
&\le\lceil \frac{\binom{n}{d}}{q\cdot N'}\rceil  +\lceil\frac{2^d}{q} -\frac{d}{q}\rceil
\\
\label{eq: lemma 6,2}
&\le \frac{\binom{n}{d}}{q\cdot N'}+1  +\lceil\frac{2^d}{q}\rceil -\lfloor\frac{d}{q}\rfloor
\\
\label{eq: lemma 6,3}
&\le \frac{\binom{n}{d}}{N-r}+2^d  
\end{align}
where the transition from~\eqref{eq: lemma 6,0} to \eqref{eq: lemma 6,1} follows from the fact that \(\lceil x +y\rceil\le \lceil x\rceil+\lceil y\rceil\), the transition from~\eqref{eq: lemma 6,1} to \eqref{eq: lemma 6,2} follows from the fact that \(\lceil x \rceil\le x+1\) and \(\lceil x-y\rceil \le \lceil x\rceil-\lfloor y \rfloor\), while the transition from \eqref{eq: lemma 6,2} to \eqref{eq: lemma 6,3} follows from the fact that
\(1 \le q = \left\lfloor \frac{N}{N'} \right\rfloor \le d\) (cf.~Lemma~\ref{lem:N,N'}) 
and from the fact that \(N - r = N' \cdot q\). 

We can now proceed to obtain the following bound on \(\delta\)
\begin{align}
   \label{eq: th1: delta}
   \delta= \frac{\max_{b \in [N]}|\widetilde{\mathbf{\Phi}}_b|}{\lceil{n \choose d}/ N\rceil}\le\frac{\max_{b \in [N]}|\widetilde{\mathbf{\Phi}}_b|}{{n \choose d}/ N} &\le \frac{N}{N-r}+\frac{2^d\cdot N}{\binom{n}{d}}\\
   \label{eq: th1: delta2}
   &=\frac{N}{N'\cdot \lfloor\frac{N}{N'}\rfloor}+\frac{2^d\cdot N}{\binom{n}{d}}\\
   \label{eq: th1: delta3}
 &\le 2+\frac{2^d\cdot N}{\binom{n}{d}}
\end{align}
where the transition from~\eqref{eq: th1: delta} to \eqref{eq: th1: delta2} uses \eqref{eq: lemma 6,3} and also uses that
\(N = N' \cdot q + r\) where \(q = \left\lfloor \frac{N}{N'} \right\rfloor\), while the transition from~\eqref{eq: th1: delta2} to \eqref{eq: th1: delta3} follows from the fact that 
\(f(x) = \frac{x}{\lfloor x \rfloor} \le 2\) for \(x=\frac{N}{N'} \ge 1\).
Now, using the fact that \(N\le ( \frac{9}{10}\sqrt{\frac{n}{d}} )^d \le (\frac{n}{d})^{d/2}\) (cf.~Appendix~\ref{Appendix: limit on N}), and by substituting \({n \choose d}\ge (\frac{n}{d})^d\) and \(N\le (\frac{n}{d})^{d/2} \) in \eqref{eq: th1: delta3}, we get
\begin{equation}
\label{eq:th1: delta4}
\delta \; \le\; \frac{2^d N}{\binom{n}{d}}+2
\;\le\; \frac{2^d}{(\frac{n}{d})^{d/2}}+2.
\end{equation}
Let us now define the auxiliary variable
\begin{equation}
\label{eq: aux variable}
    Z \;\triangleq\; \frac{2^d}{(\frac{n}{d})^{d/2}}
\end{equation}
and easily note that if \(d\le \frac{n}{32}\), then \(Z\le 1\). 
Substituting now \(Z\le 1\) in \eqref{eq:th1: delta4}, we directly get 
\begin{equation}
  \delta \;\leq\;Z+2\;<\;3.
\end{equation}

Let us now shift attention to Case~2, as described in Section~\ref{subsec: Case2}, where the parameters \(s_0\) and \(g\) are defined as in \eqref{eq:s_0} and \eqref{eq:g,f} such that \(s_0 \mid n-g\). Also recall that \(n'=n-g\) and \(f=\frac{n'}{s}\), where the latter matches the value \(k\) assigned by \(\eqref{eq:largest_r}\). At this point we can also conclude from Lemma~\ref{lem:ub-lb2}, that
\[
\max\limits_{\sigma \in \binom{[f]}{d}} |\widetilde{\mathbf{\Phi}}_{\sigma}| \leq\frac{{n \choose d}}{N'}+2^{d+1}-2d.\]
Furthermore, after transitioning from \(N'\) to \(N\) groups, we also have
\begin{equation}
\label{eq: th1, maxmin, Sett2}
     \max_{b \in [N]}|\widetilde{\mathbf{\Phi}}_b|\leq \lceil\frac{\frac{{n \choose d}}{N'}+2^{d+1}-2d}{q}\rceil \le \frac{{n \choose d}}{N-r}+2^{d+1}
\end{equation}
and thus we have 
\begin{align}
 \label{eq: th1: Sett2,  delta}
 \delta\le\frac{\max_{b \in [N]}|\widetilde{\mathbf{\Phi}}_b|}{{n \choose d}/N}&\le \frac{N}{N-r}+\frac{2^{d+1}\cdot N}{\binom{n}{d}}\ \\
 \label{eq: th1: Sett2, delta2}
 &= \frac{N}{N'\cdot \lfloor\frac{N}{N'}\rfloor}+\frac{2^{d+1}\cdot N}{\binom{n}{d}}\\
\label{eq: th1: Sett2, delta3}
 &\le 2+\frac{2^{d+1}\cdot N}{\binom{n}{d}}
    \end{align}
where the transition from \eqref{eq: th1: Sett2,  delta} to \eqref{eq: th1: Sett2,  delta2} uses the fact that 
\(N = N' \cdot q + r\) with \(q = \left\lfloor \frac{N}{N'} \right\rfloor\), while the transition from \eqref{eq: th1: Sett2,  delta2} to \eqref{eq: th1: Sett2,  delta3} follows from the fact that
\(f(x) = \frac{x}{\lfloor x \rfloor} \le 2\), for \(x \ge 1\). 
Now note that --- similar to what we saw in \eqref{eq: aux variable} --- under the conditions 
\(N \le( \frac{9}{10}\sqrt{\frac{n}{d}})^d \le (\frac{n}{d})^{d/2}\) and \(d \le \frac{n}{32}\), we have \(Z \le 1\), which thus yields
\begin{equation}
    \label{eq: th1: Sett2, delta5}
    \delta\le 2+2Z\le 4.
\end{equation}

Now, regarding the communication cost $\pi$, first under Case 1, we compute the gap between the achievable \(\pi\) and the optimal communication cost \(\pi^\star\), by applying the binomial approximation on \(N'=\binom{k}{d}=\binom{{n}/{s}}{d}\), to obtain
\begin{equation}
    \left(\frac{n}{s d}\right)^d \leq N'\leq \left(\frac{e n}{s d}\right)^d
\end{equation}
which then gives 
\begin{equation}
    \frac{n}{N'^{\frac{1}{d}}}\leq s\cdot d \leq \frac{e n}{N'^{\frac{1}{d}}}.
\end{equation}
Furthermore, from  Lemma \ref{lem:lemma_pi_g=0}, we have 
\begin{equation}
\label{eq: th1, pi1}
  \pi =s\cdot d \leq \frac{e n}{N'^{\frac{1}{d}}}
\end{equation}
and from Appendix \ref{proofoflem:fundlower} and \eqref{eq: th1, pi1}, we have the gap
\begin{align}
    \frac{\pi}{\pi^\star}\leq \frac{{e n}/{N'^{\frac{1}{d}}}}{{ n}/{N^{\frac{1}{d}}}}\leq e\left(\frac{N}{N'}\right)^{\frac{1}{d}}.
\end{align}
Now using Lemma \ref{lem:N,N'}, we can conclude that
\begin{equation}
   \frac{\pi}{\pi^\star}\leq 2 e.
\end{equation}
Now, let us calculate the same gap for Case 2 (Section~\ref{subsec: Case2}). First, from Lemma \ref{lem:lemma_pi_g>0}, we have the following achievable communication cost
    \begin{equation}
    \label{eq: th1, pi, Sett2,1}
        \pi=  s_0 \cdot d +g\le 2s_0\cdot d
    \end{equation}
and \textcolor{black}{since \(\binom{n}{k}\le (\frac{en}{k})^{k}\)}, by applying the binomial approximation on \(N'={ {n'}/{s_0} \choose d}\), we get 
\begin{equation}
\label{eq: th1, pi, Sett2,2}
    (\frac{n'}{s_0\cdot d})^d \leq N'\leq (\frac{e\cdot n'}{s_0\cdot d})^d
\end{equation}
which yields 
\begin{equation}
\label{eq: th1, pi, Sett2,3}
    \frac{n'}{N'^{\frac{1}{d}}}\leq s_0\cdot d \leq \frac{e\cdot n'}{N'^{\frac{1}{d}}}
\end{equation}
and thus from \eqref{eq: th1, pi, Sett2,1}, we get
\begin{equation}
\label{eq: th1, pi, Sett2,4}
  \pi\leq 2s_0\cdot d \leq \frac{2e\cdot n'}{N'^{\frac{1}{d}}}.  
\end{equation}
Finally, using the result from Appendix~\ref{proofoflem:fundlower}, we obtain 
\begin{align}
\label{eq: th1, Sett2, mult1}
 \frac{\pi}{\pi^\star}&\leq \frac{2e\cdot n'/N'^{\frac{1}{d}}}{n/N^{\frac{1}{d}}}=2e\cdot \frac{n'}{n}\cdot (\frac{N}{N'})^{\frac{1}{d}}\\
  \label{eq: th1, Sett2, mult2}
  &\leq 2e\cdot (\frac{N}{N'})^{\frac{1}{d}}\\
  \label{eq: th1, Sett2, mult3}
  &\leq 4\cdot e 
\end{align}
where the transition from \eqref{eq: th1, Sett2, mult1} to \eqref{eq: th1, Sett2, mult2} follows from the fact that \(n' \le n\), while the transition from \eqref{eq: th1, Sett2, mult2} to \eqref{eq: th1, Sett2, mult3} follows from Lemma~\ref{lem:N,N'}.

\subsection{Proof of Lemma \ref{lem: conc on group size}}
\label{proofoflem: conc on group size}
Let us recall the partition of \(\mathbf{A}_{n,d}\) into \(N\) groups \(\widetilde{\mathbf{\Phi}}_1, \widetilde{\mathbf{\Phi}}_2, \ldots, \widetilde{\mathbf{\Phi}}_N\) seen in Sections \ref{subsec: Case1}, \ref{subsec: Case2}, and \ref{subsec: N' to N}, and let us first define
\begin{equation}
\label{eq: random, m_min, m_max}
m_b\triangleq|\widetilde{\mathbf{\Phi}}_b|, \ b\in[N], \ \ \ \     m_{\min} \triangleq \min_{b\in[N]}m_b, \quad m_{\max} \triangleq \max_{b\in[N]}m_b.
\end{equation}
Let us also recall that the balance parameter (computation cost) first takes the form
\begin{equation}
\label{eq: balance parameter}
 \delta=\delta_{\mathbf{A}_{n,d}} = \frac{m_{\mathrm{max}}}{\lceil{n \choose d}/N\rceil}
\end{equation}
where from Lemma \ref{lem: delta pi, A,n,d}, we know that \(\delta_{\mathbf{A}_{n,d}}\le 4\). Given \(\mathbf{X}\) and its corresponding partition into $\mathbf{\Phi}_1,\dots,\mathbf{\Phi}_N$, where $\mathbf{\Phi}_b=\widetilde{\mathbf{\Phi}}_b\cap \mathbf{X}, \ b\in[N]$, let us upper bound the corresponding \[\delta_{\mathbf{X}} =  \frac{\max_{b\in[N]} \mathbf{\Phi}_b}{\lceil{|\mathbf{X}|}/N\rceil}.\]
To do so, for each \(d\)-tuple $\mathbf{t} \in \widetilde{\mathbf{\Phi}}_b$, let us define $I_\mathbf{t} $ to be the indicator random variable such that $I_\mathbf{t} = 1$ when $\mathbf{t} \in \mathbf{X}$, and $I_\mathbf{t} = 0$ otherwise. Recall that we proceed under the assumption that $\mathbf{X}$ is a result of random thinning from $\mathbf{A}_{n,d}$. Since this random thinning acts independently on each \(d\)-tuple, the cardinality of $\mathbf{\Phi}_b$, takes the form
\begin{equation}
\label{eq: lem 11, 1}
  \mu_b = |\mathbf{\Phi}_b|= \sum_{\mathbf{t}  \in \widetilde{\mathbf{\Phi}}_b} I_\mathbf{t}
\end{equation}
where each $I_\mathbf{t}$ follows a Bernoulli distribution with probability $\varphi$, corresponding to having $\mathbb{E}[I_\mathbf{t}] = \varphi$. This in turn means that 
\begin{equation}
\label{eq: lem 11,2}
   \mathbb{E}[\mu_b] = \mathbb{E}\big[\sum_{ \mathbf{t} \in \widetilde{\mathbf{\Phi}}_b} I_\mathbf{t}\big] = \sum_{\mathbf{t} \in \widetilde{\mathbf{\Phi}}_b} \mathbb{E}[I_\mathbf{t}] = \varphi \cdot |\widetilde{\mathbf{\Phi}}_b| =\varphi \cdot m_b.
\end{equation}
Using Corollary~4.6 in \cite{mitzenmacher2017probability}, for \(0 < \varepsilon < 1\) and \(b \in [N]\), we have
\begin{align}
\Prob\big[|\mu_b-\mathbb{E}(\mu_b)| \ge\varepsilon\mathbb{E}(\mu_b)\big] \le 2\exp\!\big(-\tfrac{\varepsilon^2\mathbb{E}(\mu_b)}{3}\big)
\end{align}
and then, by applying the union bound over all \(N\) groups, we get
\begin{align}
 \Prob\big[\exists b\mid \mu_b\not\in[(1-\varepsilon)\mathbb{E}(\mu_b),(1+\varepsilon)\mathbb{E}(\mu_b)]\big]  \leq 2\ N\cdot\max_{b \in [N]}\big(\exp( - \frac{\epsilon^2\mathbb{E}(\mu_b)}{3} )\big)\\
 \label{eq: lem 11, 5}
  \leq 2\ N\cdot\exp\big( - \frac{\epsilon^2}{3}\cdot\min\limits_{b \in [N]}\mathbb{E}(\mu_b)\big ).
\end{align}
Let us now recall that \(\min\limits_{b \in [N]}\mathbb{E}[ \mu_{b} ]=\varphi\cdot m_{\mathrm{min}}\). and let us choose \(\epsilon = \frac{1}{4}\) and \(\eta = \frac{1}{n}\), and substitute these values into \eqref{eq: lem 11, 5}. By solving the following inequality 
\[
2\ N\cdot\exp( - \frac{\varphi\cdot m_{\mathrm{min}}}{48})\le \eta=\frac{1}{n}
\]
we can conclude that under the condition
\begin{equation}
   \label{eq:mu-condition 4}
\varphi\cdot m_{\mathrm{min}}\;\ge\; 48\big(\log(2N)+\log(n)\big) 
\end{equation}
it is the case that 
\begin{align}
\label{eq: lem 11, 6}
     \Pr\Big[\exists b\mid \mu_b\not\in[\frac{3}{4}\mathbb{E}(\mu_b),\frac{5}{4}\mathbb{E}(\mu_b)]\Big] \leq 2\ N\cdot\exp\left( - \frac{\varphi\cdot m_{\mathrm{min}}}{48} \right)\le \frac{1}{n}
\end{align}
which means that, with probability at least \(1-\frac{1}{n}\), we have 
\begin{align}
    \mu_b = \varphi m_b \pm \frac{1}{4}\varphi m_{b}, \\
    |\mathbf{\Phi}_b|=\varphi  |\widetilde{\mathbf{\Phi}}_b|  \pm \frac{1}{4}\varphi |\widetilde{\mathbf{\Phi}}_b|
\end{align}
which means that 
\begin{equation}
    \max_{b \in [N]} |\mathbf{\Phi}_b|\le \frac{5}{4}\varphi\cdot\max_{b \in [N]} |\widetilde{\mathbf{\Phi}}_b|
\end{equation}
which in turn yields
\begin{equation}
 \delta_{\mathbf{X}}= \frac{\max_{b \in [N]} |\mathbf{\Phi}_b|}{\varphi. \binom{n}{d}}\le \frac{ \frac{5}{4}\varphi\cdot\max_{b \in [N]} |\widetilde{\mathbf{\Phi}}_b|}{\varphi. \binom{n}{d}}=\frac{5}{4} \delta=5
 \end{equation}
which completes the proof.

\subsubsection{Simplifying the condition in~\eqref{eq:mu-condition 4}}
Let us now simplify the condition \eqref{eq:mu-condition 4}. To do so, let us recall that from Lemma \ref{lem:ub-lb2}, we have
\[
\min\limits_{\sigma \in \binom{[f]}{d}} |\widetilde{\mathbf{\Phi}}_{\sigma}| \geq\frac{{n \choose d}}{N'}-2^{d+1}+2d.\]
Furthermore, let us recall that in Section \ref{subsec: N' to N} (where we extend the partition, from \(N'\) to \(N\) groups), each group is divided into \(q\) or \(p\) parts (see~\eqref{eq: q,p,r} for the definitions of \(q\), \(p\), and \(r\)), and since 
\(q \le p\), we have  
\begin{align}
\label{eq: lem,m_min,0}
   m_{\mathrm{min}}= \min\limits_{b\in [N]} |\widetilde{\mathbf{\Phi}}_{b}| \ge \lfloor\frac{\frac{{n \choose d}}{N'}-2^{d+1}+2d}{p}\rfloor\ge \frac{{n \choose d}}{pN'}-\frac{2^{d+1}}{p}+\frac{2d}{p}-1\\
   \label{eq: lem,m_min,1}
   \ge \frac{{n \choose d}}{N+N'}-\frac{2^{d+1}}{p}+\frac{2d}{p}-1\\
   \label{eq: lem,m_min,2}
   \ge \frac{{n \choose d}}{2N}-2^{d+1} 
\end{align}
where the transition from~\eqref{eq: lem,m_min,0}  to \eqref{eq: lem,m_min,1} follows from the facts that a) if \(r=0\), then \(pN'=N\), and b) if \(r\neq0\), then \(p=q+1\) and thus \(N\le pN'=N+N'-r\le N+N'\le 2N\). Furthermore, in the above, the transition~\eqref{eq: lem,m_min,1} to \eqref{eq: lem,m_min,2} follows from the fact that \(1\le p\le d+1\le 2d\).
At this point, let us substitute \eqref{eq: lem,m_min,2} in condition \eqref{eq:mu-condition 4}, and let us also denote the corresponding threshold by 
\begin{equation}\label{eq:varphi_min}
\varphi_{\mathrm{min}}
\triangleq \frac{96\,N\,\log(2Nn)}{\binom{n}{d} - 2^{\,d+2}N}.
\end{equation}
Let us now note that from Appendix~\ref{Appendix: limit on N}, we have that \(N \le (\frac{9}{10}\sqrt{\frac{n}{d}})^d\le(\frac{n}{d})^{d/2} \), and let us note that 
for a fixed \(d \ge 2\) and \(n\ge 32 d\), the denominator of \eqref{eq:varphi_min} is larger than \(\frac{1}{2}(n/d)^{d}\). 
Now, since
\begin{equation}
\label{eq: ph_min, denom,inq}
 \binom{n}{d}-2^{\,d+2}N
\ge \left(\frac{n}{d}\right)^{d}- 2^{\,d+2} (\frac{n}{d})^{d/2}   
\end{equation}
when \(n\ge 32 d\) and \(d\ge 2\), we can conclude that 
\begin{equation}
\label{eq: phi_min, denominator,0}
\left(\frac{n}{d}\right)^{d}- 2^{\,d+2} (\frac{n}{d})^{d/2}= \left(\frac{n}{d}\right)^{d}\big(1-\frac{2^{d+2}}{(\frac{n} {d})^{d/2}}\big)\ge\left(\frac{n}{d}\right)^{d}(1-2^{d+2-2.5d})\ge \frac{1}{2}(n/d)^{d}.
\end{equation}
Substituting \eqref{eq: phi_min, denominator,0} and \(N \le (\frac{n}{d})^{d/2}\) in \eqref{eq:varphi_min} now yields the simpler threshold 
\begin{align}
\label{eq: lem 11, 7}
  \varphi_{\mathrm{min}}=\frac{96\,N\,\log(2Nn)}{\binom{n}{d} - 2^{\,d+2}N}\le \frac{96\, \log(2 (\frac{n}{d})^{d/2} \cdot n)}{\frac{1}{2}(n/d)^{d/2}}\\
  \label{eq: lem 11, 8}
  \le \frac{192\, \log(n^{d/2} \cdot n)}{(n/d)^{d/2}}
  =\frac{192\, d^{d/2} (1+\frac{d}{2})\log(n)}{n^{d/2}}\\
  \label{eq: lem 11, 9}
  \le 192\, (ed)^{d/2} \cdot\frac{\log(n)}{n^{d/2}}
\end{align}
\textcolor{black}{where the transition from \eqref{eq: lem 11, 7} to \eqref{eq: lem 11, 8} follows from the fact that \(2 (\frac{1}{d})^{d/2}\le1\) for \(d\ge 2\), while the step from \eqref{eq: lem 11, 8} to \eqref{eq: lem 11, 9} follows from the fact that \(1+x\le e^x\), for \(x\ge0\).} This means that, for any given \(d \ge 2\), we have the simplified condition  
\begin{equation}
 \varphi_{\mathrm{min}}
\le
C_d\,\frac{\log n}{n^{d/2}}   
\end{equation}

where the value \(C_d=192\, (ed)^{d/2}\) depends only on \(d\). In conclusion, we have \(\delta_{\mathbf{X}}\le 5\) with probability at least \(1-\frac{1}{n}\) under the condition \(\varphi\ge C_d\,\frac{\log n}{n^{d/2}}\).

\section{Various Proofs}
\subsection{Proof of Lower Bound on \(\pi^\star\)}
\label{proofoflem:fundlower}
Let us consider an arbitrary $N$-group partition \(\mathbb{P}=\{\mathbf{\Phi}_1, \mathbf{\Phi}_2, \ldots, \mathbf{\Phi}_N \}\) of $\mathbf{X} \subseteq\mathbf{A}_{n,d}$.
For $\pi_b \triangleq |\alpha(\mathbf{\Phi}_b)|$, it is easy to see that 
\begin{equation}
\label{eq: lem fundlower 1}
    \sum_{b=1}^N \binom{\pi_b}{d} \geq |\mathbf{X}|.
\end{equation}
Furthermore, for $\pi_{\mathrm{max}} \triangleq \max_b \pi_b$, and since $\binom{\pi_{\mathrm{max}}}{d}\ge \binom{\pi_b}{d}$, we can conclude that 
\begin{equation}
\label{eq: lem fundlower 2}
   N\binom{\pi_{\mathrm{max}}}{d} \ge  |\mathbf{X}|=\varphi\cdot {n \choose d}
\end{equation}
which gives
\begin{equation}
    \label{eq: lem fundlower 3}
    \frac{ \pi_{\mathrm{max}}!}{(\pi_{\mathrm{max}}-d)!d!} \geq \frac{\varphi \cdot n!}{N\cdot (n-d)! \cdot d!}
\end{equation}
which --- after expanding the factorials --- yields
\begin{equation}
\label{eq: lem fundlower 4}
\frac{ \pi_{\mathrm{max}}\cdot(\pi_{\mathrm{max}}-1)\cdots (\pi_{\mathrm{max}}-d+1)}{n\cdot (n-1)\cdots (n-d+1)} \geq \frac{\varphi}{N}.
\end{equation}
Furthermore, as \(d \leq \pi_{\mathrm{max}}\leq n\), we have 
\begin{align}
\label{eq: lem fundlower 5}
  \frac{\pi_{\mathrm{max}}}{n}\geq\frac{\pi_{\mathrm{max}}-k}{n-k}, \quad \text {for \(1\leq k \leq d-1\)}
\end{align}
which combines with \eqref{eq: lem fundlower 4}, \eqref{eq: lem fundlower 5}, to give 
\begin{equation}
\frac{(\pi_{\mathrm{max}})^d}{n^d}\geq \frac{\varphi}{N}
\end{equation}
which translates to $
\pi_{\mathrm{max}}\geq \frac{\varphi^\frac{1}{d}\cdot n }{N^{\frac{1}{d}}}$, which gives \begin{equation}
\pi^{\star}\geq \frac{\varphi^\frac{1}{d}\cdot n }{N^{\frac{1}{d}}}.
\end{equation}

\subsection{Deriving the Cardinality of \(\mathbf{C}_{\beta, \mathcal{I}}\)}
\label{proofoflem:count}
We here derive the cardinality of $\mathbf{C}_{\beta,\mathcal{I}} = \left\{ \mathbf{a} \in \mathbf{C}_\beta \;\middle|\; \mathcal{B}(\mathbf{a})=\mathcal{I}\right\}$ from \eqref{eq:C_beta, I}, showing that \( |\mathbf{C}_{\beta,\mathcal{I}} | \) remains fixed for all \( \mathcal{I} \in {[f] \choose \beta} \). 

First, from Section~\ref{subsec: Case1}, we recall that we have \(f\) disjoint families \(\mathcal{F}_i\subset [n]\), \(i \in [f]\), each having cardinality \(s\).
We also recall from~\eqref{eq:C_beta} that for each \(\beta \in [\lceil\frac{d}{s}\rceil, d]\), the set \(\mathbf{C}_\beta\) is the collection of \(d\)-tuples that span exactly \(\beta\) families.

Let us now remember the well known result (cf.~\cite{stanley2011enumerative}) that the coefficient of term \(x^k\) of the generating function 
\begin{equation}
\label{eq: lem2, gen, 1}
    G_i(x)=(1 + x)^s
\end{equation}
is equal to the number of ways one can select \(k\) elements (\(k\)-tuples) from $\mathcal{F}_i$, where \(|\mathcal{F}_i|=s\).

To calculate \(|\mathbf{C}_\beta|\), we count all selections of \(d\)-tuples that \textcolor{black}{intersect} exactly \(\beta\) families. First, we choose which \(\beta\) families are used, giving us \(\binom{f}{\beta}\) choices. For the family \(\mathcal{F}_i\), the generating function for choosing any \(d\)-tuples of that family is given in \eqref{eq: lem2, gen, 1}. 
We remove the empty choice from \eqref{eq: lem2, gen, 1} to ensure that we pick at least one element from \(\mathcal{F}_i\), and thus we replace \((1+x)^s\) by \((1+x)^s - 1\), which means that the number of ways to choose exactly \(d\) elements (\(d\)-tuples) from those \(\beta\) families is the coefficient of \(x^d\) in \(\bigl((1+x)^s-1\bigr)^\beta\). Call this coefficient $z_{\beta,d}$, and see that   
\begin{equation}
\label{eq: lem2, gen, 3}
|\mathbf{C}_\beta| \;=\; \binom{f}{\beta}z_{\beta,d}.
\end{equation}
Let \(A=(1+x)^s\), and recall from the binomial theorem that
\begin{align}
\bigl((1+x)^s-1\bigr)^\beta
&= (A-1)^\beta = \sum_{i=0}^{\beta} \binom{\beta}{i}A^{i}(-1)^{\beta-i}
\end{align}
which --- after substituting for \(A\) --- gives
\begin{align}
\label{eq: lem2, gen, 4}
\bigl((1+x)^s-1\bigr)^\beta
&=\sum_{i=0}^{\beta} (-1)^{\beta-i}\binom{\beta}{i}(1+x)^{s \cdot i}.
\end{align}
Note now that combining~\eqref{eq: lem2, gen, 3} and \eqref{eq: lem2, gen, 4}, for every \(\lceil \frac{d}{s}\rceil \le \beta \le d\), gives
\begin{equation}
\label{eq: size of C_beta}
    |\mathbf{C}_\beta| \;=\; \binom{f}{\beta}\sum_{i=0}^{\beta}(-1)^{\beta-i}\binom{\beta}{i}\binom{s \cdot i}{d}.  
\end{equation}
Now let us recall that the different sets \(\mathbf{C}_\beta\) are disjoint and cover the entire \(\mathbf{A}_{n,d}\), which thus means that
\begin{equation}
\label{eq: Identity A_n,d and C_beta}
  \binom{n}{d} = \sum_{\beta=\lceil\frac{d}{s}\rceil}^{d}\binom{f}{\beta}\sum_{i=0}^{\beta}(-1)^{\beta-i}\binom{\beta}{i}\binom{s \cdot i}{d}.  
\end{equation}
Let us now recall that for a fixed \(\mathcal{I} \in {[f] \choose \beta}\), the set \( \mathbf{C}_{\beta,\mathcal{I}} \) contains all \( d \)-tuples in \(\mathbf{C}_\beta\) supported by all \( \beta \) families in \( \mathcal{I} \). Since though the families are disjoint and correspond to the same \( s \), their cardinality \( |\mathbf{C}_{\beta,\mathcal{I}} | \) is the same for all \( \mathcal{I} \in {[f] \choose \beta} \). Denoting  \( t_\beta\triangleq|\mathbf{C}_{\beta,\mathcal{I}} | \), we have
\begin{align}
    \label{eq: C_beta, c_beta, I}
    |\mathbf{C}_\beta|=\sum_{\mathcal{I} \in {[f] \choose \beta}}|\mathbf{C}_{\beta, \mathcal I}|={f \choose \beta}\cdot t_\beta
    \end{align}
    which gives 
    \begin{align}
     \label{eq: size of t_beta}
    t_\beta&=|\mathbf{C}_\beta|/{f \choose \beta} 
    =\sum_{i=0}^{\beta}(-1)^{\beta-i}\binom{\beta}{i}\binom{s\cdot i}{d}. 
\end{align}

\subsection{Explicit Description of Set \(\mathbf{C}_{\beta, \mathcal{I}, \sigma}\)}
\label{appendix: C_beta,I}
We fix a group \(\sigma \in {[f] \choose d}\). For each \(\beta \in [\lceil\frac{d}{s}\rceil, d-1]\), the group \(\sigma\) can generate exactly
\(
\binom{d}{\beta}
\)
distinct subsets \(\mathcal{I}\in {[f] \choose \beta}\) of cardinality \(|\mathcal I|=\beta\).
For each such $\mathcal I$, let us consider the set \(\mathbf{C}_{\beta,\mathcal I}\) to be the set of all \(t_\beta\) distinct \(d\)-tuples from \(\mathbf{C}_\beta\) which intersect with \(\mathcal{I}\). For  a \(d\)-tuple \(\mathbf{a}\), we say that \(\mathbf{a}\) is \emph{eligible} for assignment to a group \(\widetilde{\mathbf{\Phi}}_\sigma\) if \(\mathcal{B}(\mathbf{a})\subset \sigma\), recalling also that for each \(\mathbf{a}\in \mathbf{C}_\beta\), there is one \(\mathcal I \in {[f] \choose \beta}\) such that \(\mathcal{I}= \mathcal{B}(\mathbf{a})\). Let us define the set of eligible groups for each \(\mathcal I \in {[f] \choose \beta}\) as
\begin{equation}
\label{eq: g_beta,I}
    \mathcal{G}_{\beta, \mathcal{I}}\triangleq \{\sigma \in {[f] \choose d}| \ \mathcal{I}\subseteq \sigma\}
    \end{equation}
and let us also denote the number of eligible groups as
\begin{equation}
\label{eq: m_beta}
  m_\beta \;\triangleq\; |\mathcal{G}_{\beta, \mathcal{I}}|=\binom{f-\beta}{d-\beta}.  
\end{equation}  
Now recall that \(t_\beta=|\mathbf{C}_{\beta, \mathcal I}|\) (Appendix \ref{appendix: size of t_beta}), and let us denote
\begin{equation}
 \label{eq: q_beta}
    q_\beta \;\triangleq\; \Big\lfloor\frac{t_\beta}{m_\beta}\Big\rfloor
\end{equation}
and 
\begin{equation}
 \label{eq: r_beta}
    r_\beta \;\triangleq\; t_\beta - q_\beta \cdot m_\beta, \qquad 0\le r_\beta < m_\beta.
\end{equation}
Let us now consider \(\mathbf{C}_{\beta,\mathcal I}\) in lexicographic order and let us denote the ordered list by
\[
\mathbf{C}^{\mathrm{lex}}_{\beta,\mathcal I}
\;\triangleq\;
\big( \mathbf{c}_{\beta,\mathcal I}^{(1)}, \mathbf{c}_{\beta,\mathcal I}^{(2)}, \dots, \mathbf{c}_{\beta,\mathcal I}^{(t_\beta)}\big)
\]
where \(\mathbf{c}_{\beta,\mathcal I}^{(\ell)}\) describes the \(\ell\)-th \(d\)-tuple in this lexicographic ordering.
Subsequently, let us fix the lexicographic ordering of the \(m_\beta\) groups in \(\mathcal{G}_{\beta,\mathcal I}\), and let us denote this with 
\[
\mathcal{G}_{\beta,\mathcal I}^{\mathrm{lex}}
\;=\;
\big(\sigma_1,\sigma_2,\dots,\sigma_{m_\beta}\big).
\]
We now partition the \(\mathbf{C}^{\mathrm{lex}}_{\beta,\mathcal I}\) among the \(m_\beta\) groups \(\sigma_1,\dots,\sigma_{m_\beta}\).  
Since \(q_\beta=\lfloor \frac{t_\beta}{m_\beta}\rfloor\), each group receives either \(q_\beta\) or \(q_\beta+1\) \(d\)-tuples, which will mean that, in particular, \(r_\beta\) groups receive \(q_\beta+1\) \(d\)-tuples and the remaining \(m_\beta-r_\beta\) groups receive \(q_\beta\) \(d\)-tuples.
Let us now define the index of \(\sigma\) (given a specific \(\mathcal I \subset \sigma\)) by
\[
J_{\beta,\mathcal I}(\sigma)
\;\triangleq\; \{ j \ \  \text{s.t. }\ \sigma=\sigma_j\text{ for some }j\in [m_\beta], \sigma_j \in \mathcal{G}_{\beta,\mathcal I}^{\mathrm{lex}} \}.
\]  
For \(\mathcal I \subset \sigma\) with \(J_{\beta,\mathcal I}(\sigma)=j\in\{1,\dots,m_\beta\}\), we now define the starting and ending indices of the block allocated to \(\sigma\) by
\begin{align}
s_{j,\beta,\mathcal I} &\;=\; (j-1)q_\beta + \min(j-1,r_\beta) + 1,\label{eq:start 1}\\[4pt]
e_{j,\beta,\mathcal I} &\;=\; j q_\beta + \min(j,r_\beta)\label{eq:end 2}
\end{align}
which gives us the final set 
\begin{equation}\label{eq:allocated-set,C}
\quad
\mathbf{C}_{\beta,\mathcal I,\sigma}
\;=\;
\big\{\, \mathbf{c}_{\beta,\mathcal I}^{(\ell)} \;\mid\; \ell = s_{j,\beta,\mathcal I},\; s_{j,\beta,\mathcal I}+1,\; \dots,\; e_{j,\beta,\mathcal I}
\,\big\}
\quad
\end{equation}
of allocated \(d\)-tuples to group \(\sigma\) for this \(\mathcal I\).

\subsection{Deriving the Cardinality of \(\mathbf{R}_{\beta, \mathcal{I}}\)}
\label{appendix: size of t_beta}
As we recall from Section~\ref{subsec: Case2} (right above Eq.~\eqref{eq: phi_exc}), for any \(\mathcal{I}\in\binom{[f]}{\beta}\), \(\mathbf{R}_{\beta,\mathcal{I}}\) is the collection of all excluded \(d\)-tuples that are supported by exactly the \(\beta\) families in \(\mathcal{I}\).
Since the families are disjoint and all have size \(s_0\), for all distinct \(\mathcal{I},\mathcal{I}'\in {[f] \choose \beta} \), we have
\[
\mathbf{R}_{\beta,\mathcal{I}} \cap \mathbf{R}_{\beta,\mathcal{I}'}=\emptyset
\]
and thus we have 
\begin{equation}
    \label{eq: R_beta, t_beta}
    \mathbf{R}_{\beta} \;=\; \bigcup_{\mathcal{I}\in\binom{[f]}{\beta}}\mathbf{R}_{\beta,\mathcal{I}}
\qquad\Longrightarrow\qquad
\big|\mathbf{R}_{\beta}\big| \;=\; \binom{f}{\beta}\cdot\,|\mathbf{R}_{\beta,\mathcal{I}}|.
\end{equation}
Also, the cardinality \(|\mathbf{R}_{\beta,\mathcal{I}}|\) is the same for any choice of \(\mathcal{I}\) ; we denote this cardinality by \(t_{\beta}\) and note that
\begin{equation}
    t_{\beta} \;\triangleq\; |\mathbf{R}_{\beta,\mathcal{I}}|,\quad\text{for each }\mathcal{I}\in\binom{[f]}{\beta}.
\end{equation}
We also know that 
        \[
        |\mathbf{R}_{\beta}|=\sum_{m=1}^{min\{d-\beta, g\}}|\mathbf{R}_{m,\beta}|.
        \]
        We then use the format of $\mathbf{R}_{m,\beta}$ in \eqref{eq:R_m,beta}, as well as the exclusion and inclusion principle and the generating function approach (cf.~\cite{stanley2011enumerative}), to get 
        \[
        |\mathbf{R}_{\beta}|=\sum_{m=1}^{min\{d-\beta, g\}}{g \choose m}{f \choose \beta}\sum_{i=0}^{\beta}(-1)^i\binom{\beta}{i}\binom{i\cdot s_0}{d-m}
        \]
which --- combined with \eqref{eq: R_beta, t_beta} --- gives
\[|\mathbf{R}_{\beta,\mathcal{I}}|=\sum_{m=1}^{\min\{d-\beta, g\}}{g \choose m}\sum_{i=0}^{\beta}(-1)^i\binom{\beta}{i}\binom{i\cdot s_0}{d-m}\] which concludes the proof.

\subsection{Constructing \(\mathbf{R}_{\beta, \mathcal I, \sigma}\)}
\label{appendix: R_beta,I}
We here follow the path of the design of $\mathbf{C}_{\beta,\mathcal I,\sigma}$ from Appendix~\ref{appendix: C_beta,I}, and we again form sets indexed by \(\beta\) and \(\mathcal{I}\); however, in this case \(\beta\) ranges over \([\beta_{\mathrm{min}}, \beta_{\mathrm{max}}]\), and we simply use \(\mathbf{R}_{\beta,\mathcal I}\) instead of \(\mathbf{C}_{\beta,\mathcal I}\). This same approach yields the allocated set $\mathbf{R}_{\beta,\mathcal I,\sigma}$ of \(d\)-tuples to group \(\sigma\) corresponding to this \(\mathcal I\), which takes the form
\begin{equation}\label{eq:allocated-set,R}
\mathbf{R}_{\beta,\mathcal I,\sigma}
=
\Big\{
  \mathbf{r}_{\beta,\mathcal I}^{(\ell)}
  \mid 
  \ell = s_{j,\beta,\mathcal I},\,
        s_{j,\beta,\mathcal I}+1,\,
        \dots,\,
        e_{j,\beta,\mathcal I}
\Big\}.
\end{equation}

\subsection{Bounding the Range of \(N\)}

\label{Appendix: limit on N}

Recall from~\eqref{eq:largest_r} that 
$k = \max \left\{ r \in \mathbb{Z}^{+} \,\mid\, \binom{r}{d} \leq N \right\}$, and let us consider the case where $ k\nmid n$. Also recall from \eqref{eq:s_0} that \( s_0 = \left\lfloor \frac{n}{k + d} \right\rfloor + 1 \), and let us define
\begin{equation} \label{Kvalid1}
        \mathcal{K}_{\mathrm{valid}} \triangleq \left\{k \in [d,n] \subset \mathbb{Z} \ \mid  \ \left\lfloor\frac{n}{k + d}\right\rfloor + 1 \leq \left\lfloor\frac{n}{k} \right \rfloor   \ \right\}.
\end{equation}
We now claim that if \(k \in \mathcal{K}_{\mathrm{valid}}\), then for some \( 0 \leq g \leq s_0 \cdot d \), the number of files \( n \) can be written as
\begin{equation}
\label{eq:DDQR formula}
    n=k\cdot s_0+g.
\end{equation}
To prove the claim, suppose \( k \in \mathcal{K}_{\mathrm{valid}} \), which would mean that
\begin{equation}
   \label{eq: appendix, kvalid,1} 
   s_0 \leq \left\lfloor \frac{n}{k} \right\rfloor
\end{equation}
which also means that 
\begin{equation}
\label{eq: appendix, kvalid,2}
  g = n - s_0 \cdot k \geq n - \left\lfloor \frac{n}{k} \right\rfloor \cdot k \geq 0
\end{equation}
and thus we have 
\begin{equation}
\label{eq: appendix, kvalid,3}
    g = n - s_0 \cdot k 
= n - \left( \left\lfloor \frac{n}{k + d} \right\rfloor + 1 \right) \cdot k 
\leq n - \left\lceil \frac{n}{k + d} \right\rceil \cdot k 
\leq n - \frac{n}{k + d} \cdot k 
= \frac{n \cdot d}{k + d}.
\end{equation}

Since \( \frac{n}{k + d} < \left\lfloor \frac{n}{k + d} \right\rfloor + 1 = s_0 \), we can now conclude that
\begin{equation}
    \label{eq: appendix, kvalid,4}
    g \leq \frac{n \cdot d}{k + d} \leq s_0 \cdot d.
\end{equation}
Let us now translate this to a constraint on $N$. To ensure that the condition \eqref{eq: appendix, kvalid,1} holds --- and accounting for the fact that when two real values differ by at least one, their respective floor values must also differ by at least one --- it is now sufficient to guarantee that
\begin{equation}
\label{eq: appendix, kvalid,5}
 \frac{n}{k} - \frac{n}{k + d} \geq 1
\end{equation}
i.e., to guarantee that
\begin{equation}
\label{eq: appendix, kvalid,6}
   \frac{n d}{k(k + d)} \geq 1
\end{equation}
which in turn yields the quadratic inequality condition
\begin{equation}
\label{eq: appendix, kvalid,7}
    k^2 + dk - dn \leq 0.
\end{equation}
Solving this, yields
\begin{equation}
\label{eq: appendix, kvalid,8}
k \leq \frac{-d + \sqrt{d^2 + 4dn}}{2} =: k_{\mathrm{max}}
\end{equation}
and we can now conclude that 
\begin{equation}
  N \leq N_{\mathrm{max}}\triangleq {k_{\mathrm{max}}  \choose d}.
\end{equation}
\textcolor{black}{It is not difficult to see that for \(n\ge 32 d\), we have \( \frac{9}{10}\sqrt{dn}\le k_{\mathrm{max}} \leq \sqrt{dn}\),} which, after applying Sterling's property, yields   
\begin{equation}
   N_{\mathrm{max}} \ge\left( \frac{9}{10}\sqrt{\frac{n}{d}} \right)^d.
\end{equation}
We use \(\left( \frac{9}{10}\sqrt{\frac{n}{d}} \right)^d\) throughout the paper to bound the range of acceptable $N$.

\end{appendices}

\printbibliography

\end{document}